\documentclass[sigconf,screen]{acmart} 

\AtBeginDocument{%
  \providecommand\BibTeX{{%
    \normalfont B\kern-0.5em{\scshape i\kern-0.25em b}\kern-0.8em\TeX}}}

\setcopyright{acmcopyright}
\copyrightyear{2018}
\acmYear{2018}
\acmDOI{10.1145/1122445.1122456}

\acmConference[Woodstock '18]{Woodstock '18: ACM Symposium on Neural
  Gaze Detection}{June 03--05, 2018}{Woodstock, NY}
\acmBooktitle{Woodstock '18: ACM Symposium on Neural Gaze Detection,
  June 03--05, 2018, Woodstock, NY}
\acmPrice{15.00}
\acmISBN{978-1-4503-XXXX-X/18/06}




\settopmatter{printacmref=false} 
\renewcommand\footnotetextcopyrightpermission[1]{} 



\usepackage[utf8]{inputenc}
\usepackage{xspace}
\usepackage{amsmath}
\usepackage{amsthm}
\usepackage[algo2e,ruled,vlined,linesnumbered]{algorithm2e}
\usepackage{url}

\usepackage{todonotes}


\newtheorem*{example*}{Example}

\newcommand{\comp}{\textsc{comp}\xspace}

\newcommand{\neighs}{\textsc{neighbors}\xspace}

\newcommand{\nph}{\textsc{np-h}ard\xspace}

\newcommand{\ccomp}{\texttt{cc}\xspace}

\newcommand{\scap}{\ensuremath{\tilde{\cap}}\xspace}


\newcommand{\ocost}{\mathcal{O}_T}
\newcommand{\ccost}{\mathcal{C}_t}

\newcommand{\sol}{\ensuremath{\mathcal{S}}\xspace}
\newcommand{\rec}{\textsc{enum}\xspace}
\newcommand{\dv}{\ensuremath{\dot v}\xspace}
\newcommand{\de}{\ensuremath{\dot e}\xspace}

\newcommand{\nsol}{\ensuremath{\mathcal{N}}} 

\newcommand{\iack}{\ensuremath{\alpha}\xspace} 

\newcommand{\restrtime}{\mathcal{R}_T}
\newcommand{\restrbound}{\mathcal{R}_N}


\pagestyle{plain}

\begin{document}

\title{Proximity Search For Maximal Subgraph Enumeration}

\author{Alessio Conte}
\affiliation{%
  \institution{University of Pisa, Italy}
  \streetaddress{}
  \city{} 
  \country{}
  \postcode{}
}
\email{conte@di.unipi.it}

\author{Roberto Grossi}
\affiliation{%
  \institution{University of Pisa, Italy}
}
\email{grossi@di.unipi.it}

\author{Andrea Marino}
\affiliation{%
  \institution{University of Florence, Italy}
}
\email{andrea.marino@unifi.it}

\author{Takeaki Uno}
\affiliation{%
  \institution{National Institute of Informatics, Japan}
  \streetaddress{}
  \city{} 
  \country{}
  \postcode{}
}
\email{uno@nii.ac.jp}

\author{Luca Versari}
\affiliation{%
  \institution{University of Pisa, Italy and Google Research}
}
\email{veluca@google.com}

\begin{abstract}
This paper proposes a new general technique for maximal subgraph enumeration which we call \textit{proximity search}, whose aim is to design efficient enumeration algorithms for problems that could not be solved by existing frameworks.
To support this claim and illustrate the technique we include output-polynomial algorithms for several problems for which output-polynomial algorithms were not known, including the enumeration of Maximal Bipartite Subgraphs, Maximal $k$-Degenerate Subgraphs (for bounded $k$), Maximal Induced Chordal Subgraphs, and Maximal Induced Trees.
Using known techniques, such as reverse search, the space of all maximal solutions induces an implicit directed graph called ``solution graph'' or ``supergraph'', and solutions are enumerated by traversing it; however, nodes in this graph can have exponential out-degree, thus requiring exponential time to be spent on each solution.
The novelty of proximity search is a formalization that allows us to define a better solution graph, and a technique, which we call \textit{canonical reconstruction}, by which we can exploit the properties of given problems to build such graphs.
This results in solution graphs whose nodes have significantly smaller (i.e., polynomial) out-degree with respect to existing approaches, but that remain strongly connected, so that all solutions can be enumerated in polynomial delay by a traversal.
A drawback of this approach is the space required to keep track of visited solutions, which can be exponential:
we further propose a technique to induce a parent-child relationship among solutions and achieve polynomial space when suitable conditions are met.
\footnote{A preliminary version of this paper has been presented in~\cite{Conte2019proximity}, containing some of the exponential-space algorithms. In this extended version, we factorize the fundamental principles of algorithms in~\cite{Conte2019proximity} to define a general guide for designing a proximity search algorithm, which we call \textit{canonical reconstruction}. We also introduce a general technique to obtain proximity search algorithms that use polynomial space, as well as proximity search algorithms for new problems, both with exponential-space and polynomial-space requirements. A draft of this extended version is also available at~\cite{conte2019proximityArxiv}.}

\end{abstract}

\begin{CCSXML}
<ccs2012>
<concept>
<concept_id>10002950.10003624.10003633.10003641</concept_id>
<concept_desc>Mathematics of computing~Graph enumeration</concept_desc>
<concept_significance>500</concept_significance>
</concept>
<concept>
<concept_id>10002950.10003624.10003633.10010917</concept_id>
<concept_desc>Mathematics of computing~Graph algorithms</concept_desc>
<concept_significance>500</concept_significance>
</concept>
</ccs2012>
\end{CCSXML}

\ccsdesc[500]{Mathematics of computing~Graph enumeration}
\ccsdesc[500]{Mathematics of computing~Graph algorithms}

\keywords{graph enumeration, polynomial delay, proximity search}

\maketitle

\section{Introduction}

Given a universe of elements, such as the vertices or edges of a graph, and a property on them, such as being a clique or a tree, a listing problem asks to return all subsets of the universe which satisfy the given property. 

Listing structures, within graphs or other types of data, is a basic problem in computer science, and it is at the core of data analysis. While many problems can be solved by optimization approaches for the best solution, e.g., by finding the shortest path, or the largest clique, others require finding several solutions to the input problem: in community detection, for example, finding just one ``best'' community only gives us local information regarding some part of the data, so we may want to find several communities to make sense of the input.
Furthermore, many real-world scenarios may not have a clear objective function for the best solution: We may define an algorithm to optimize some desired property, but the optimal solution found may be lacking further properties that emerge during listing or simply not be practical. 
We may want instead to quickly list several solutions, suitable according to some metrics, then analyze them a posteriori to find the desired one.

In these scenarios, listing only the solutions that are \textit{maximal} under inclusion is a common-sense requirement whenever it can be applied,\footnote{In other problems, we may want \textit{minimal} solutions instead, although this is usually an equivalent concept, as it corresponds to the complement of a solution being maximal.} as maximal solutions subsume the information contained in all others, and can be exponentially fewer: For example, a graph may have up to $2^n$ cliques, but only $3^{n/3}$ maximal ones~\cite{moon1965cliques}. For brevity, we call \textit{maximal listing problem} a listing problem where only the inclusion-maximal solutions should be output.

From a theoretical point of view, listing provides many challenging problems, especially when maximality is required. When dealing with listing algorithms, we are often interested in their complexity with respect to both $n$, the input size, and $\nsol$, the size of the output. Algorithms whose complexity can be bounded by a polynomial of these two factors are called \textit{output polynomial} or \textit{polynomial total time}~\cite{JOHNSON1988119}.
Interestingly, the hardness of listing problems does not seem to be correlated with that of optimization: there are several \nph maximum optimization problems whose corresponding maximal listing problem admits an output-polynomial solution (see, e.g.,~\cite{tsukiyama1977new,avis1996reverse}); on the other hand, there are problems for which one maximal (or maximum) solution can be identified in polynomial time, but an output-polynomial algorithm for listing maximal solutions would imply \textsc{p=np}~\cite{lawler1980generating}.

A long-standing question in the area is to find a characterization of which listing problems allow for output-polynomial solutions and which do not. 
Furthermore, within output-polynomial algorithms stricter complexity classes exist, such as \textit{incremental polynomial time}, where the time to output the $i$-th solution is polynomial in $n$ and $i$, and \textit{polynomial delay}, where the time elapsed for outputting the next solution is upper bounded by a polynomial in $n$.
The latter class is of particular interest in practical scenarios, as it guarantees that solutions are output at a regular pace.

In this paper we add a few points to the latter class, by showing that there exist polynomial delay algorithms for some subgraph listing problems. More formally, we prove Theorem~\ref{thm:main}.


\begin{theorem}\label{thm:main}
The following problems allow polynomial delay listing algorithms by proximity search:


\begin{center}
\begin{small}
\begin{tabular}{cc}
\textsc{problem} & \textsc{delay}\\
maximal induced bipartite sg. & $O(n( m + n\iack(n)))$ \\
maximal connected induced bipartite sg. & $O(mn)$\\
maximal bipartite edge-induced sg. & $O(m^3)$ \smallskip\\
maximal induced $k$-degenerate sg. & $O(mn^{k+2})$\\
maximal edge-induced $k$-degenerate sg. & $O(m^3n^{k-1})$ \smallskip\\
maximal induced chordal sg. & $O(m^2n)$ \\
maximal connected induced chordal sg. & $O(m^2n)$ \\
maximal edge-induced chordal sg. & $O(m^4n)$ \smallskip\\
%
%
%
%
%
%
maximal induced proper interval sg. & $O(m^2n^3)$ \\
maximal connected induced proper interval sg. & $O(mn^3)$\smallskip\\
%
%
maximal connected obstacle-free convex hulls  & $O(n^4)$ \smallskip\\
maximal induced trees & $O(m^2)$ \smallskip\\
maximal connected induced directed acyclic sg. & $O(mn^2)$ \\
maximal connected edge-induced directed acyclic sg. & $O(m^3)$\\
    \end{tabular}

\noindent
Where ``sg.'' stands for subgraphs, $n$ and $m$ are the number of vertices and edges, $\iack(\cdot)$ is the functional inverse of the Ackermann function~\cite{Tarjan75UF}. 

All the algorithms use $O(\nsol n)$ space, where $\nsol$ is the number of solutions.\\
\end{small}
\end{center}
\end{theorem}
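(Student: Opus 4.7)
The plan is to derive Theorem~\ref{thm:main} in two stages: first establish proximity search as a general-purpose enumeration framework, and then instantiate it for each of the listed subgraph classes via the canonical reconstruction recipe advertised in the introduction. In particular, Theorem~\ref{thm:main} is the aggregate of thirteen independent instantiations, all sharing a common scaffolding that needs to be set up once.

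For the framework, I would formalize a \emph{proximity function} between a maximal solution $S$ and a target maximal solution $S^\star$, measuring how long the canonical completion procedure of the target problem agrees with $S^\star$ when started from the intersection of $S$ with a distinguished element. This induces a directed \emph{solution graph} whose arcs are produced by taking small modifications of $S$ (roughly, seed a vertex or edge and re-complete canonically), and whose defining property must be: for every $S \ne S^\star$, some neighbor $S'$ of $S$ has strictly larger proximity to $S^\star$. This \emph{proximity lemma} implies strong connectivity of the solution graph reachable from any starting maximal solution; since the out-degree is polynomial (bounded by the number of seeds we try) and each neighbor is computable in polynomial time, a DFS/BFS visit delivers polynomial delay, yielding the delays in the table. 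Visited solutions are stored in a trie keyed by their elements, giving the uniform $O(\nsol n)$ space bound.

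For each row in the table, I would then supply three problem-specific ingredients: (i) a canonical completion turning any feasible subgraph into a maximal one, typically a greedy scan in a fixed vertex/edge order that tests membership in the class; (ii) a list of polynomially many \emph{seeds}, each representing a local modification (addition of a vertex or edge, together with the minimal removal needed to stay feasible), so that re-completion yields a neighbor of $S$; and (iii) the proximity lemma for that class. The delays in the table then follow by multiplying the number of candidate seeds by the per-step cost of the canonical completion and of the class-membership test: union-find for bipartiteness (hence the $\iack(n)$ factor), $k$-core peeling for $k$-degeneracy, perfect elimination or consecutive-ones checks for chordal and proper interval, and acyclicity/connectivity tests for trees and DAGs.

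The main obstacle will be the proximity lemma for each class: one must design the canonical completion so that, for any target $S^\star$, at least one of the polynomially many seeds produces a neighbor whose canonical completion agrees with $S^\star$ strictly longer than $S$ does. This requires exploiting the hereditary structure of the class (closure under induced subgraphs, or under a characteristic elimination order) to argue that seeding the first element in which the completion from $S$ diverges from $S^\star$ cannot be undone later by the greedy. Once this lemma is in place, the remaining analysis is uniform. For the polynomial-space refinement indicated in the introduction (when available), I would break ties in the proximity by a lexicographic rule on solutions and define the \emph{parent} of $S$ as the unique seed producing the lex-smallest proximity-improving neighbor, so the solution graph becomes a tree that can be traversed in polynomial space by the usual reverse-search machinery.
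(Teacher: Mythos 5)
Your skeleton matches the paper's: a generic solution-graph traversal driven by a proximity function plus, for each row, a canonical order, a polynomial family of (extender, removal) pairs, and re-completion by \comp, with visited solutions stored so that the space is $O(\nsol n)$. The genuine gap is in the key definition. You define proximity as how long the \emph{canonical completion procedure} agrees with $S^\star$, and consequently your stated main obstacle is to argue that seeding the divergence element ``cannot be undone later by the greedy''. That requirement is both unnecessary and, in general, not provable: \comp is an arbitrary maximalization, and nothing forces its later greedy choices to track $S^\star$. The paper instead fixes, for each solution $S^\star$, a canonical order of \emph{its own} elements (BFS order for bipartite subgraphs and trees, reverse perfect elimination order for chordal, reverse degeneracy order for $k$-degenerate, the unit-interval order for proper interval, a source/sink-alternating topological order for DAGs) whose prefixes are themselves (connected, where required) solutions, and defines $S\scap S^\star$ as the longest prefix of that order contained, \emph{as a set}, in $S$. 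With this definition proximity is monotone under adding elements, so the completion step can never ``undo'' anything; the only problem-specific work is to exhibit polynomially many removables such that at least one is disjoint from the (unknown) prefix, which follows from the structural fact that the earlier neighbors of the canonical extender within $S^\star$ form a restricted configuration (one side of a bipartition, a clique, at most $k$ vertices, a single BFS parent, only in- or only out-neighbors). Relatedly, ``the minimal removal needed to stay feasible'' is not the right object: one needs a \emph{covering family} of candidate removals (two for bipartite, the maximal cliques of $G[N(v)\cap S]$ for chordal, all subsets of $N(v)\cap S$ of size at most $k$ for $k$-degenerate, one per potential BFS parent for trees), precisely because the target prefix is not known to the algorithm.

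Two smaller mismatches: the obstacle-free convex hull row is the one case the paper does not handle by canonical reconstruction (there the proximity is plain set intersection and the neighbors come from the ``shadows'' cast by obstacles), so a uniform canonical-order treatment would need a separate argument for it; and a plain DFS/BFS of the solution graph only yields polynomial \emph{amortized} time per solution, so the stated delays additionally require the alternated pre-/post-order output trick used in the paper.
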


To the best of our knowledge, no output-polynomial result was previously known for these problems. For completeness, we consider both \textit{induced} subgraphs (i.e., sets of vertices) and \textit{edge-induced} subgraphs (i.e., sets of edges), as well as the \textit{connected} case where solutions are required to be connected, as the structure of such variants can differ significantly.

Furthermore, we abstract a general technique that can be used to obtain similar results on other problems. 
We do so by defining a graph whose vertices are the maximal solutions to the listing problem, and with directed edges between pairs of solutions, which we call \textit{solution graph}. The listing problem is solved by traversing the solution graph, and proving that all solutions are found this way.
The concept of solution graph is common to existing approaches, and general techniques already exist for building them, e.g.,~\cite{Cohen20081147}. However, the solution graph built with known approaches such as~\cite{Cohen20081147} may have too many edges, resulting in a traversal with exponential delay.

The key concept given in this paper is a technique to build a solution graph with fewer edges, while proving that all solutions are still found by its traversal.
An interesting property of this approach is that the resulting algorithms are remarkably simple to implement, while the complexity lies in proving their correctness.
We call this technique \textit{proximity search} since at its core lies a problem-specific notion of proximity. This notion acts as a sort of compass on the solution graph built by our algorithm, as given any two solutions $S$ and $S^*$, we will show that we always traverse an edge from $S$ to another solution $S'$ that has higher proximity to $S^*$; as $S^*$ has the highest proximity to itself, this implies that a traversal of the solution graph from any solution finds all others.
While others, such as~\cite{Cohen20081147,schwikowski2002enumerating}, already used the principle of reachability in the solution graph, we aim to define a looser set of necessary condition in order to guarantee this reachability, allowing more freedom in the design of algorithms, while at the same time formalize a technique called \textit{canonical reconstruction} that is effective in decomposing the structure of several problems to fit these rules.
The combination of these two parts creates algorithms that overcome the exponential burden imposed by the so-called \textit{input-restricted problem}, a reduced instance of the original problem that dominates the cost per solution of such approaches whose cost may be inherently exponential.

While the space required for a traversal of the graph is inherently proportional to the number of solutions, i.e., can be exponential in $n$, some output-polynomial techniques such as reverse search are able to work in polynomial space by inducing a tree-like structure on the solution graph, provided that the problem at hand is hereditary (i.e.~its property holds for the induced subgraphs) and the input-restricted problem is solvable efficiently.

By adding suitable constraints to the problems considered, we show a technique that combines proximity search with a recent generalization of reverse search to non-hereditary problems~\cite{conte2019framework}, obtaining algorithms with both polynomial-delay and polynomial space for some instances of proximity search. In particular, we prove that:

\begin{theorem}\label{thm:pspace}
The following problems allow polynomial delay listing and polynomial space algorithms by proximity search, with the following bounds:

\smallskip

\begin{small}
\begin{center}
\begin{tabular}{ccc}
\textsc{problem} & \textsc{delay} & \textsc{space} \\
maximal induced bipartite sg. & $O(n^2( m + n\iack(n)))$ & $O(m)$ \\
maximal connected induced bipartite sg. & $O(mn^2)$ & $O(m)$ \smallskip\\
%
maximal obstacle-free convex hulls  & $O(n^4)$  & $O(n)$ \smallskip\\
maximal induced trees & $O(m^2n^2)$  & $O(m)$\\
maximal induced forests & $O(m^2n^2)$  & $O(m)$ \smallskip\\
%

    \end{tabular}
\end{center}
\noindent
Where notation is as in Theorem~\ref{thm:main}.
\end{small}
\end{theorem}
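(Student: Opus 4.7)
The plan is to upgrade the proximity-search solution graph of Theorem~\ref{thm:main} with a \emph{parent function}: for each non-root maximal solution $S$ we single out one of its proximity-search in-neighbors $\textit{parent}(S)$, turning the strongly connected solution graph into a spanning out-tree rooted at a canonical solution. The enumeration then proceeds as a DFS of this tree — at each visited $S$ we regenerate the proximity-search successors produced by canonical reconstruction and recurse into a successor $S'$ only when the test $\textit{parent}(S')=S$ succeeds. Since this test replaces the hash table of already-output solutions that Theorem~\ref{thm:main} relies on, the memory shrinks from $O(\nsol \cdot n)$ to the space of the recursion stack, which each problem can represent in $O(m)$ (or $O(n)$ for the geometric case).

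To make this rigorous I would first invoke the non-hereditary reverse-search framework of~\cite{conte2019framework}, which isolates three abstract requirements on the parent function: (i) $\textit{parent}(S)$ is computable from $S$ alone in polynomial time, (ii) iterated application of $\textit{parent}$ terminates at a unique root, and (iii) every maximal $S^{*}$ has at least one proximity-search successor $S'$ with $\textit{parent}(S')=S^{*}$. Condition (iii) is where proximity search has to mesh with reverse search. The natural candidate is to set $\textit{parent}(S)$ to be the solution obtained by a single canonical-reconstruction step of $S$ toward a fixed root (for example, the lexicographically smallest maximal solution containing a designated vertex), and to prove acyclicity by exhibiting a monotone potential — typically the length of the longest canonical-prefix match with the root — that strictly increases along parent edges. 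Once such a potential is produced, condition (ii) is automatic and condition (i) reduces to showing that the first reconstruction step is locally computable, without reference to any search history.

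Problem by problem, the overhead over Theorem~\ref{thm:main} should arise from one extra sweep per visited solution, in which we iterate over all proximity successors and perform the parent test on each. For the two \textit{bipartite} variants the parent step is a single swap that extends the agreement of the two-coloring of $S$ with that of the root, which is why the delay pays exactly one extra factor of $n$ compared with Theorem~\ref{thm:main}. For \textit{induced trees} and \textit{induced forests} the parent is obtained by a canonical edge operation aligning $S$ with a fixed spanning tree of the input; the quadratic blow-up reflects both the number of candidate children and the recomputation cost of the canonical prefix. For \textit{maximal obstacle-free convex hulls}, the proximity operator already produces a unique successor per target direction, so the parent function costs essentially nothing extra, which is why the delay matches that of Theorem~\ref{thm:main} exactly.

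The main obstacle I expect is to design, for each specific problem, a parent rule that is simultaneously \emph{monotone} (to guarantee that the parent relation is acyclic and converges to one root), \emph{locally computable} from $S$ without access to any global state, and \emph{compatible} with the proximity successors used in the Theorem~\ref{thm:main} algorithms — that is, the canonical successor of $S$ must in turn have $S$ as its parent. The connected variants add a second layer of difficulty, since the canonical reconstruction must be restricted so that every intermediate prefix remains connected, ruling out some otherwise convenient parent choices; this is where the geometric and graph-theoretic specifics of each problem have to be argued individually. Once these properties are established, the delay and space bounds of Theorem~\ref{thm:pspace} follow by composing the per-step cost of proximity search with the standard DFS accounting of~\cite{conte2019framework}.
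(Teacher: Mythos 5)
Your high-level plan---pair the proximity-search successors with a reverse-search-style parent function in the framework of~\cite{conte2019framework}---is the same route the paper takes, but the specific parent rule you propose does not work, and the ``compatibility'' issue you flag in your last paragraph is exactly the step that fails rather than a technicality to be checked problem by problem. You define $\mathrm{parent}(S)$ as an \emph{out-neighbor} of $S$ in the proximity solution graph (one canonical-reconstruction step of $S$ toward a fixed root), with acyclicity certified by the potential $|S \scap S_0|$. Acyclicity is indeed easy this way, but your DFS generates the children of a visited node $P$ as the solutions $S'\in\neighs(P)$ with $\mathrm{parent}(S')=P$; for this to reach every solution, every non-root $S'$ must be an out-neighbor of its own parent, i.e.\ the edge $S'\to\mathrm{parent}(S')$ you used to define the parent must also exist in the reverse direction $\mathrm{parent}(S')\to S'$. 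The proximity solution graph is not symmetric: it has polynomial out-degree by construction, but nothing bounds its in-degree or makes in-neighbors computable, and a proximity successor $S'$ of $S$ need not have $S$ among its own successors, let alone as its designated step toward the root. So with your parent rule the set $\{S':\mathrm{parent}(S')=P\}$ is a subset of the in-neighbors of $P$, which you have no polynomial-time way to enumerate from $P$; your condition (iii) (``every maximal $S^*$ has at least one successor $S'$ with $\mathrm{parent}(S')=S^*$'') is neither the requirement of~\cite{conte2019framework} nor sufficient for the traversal to visit all solutions. The single fixed root is also at odds with the framework, which naturally yields a forest of roots.

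The paper resolves this by \emph{not} defining the parent as a proximity move. It requires the canonical order used in canonical reconstruction to be a prefix-closed order family (the BFS-based canonical order of Section~\ref{sec:bfsproximity}), restricts $\comp(\cdot)$ to the lexicographic completion, defines $\mathrm{core}(S)$ as the longest prefix of $S$ whose completion differs from $S$, and sets $\mathrm{parent}(S)=\comp(\mathrm{core}(S))$ with parent index $w$ the element following the core. The key lemma is that $P=\mathrm{parent}(S)$ satisfies $P\scap S\supseteq \mathrm{core}(S)$, so $w$ is the canonical extender of $(P,S)$; hence some $R\in\neighs(P,w)$ contains $\mathrm{core}(S)\cup\{w\}$, and the core property of~\cite{conte2019framework} recovers $S$ as the completion of a prefix of $R$, with a deterministic certificate check on $\langle \mathrm{parent},\mathrm{parind},\mathrm{seed},R\rangle$ to kill duplicates. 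This is why the paper must verify prefix-closedness of the BFS orders for bipartite subgraphs, trees and forests, and why the delays in Theorem~\ref{thm:pspace} follow from the parameterized bound of Theorem~\ref{thm:prox-pspace-bound} rather than from ``one extra factor of $n$''. (Also, for maximal obstacle-free convex hulls the paper does not use canonical reconstruction at all: there the neighboring function already solves the input-restricted problem, so the classical parent--child scheme gives the $O(n)$-space bound directly.)
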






\subsection{Related Work}


The listing problems considered in this paper model solutions as sets of elements (e.g., sets of vertices or edges of a graph), and consist in listing sets of elements with some required property, e.g., inducing a bipartite subgraph, or a tree.
We observe that the output is a family of sets, we can associate properties with the corresponding set systems: for example, a property is hereditary when each subset of a solution is a solution, which corresponds to the well-known independence systems~\cite{lawler1980generating}.


In this context, a simple yet powerful technique is recursively partitioning the search space into all solutions containing a certain element, and all that do not. 
This technique, usually called binary partition or simply backtracking, proves efficient when listing all solutions ~\cite{Ruskey03combinatorialgeneration}, and can be used to design algorithms that are fast in practice,\footnote{E.g., implementations of the Bron-Kerbosh~\cite{tomita2006worst} algorithm tend to be faster than those of output-polynomial algorithms~\cite{DBLP:conf/icalp/ConteGMV16} for listing maximal cliques.} or that can bound the number of solutions in the worst-case~\cite{Fomin:2008:CBV:1435375.1435384}.
%
%
On the other hand, this strategy rarely gives output-polynomial algorithms when dealing with maximal solutions, as we may spend time exploring a solution subspace that contains many solutions but no maximal one.


To obtain output-polynomial algorithms for maximal solutions, many algorithms rely on the following idea: given a maximal solution $S$, and some element $x\not\in S$, the hardness of listing solutions \textit{maximal within} $S\cup \{x\}$ is linked to the hardness of listing them in a general instance. One of the earliest mentions of the idea can be found in the seminal paper by Lawler et al.~\cite{lawler1980generating}, that generalizes ideas from Paull et al.~\cite{paull1959minimizing} and Tsukiyama et al.~\cite{tsukiyama1977new}, and has been formally defined as \textit{input-restricted problem} by Cohen et al.~\cite{Cohen20081147}.

The intuition is that the solutions obtained this way, using a maximal solution $S$ and an element not in $S$, can be used to generate new maximal solutions of the original problem. We can thus traverse an implicit directed graph, which we will call \textit{solution graph}, where the vertices are the maximal solutions and the out-neighbors are obtained by means of the input-restricted problem.

In particular,~\cite{lawler1980generating} showed how solving this problem could yield an output-polynomial and polynomial space listing algorithm for properties corresponding to \textit{independence systems}, assuming the input-restricted problem has a bounded number of solutions. ~\cite{Cohen20081147} showed that the strategy could be extended to the more challenging \textit{connected-hereditary} graph properties (i.e., where \textit{connected} subsets of solutions are solutions) using exponential space, and recently,~\cite{conte2019framework} showed that the same result can be obtained in polynomial space for \textit{commutable} set systems (which include connected-hereditary properties).

A clear limitation of this approach is that, in order to obtain polynomial-delay algorithms, the input-restricted problem needs to be solved in polynomial time. This is possible for some problems (e.g., cliques and independent sets), but impossible for others, simply because their input-restricted problems may have exponentially many solutions. Figure~\ref{fig:bip:restr} shows an example for maximal bipartite subgraphs.

\begin{figure*}
    \centering
    \includegraphics[width=.9\textwidth]{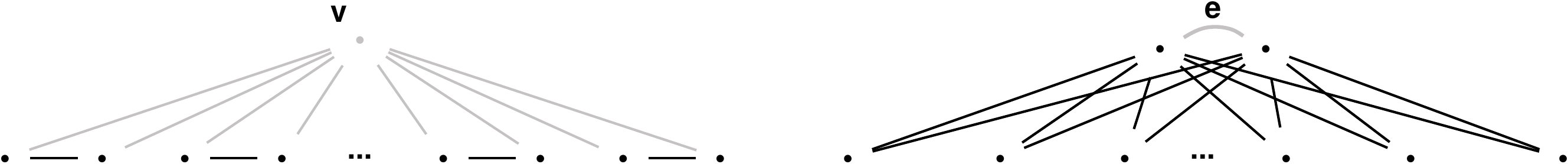}
    \caption{Instances of input-restricted problem for maximal bipartite subgraphs. 
    On the left: the black dots define a maximal bipartite induced subgraph; adding the vertex $v$ creates a graph with exponentially many maximal induced bipartite subgraphs, as we can obtain one by removing a vertex from each connected pair in the bottom in any combination. 
    On the right: the black edges define a maximal bipartite subgraph, and the addition of edge $e$ creates a graph with exponentially many maximal edge-induced bipartite subgraphs: every vertex on the bottom is incident to two edges; removing exactly one for each vertex yields a maximal edge-induced bipartite subgraph.}
    \label{fig:bip:restr}
\end{figure*}

The literature contains many more results concerning the enumeration of maximal/minimal solutions, e.g.,~\cite{avis1996reverse,koch1996algorithm,schwikowski2002enumerating,Golovach2018,GELY20091447,CARMELI2020}, and in particular regarding challenging problems such as the well-known minimal hypergraph transversals/dominating sets problem~\cite{kante2014enumeration,golovach2015incremental,elbassioni2009output}. 
However, to the best of our knowledge, the only two effective general techniques for listing maximal solutions in an output-sensitive fashion are the extension problem (binary partition, flashlight search), and the input-restricted problem: proximity search can be a valuable tool when the previous two fail.
We motivate this by showing the first polynomial delay algorithms for several maximal listing problems whose associated input-restricted problem is not solvable in polynomial time.

As mentioned above, a preliminary version of this paper containing some of the exponential-space algorithms has appeared in~\cite{Conte2019proximity}. Since its publication, some preprints~\cite{brosse2020efficient,kurita2020efficient,Cao:arXiv:2020} have appeared that apply the technique to obtain new output-polynomial algorithms. 
In particular,~\cite{Cao:arXiv:2020} solves the enumeration of Maximal Induced Interval Subgraphs by proposing some variations to proximity search~\cite{Conte2019proximity}.

\subsection{Overview}
The main contribution of the paper is presenting \textit{proximity search}, a general technique that can be used to solve several enumeration problems in polynomial delay, and \textit{canonical reconstruction}, a way to design a proximity search algorithm by exploiting orderings of solutions of the problem at hand.
exponentially many solutions.


By using this technique we show polynomial delay algorithms for several maximal listing problems such as maximal bipartite subgraphs and the others mentioned in Theorem~\ref{thm:main}. Other than providing efficient algorithms, we remark that the technique may help gain further insight on which classes of problems allow output-polynomial listing algorithms and which do not.

The paper is organized as follows:
First, we introduce some basic concepts and notation in Section~\ref{sec:prelim}. 
%
%
We then explain the proximity search technique, and formally define a class of problems, called \textit{proximity searchable}, which allow for a polynomial delay algorithm by its application.

Generality comes sometimes at the expense of efficiency but allows for a more intuitive understanding of the concepts at hand. For this reason, we divide the explanation in two parts: the first one, in Section~\ref{sec:outline}, formalizes the constraints required for a proximity search algorithm. The second, Section~\ref{sec:reconstruction}, introduces a technique which we call \textit{canonical reconstruction} for implementing proximity search. While canonical reconstruction is not the only way to obtain a proximity search algorithm, we observed that is often a powerful and elegant way to model the problem at hand.

Following, Sections~\ref{sec:bip}-\ref{sec:list:last}, shows how to prove that the problems in Theorem~\ref{thm:main} are proximity searchable and thus allow polynomial-delay algorithms.

As a drawback of the above algorithms is an exponential space requirement, we then propose a technique to address this issue, when suitable conditions are met: define a parent-child relation between solutions, in the style of reverse-search, as detailed in Section~\ref{sec:pspace-expl}, and give the algorithms in Section~\ref{sec:pspace-algs}. The resulting bounds are shown in Theorem~\ref{thm:pspace}. While this technique does not apply to all problems in Theorem~\ref{thm:main}, when it does it allows us to obtain polynomial-delay and polynomial-space algorithms for several problems whose input-restricted problem cannot be solved in polynomial time, including non-hereditary ones.

\section{Preliminaries}\label{sec:prelim}
Most of the enumeration problems addressed in this paper consider a simple undirected graph $G$, whose vertex set is denoted as $V(G)$ and edge set as $E(G)$, or simply $G=(V,E)$ when it is clear from the context. The neighborhood of a vertex $v$ is denoted as $N(v)$. For brevity, we refer to $|V(G)|$ as the number $n$ of vertices, to $|E(G)|$ as the number $m$ of edges, and to the maximum degree of a vertex in $G$ as $\Delta = \max_{v \in V} |N(v)|$. Furthermore, we assume the vertices to be labeled arbitrarily in increasing order $v_1, \ldots, v_n$, and say that $v_i$ is 
smaller than $v_j$ if $i<j$. 
We say that a neighbor of $v_i$ is a \textit{forward} neighbor if it comes later than $v_i$ in the order, and a \textit{backward} neighbor otherwise.
%
%

For a set of vertices $A\subseteq V(G)$, $E[A]$ denotes the edges of $G$ whose endpoints are both in $A$, and $G[A]$ the graph $(A, E[A])$, i.e., the subgraph induced in $G$ by $A$. Similarly, for a set $B$ of edges, $V[B]$ denotes the vertices incident to an edge in $B$ and $G[B] = (V[B], B)$. As common in the literature, we call \textit{induced subgraphs} those of the former kind, defined by a set of vertices, and \textit{edge-induced subgraphs} (or simply subgraphs) those of the latter, defined by a set of edges.
When dealing with subgraphs defined by a set of vertices (resp. edges) $A$, we will sometimes use $A$ to refer to both the vertex set (resp. edge set) and the subgraph $G[A]$ it induces, when this causes no ambiguity. We will also use $\ccomp_v(A)$ to refer to the \textit{connected component} of $G[A]$ which includes the vertex $v$. For further notation, we refer to the standard terminology in~\cite{Diestel2005}.



For a set of vertices $A\subseteq V(G)$ which corresponds to a solution of the problem at hand, we say that $A$ is \emph{maximal} if there is no $A' \subseteq V(G)$ such that $A' \supset A$  and $A'$ is also a solution. 
While not strictly necessary for the proximity search technique, in the following we will often rely on a simple ``maximalization'' function, named $\comp(A)$: this function takes a (not necessarily maximal) solution $A$ and ``completes'' it, returning some maximal solution $A'\supseteq A$. We will refer to the computational cost of this function as $\ccost$. 
Note that it is always possible to devise a polynomial-time computable $\comp(\cdot)$ function for hereditary and connected-hereditary properties where solutions can be recognized in polynomial time, by simply trying to add vertices until no longer possible~\cite{Cohen20081147}.

%

For simplicity, we disregard the presence of isolated vertices in the complexity analysis of the algorithms provided: these are trivially handled for the problems considered in this paper (either they can all be added ``in bulk'' to every solution, as for bipartite subgraphs, or each constitutes a maximal solution by itself, as for connected bipartite subgraphs), can be removed with an $O(n)$ time preprocessing; this means we are able to perform operations like a visit of the graph in $O(m)$ time rather than $O(m+n)$ time.

\section{Proximity search outline}\label{sec:outline}

Proximity search is based on traversing an implicit solution graph, where the vertices are all the solutions to be listed and each directed arc goes from a solution to another using a neighboring function. Several solution graphs are possible, depending on how the neighboring function is defined. Apart from the fact that the resulting solution graphs are not necessarily strongly connected and some care should be taken to list all the solutions, the main hurdle is that the degree of the solution graphs can be exponential (as the number of solutions can be exponentially large in the input size), thus preventing to achieve polynomial delay when running a simple traversal. Proximity search circumvents these issues by designing a suitable neighboring function, denoted $\neighs(\cdot)$, that guarantees that the resulting solution graph it implicitly defines is strongly connected and of polynomial degree. Both these properties cannot be guaranteed
with the current state of the art for a number of problems discussed later. 

We devote this section to formalize the general structure of proximity search, and the class of problems to which the technique can be applied. Also, we introduce the notion of proximity, symbolized by $\scap$, to act as a sort of oracle for navigating the solution graph.

For reference in what we discuss next, we give the pseudo-code of the generic traversal of a solution graph based on the $\neighs(\cdot)$ function, as shown in Algorithm~\ref{alg:general}. As noted earlier, the algorithms obtained by specializing this generic traversal are remarkably simple: In a depth-first search traversal where the set \sol keeps track of just the last visited solution, we only need to implement the $\neighs(\cdot)$ function. 
On the other hand, the complexity is mostly hidden behind proving their completeness: Notably, the very notion of proximity $\scap$ is only used in the proofs, and never actually appears in Algorithm~\ref{alg:general}.

\begin{algorithm2e}[ht]
\DontPrintSemicolon
\SetKwInOut{Input}{input}
\SetKwInOut{Output}{output}
\SetKwInOut{Global}{global}

\Input{Graph $G = (V,E)$ and listing problem $\mathcal{P}$}
\Output{All (maximal) solutions of $\mathcal{P}$ in $G$}
\Global{Set $\sol$ of solutions found, initially empty}

\BlankLine

$S \gets$ an arbitrary solution of $\mathcal{P}$ \label{ln:g:first}\tcc{ (e.g. $\comp(\emptyset)$)}
Call $\rec(S)$\;

\BlankLine


\SetKwProg{myproc}{Function}{}{}

\myproc{$\rec(S)$}{
    
    Add $S$ to $\sol$\label{ln:g:add}\;
    
    \tcc{\small Output $S$ if recursion depth is even}
    
    
    \ForEach{$S'\in \neighs(S)$\label{ln:g:neighs}}{
    
    \lIf{$S' \not \in\sol$}{$\rec(S')$\label{ln:g:rec}}
    }
    
    \tcc{\small Output $S$ if recursion depth is odd}
}
\caption{Traversal of the solution graph by proximity search.}\label{alg:general}
\end{algorithm2e}

In order to start the algorithm, we need one arbitrary maximal solution $S$. We remark that identifying one maximal (not maximum) solution is typically trivial, and can be achieved for example
by running $\comp(\emptyset)$
when the $\comp(\cdot)$ function is computable in polynomial time.

We formally define the class of problems which allow for a polynomial delay algorithm using this structure as \textit{proximity searchable}.

\begin{definition}[Proximity searchable]\label{def:searchable}
Let $\mathcal{P}$ be a listing problem over a universe $\mathcal{U}$ with set of solutions $\mathcal{S}\subseteq 2^\mathcal{U}$, where each solution is a subset of the universe. $\mathcal{P}$ is \textit{proximity searchable} if there exists a proximity function $\scap : \mathcal{S}\times \mathcal{S}\rightarrow 2^\mathcal{U}$ and a neighboring function $\neighs(\cdot) : \mathcal{S} \rightarrow 2^{\mathcal{S}}$,
such that the following holds: 

\begin{enumerate}
    \item \label{item:prox1} One solution of $\mathcal{P}$ can be identified in time polynomial in $|\mathcal{U}|$.
    \item \label{item:prox2} $\neighs(\cdot)$ is computable in time polynomial in $|\mathcal{U}|$.
    \item \label{item:prox3} Given any two distinct solutions $S,S^*\in \mathcal{S}$, there exists $S' \in \neighs(S)$ such that $|S'\scap S^*| > |S\scap S^*|$.
\end{enumerate}
The above conditions imply the following one, which is reported for the sake of clarity.
\begin{enumerate}
    \setcounter{enumi}{3}
    \item \label{item:prox4} For any fixed $S^*$, $|S\scap S^*|$ is maximized for (and only for) $S=S^*$.
\end{enumerate}

\end{definition}

If a problem is proximity searchable, then it is straightforward to see that we obtain a polynomial delay algorithm for it by using the corresponding $\neighs(\cdot)$ function in Algorithm~\ref{alg:general}. Let us formally prove it.

\begin{theorem}
\label{thm:strong}
All proximity searchable listing problems have a polynomial delay listing algorithm.
\end{theorem}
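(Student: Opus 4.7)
The plan is to show two things about Algorithm~\ref{alg:general} instantiated with the $\neighs(\cdot)$ guaranteed by Definition~\ref{def:searchable}: first, that every solution in $\mathcal{S}$ is eventually added to $\sol$ (completeness), and second, that the time between two consecutive outputs is polynomial in $|\mathcal{U}|$ (polynomial delay).

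For completeness I would argue by contradiction. Suppose some $S^* \in \mathcal{S}$ is never placed in $\sol$. Observe that by the DFS discipline on lines~\ref{ln:g:add}--\ref{ln:g:rec}, every time the recursive call on a solution $S$ returns, each $S' \in \neighs(S)$ has been inserted in $\sol$: either it was already there, or the recursive call at line~\ref{ln:g:rec} inserted it at line~\ref{ln:g:add}. Consequently, the set $V \subseteq \mathcal{S}$ of visited solutions at the end of the traversal is closed under $\neighs(\cdot)$, and it contains the initial solution $S_0$ produced at line~\ref{ln:g:first} (which exists by condition~\ref{item:prox1}). Starting from $S_0$, I iteratively apply condition~\ref{item:prox3}: as long as $S_i \neq S^*$, there is a neighbor $S_{i+1} \in \neighs(S_i)$ with $|S_{i+1} \scap S^*| > |S_i \scap S^*|$. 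Since $|S \scap S^*|$ is a non-negative integer bounded by $|\mathcal{U}|$, this sequence reaches $S^*$ within at most $|\mathcal{U}|$ steps; but closure of $V$ under $\neighs(\cdot)$ forces every $S_i$ to lie in $V$, so $S^* \in V$, a contradiction.

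For the delay bound I would invoke the standard alternating-output technique encoded in Algorithm~\ref{alg:general}: a solution $S$ is printed at the start of the call on $S$ when its recursion depth is even, and at the end of the call when its depth is odd. With this convention, between two consecutive outputs the DFS traverses a constant number of tree edges (descending into a new child that prints on entry, or returning from an odd-depth call that prints on exit, and their symmetric variants), so the work separating consecutive outputs is dominated by a constant number of invocations of $\neighs(\cdot)$ and membership tests $S' \in \sol$. By condition~\ref{item:prox2} each invocation is polynomial in $|\mathcal{U}|$, and storing $\sol$ in a balanced search tree or hash table indexed by solutions makes each lookup polynomial, giving polynomial delay overall. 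One also has to handle the boundary cases, namely the delay before the very first output and the delay after the last one, both of which are bounded by a single descent/ascent of the DFS tree and therefore again by a polynomial amount of work.

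The main obstacle I expect is not the completeness argument, which follows cleanly from conditions~\ref{item:prox1}--\ref{item:prox3}, but rather making the alternating-output analysis airtight: I need to verify carefully that every maximal stretch of backtracking without printing is indeed followed within a bounded number of steps by an output, so that no "silent" computation can be amortised onto a solution that never gets printed. Once that bookkeeping is done, putting both parts together yields a polynomial-delay enumeration algorithm for every proximity searchable problem.
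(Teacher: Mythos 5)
Your proposal is correct and follows essentially the same route as the paper: the completeness argument (a proximity-increasing chain of neighbors guaranteed by condition~\ref{item:prox3}, bounded by $|\mathcal{U}|$, combined with closure of the visited set) is just a rephrasing of the paper's strong-connectivity argument, and the delay bound relies on the same alternative-output (pre/post-order by recursion depth) technique with polynomial-time $\neighs(\cdot)$ and membership tests. The only cosmetic differences are that you argue by contradiction rather than directly, and you never need to invoke condition~\ref{item:prox4} explicitly, which is fine since the paper notes it is implied by the others.
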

\begin{proof}
We first show that if a $\neighs(\cdot)$ function satisfies Definition~\ref{def:searchable}, the implicit solution graph it induces is strongly connected. Given any two distinct solutions $S, S^* \in \mathcal{S}$, we know by Definition~\ref{def:searchable}.(\ref{item:prox3}) that there exists $S'\in\neighs(S)$ such that $|S'\scap S^*| > |S\scap S^*|$. By induction on $S$ and $S'$, it follows that we will eventually reach a solution $S$ that globally maximizes $|\cdot \scap S^*|$, which by Definition~\ref{def:searchable}.(\ref{item:prox4}) is precisely $S^*$. 

Based on the above properties, we next show that Algorithm~\ref{alg:general} outputs (all and only) the solutions of any proximity searchable problem with no duplication. 
Firstly, Algorithm~\ref{alg:general} returns only maximal solutions, as it only outputs the initial maximal solution found on line~\ref{ln:g:first}, found polynomial time by Definition~\ref{def:searchable}.(\ref{item:prox1}), and the output of calls to $\neighs(S)$ which contain maximal solutions.

We say that a solution is \textit{visited} when $\rec(S)$ is called. In Algorithm~\ref{alg:general} all solutions added to $\sol$ are visited at most once, thanks to the membership test in the set $\sol$; this guarantees that the same solution is never output twice. As the graph defined by $\neighs(S)$ is strongly connected, the traversal done by Algorithm~\ref{alg:general} starting from the solution found on line~\ref{ln:g:first} must find all solutions.

To complete the proof, we show that Algorithm~\ref{alg:general} runs in polynomial delay.

Firstly, $\neighs(S)$ requires polynomial time by Definition~\ref{def:searchable}.(\ref{item:prox2}) and thus can only return a polynomial number of solutions; this means the out-degree of every node in the implicit solution graph is polynomial and we can iterate over it in polynomial time.


As a new recursive call is performed only when a new solution is found, the amortized cost per solution is bound by the cost of a recursive call, i.e., the cost of lines~\ref{ln:g:add}--\ref{ln:g:rec}. As the cost of $\neighs(\cdot)$ is polynomial, and $\sol$ can be easily maintained in polynomial time (in Appendix~\ref{sec:sol} we show this latter cost to be negligible for all algorithms presented here), it follows that the amortized cost per solution is polynomial. In order to get polynomial delay, we can employ the \textit{alternative output}~\cite{Uno2003} method, that can be applied to any recursive algorithm that outputs a solution in each recursive call: by performing output in pre-order when the recursion depth is even, and post-order when it is odd, the delay will be bounded by that of a constant number of recursive calls, i.e., polynomial.
\end{proof}

The following observations are in order:
\begin{itemize} 
    \item $\scap$ is not actually used in Algorithm~\ref{alg:general}, and does \textit{not} need to be computed.
    \item Proximity search can be applied to all listing domains where solutions are modeled by set systems, not just graphs.
    \item Proximity search is mainly intended for maximal listing problems, however, it is not strictly limited to it.
    \item Maximal listing problems in which the input-restricted problem is computable in polynomial time (as well as the $\comp(\cdot)$ function) are proximity searchable.\footnote{In essence, we obtain as a special case the same solution graph as known algorithms based on the input-restricted problem~\cite{tsukiyama1977new,lawler1980generating,Cohen20081147}: For a solution $S$ we compute $\neighs(S)$ by sequentially taking all elements $v\in \mathcal{U}\setminus S$, solving the input-restricted problem for $S \cup \{v\}$, and applying $\comp(\cdot)$ on the results.}
    \item The polynomiality constraint on $\neighs(\cdot)$ can be relaxed: it can be trivially seen how computing $\neighs(\cdot)$ in Incremental Polynomial Time (resp. Polynomial Total Time) yields and Incremental Polynomial Time (resp. Polynomial Total Time) algorithm. 
    \item The cost per solution and delay of the algorithm is the complexity of the $\neighs(\cdot)$ function (we show in Appendix~\ref{sec:sol} how maintaining $\sol$ is negligible).
\end{itemize}

In the rest of the paper, we show how to suitably model several problems to obtain new polynomial-delay algorithms for problems that, to the best of our knowledge, could not be previously solved in polynomial delay. 
We show these algorithms by providing suitable $\scap$ and $\neighs(\cdot)$ functions, proving that they satisfy Definition~\ref{def:searchable}, which automatically give us a polynomial delay listing algorithm by Algorithm~\ref{alg:general}.
We will use a common notation: $S$ is an arbitrary solution, and $S^*$ the ``target'' solution.





\begin{figure*}[t]
    \centering
    \includegraphics[width=\textwidth]{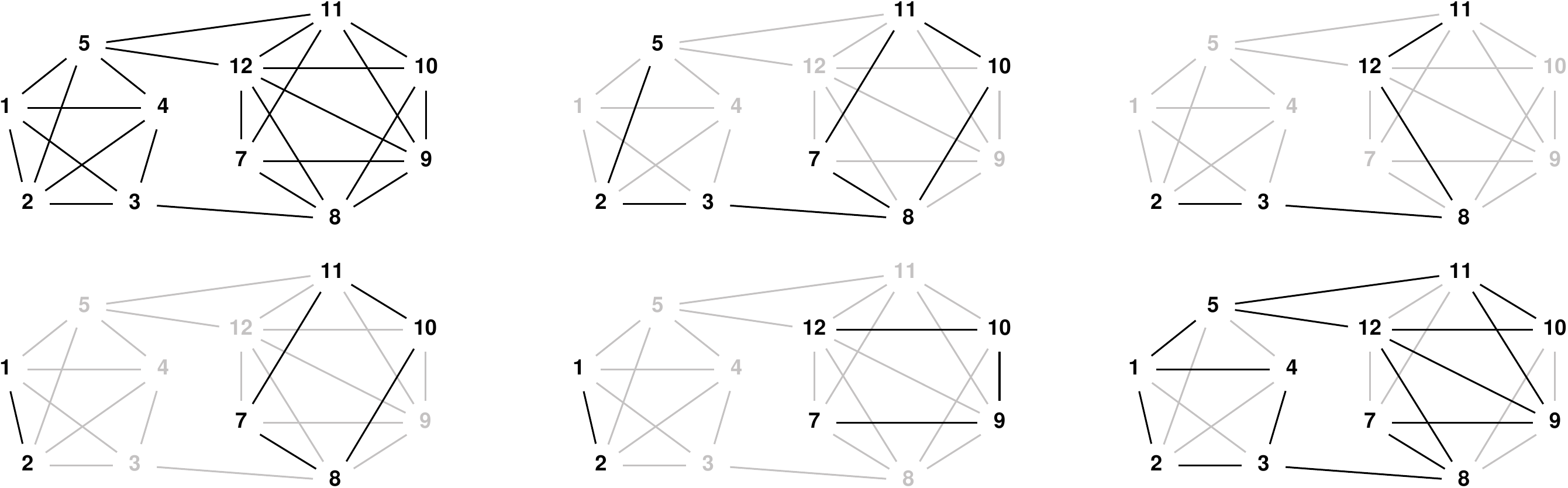}
    \put (-505,145){\small(a)}
    \put (-320,145){\small(b)}
    \put (-140,145){\small(c)}
    \put (-505,60){\small(d)}
    \put (-320,60){\small(e)}
    \put (-140,60){\small(f)}
    %
    %
    \caption{\textbf{a}: a graph. \textbf{b},\textbf{c}: two maximal connected induced bipartite subgraphs of (a). \textbf{d},\textbf{e}: two maximal induced bipartite subgraphs of (a). \textbf{f}: a maximal edge-induced bipartite subgraph of (a).}
    \label{fig:ex:bip}
\end{figure*}

\section{Proximity search by canonical reconstruction}\label{sec:reconstruction}


We make concrete use of the abstract notion of proximity search and introduce a technique, which we call \textit{canonical reconstruction}.
While it is kept separate from the previous section for cohesiveness, we find this 
technique to often be the right way to look at maximal subgraph listing problems. 
Since we deal with graphs, the universe $\mathcal{U}$ is the vertex set, unless explicitly specified.

To accompany the explanation, we detail its implementation in the case of Maximal Connected Induced Bipartite Subgraphs in Section~\ref{sec:bip}.


The technique is based on the definitions of  \textit{canonical order} and \textit{canonical extender} for solutions, which depend entirely on the problem at hand, and it is intuitively a way to harness its structure.

\begin{example*}
For a Maximal Connected Induced Bipartite Subgraph $S$, we will use as canonical order a BFS-order of $G[S]$ starting from its vertex of smallest id, where ties are broken by vertex id: 
For the subgraph in Figure~\ref{fig:ex:bip} (b) this order is\linebreak $2,3,5,8,11,7,10$, and for the one in (c) it is $2,3,8,12,11$.
\end{example*}

\paragraph{Canonical order and proximity} 
Simply assume that each solution $S$ is given an ordering $s_1,\ldots, s_{|S|}$ of its elements which will satisfy some problem-specific conditions. We require that any \textit{prefix} $s_1,\ldots, s_i$ of this order corresponds to a (non-maximal) solution $\{ s_1,\ldots, s_i\}$. In the rest of the paper, we will refer as prefix of the order to both the sequence $s_1,\ldots, s_i$ and the corresponding set of elements $\{ s_1,\ldots, s_i\}$.
Note that the ordering is \emph{not} required to be efficiently computable, as the proximity search algorithm never actually computes it: it is only used in the correctness proof of the neighboring function. Moreover, the ordering is adaptive to each solution, so the same elements can be ranked differently in distinct solutions.

Given the order, we define the proximity function $\scap$ as follows.
\begin{definition}[proximity]\label{def:proximity}
Given two solutions $S$ and $S^*$, let $ s^*_1,\ldots s^*_{|S^*|}$ be the canonical order of $S^*$: the proximity $S \scap S^*$ between $S$ and $S^*$ is the longest \emph{prefix} $ s^*_1,\ldots,s^*_i$ of the canonical order of $S^*$ whose elements are all contained in $S$.
\end{definition}

It should also be noted that the operation is \textit{not} symmetric, i.e., we may have $S \scap S^* \ne S^* \scap S$. 

\begin{example*}
Let $S$ be the subgraph shown in Figure~\ref{fig:ex:bip} (b) and $S^*$ the one shown in (c). Considering the canonical orders mentioned above, we can see that $S \scap S^* = \{2,3,8\}$, while $S^* \scap S = \{2,3\}$.
\end{example*}

\paragraph{Canonical extender}
The goal of a proximity search algorithm is to exploit Definition~\ref{def:searchable}.(\ref{item:prox3}): given $S$, for any $S^*$, find some $S'$ such that $|S'\scap S^*|>|S\scap S^*|$. Using Definition~\ref{def:proximity}, $S\scap S^*$ is a prefix  $s^*_1,\ldots,s^*_i$ of the canonical order of $S^*$, so we want to find any solution $S'$ that contains a longer prefix, i.e., $s^*_1,\ldots,s^*_{i+1}$ (possibly ordered differently and interspersed in the canonical order of $S'$). Since we must at least add the vertex $s^*_{i+1}$, we call $s^*_{i+1}$ the \textit{canonical extender} of $S,S^*$. Armed with this notion, we want to proceed conceptually as follows for a given solution $S$.
\begin{enumerate}
    \item Guess which node $v \not \in S$ is the canonical extender $s^*_{i+1}$ (try all possibilities, $n$ at most).
    \item Guess a \textit{removable set} $X\subseteq S$ from $S\cup\{v\}$, i.e., such that $S\setminus X\cup\{v\}$ is a solution and $X \cap \{s^*_1,\ldots,s^*_{i}\} = \emptyset$.
    \item Obtain $S'$ as the outcome of $\comp(S\setminus X\cup\{v\})$.
\end{enumerate}

In essence, we want to add $s^*_{i+1}$ to $S$, then turn the result back into a solution by removing some elements, but without affecting the proximity $s^*_1,\ldots,s^*_{i}$.

Recalling that prefixes of a canonical order are required to be (non-maximal) solutions, indeed $s^*_1,\ldots,s^*_{i+1}$ is a solution; hence, a removable set $X$ always exists (e.g., $X=S\setminus \{s^*_1,\ldots,s^*_{i}\}$). 
The key point is that we want to satisfy the proximity requirement for all $S^*$ (that can be exponentially many) using only a \textit{polynomial} number of removable sets $X$.
While there is no general rule for this, and indeed, solving this for some problems would imply \textsc{p}=\textsc{np}, we will observe in this paper how it is possible to do so in some cases where a canonical order can efficiently decompose the underlying structure of the solution.

\paragraph{Canonical reconstruction}
Now we have all the ingredients to formalize below the required structure for adopting our strategy.
\begin{definition}(Proximity search by canonical reconstruction)
\label{def:crecon}
Given a maximal listing problem $\mathcal{P}$, in which each maximal solution $S$ is associated with a canonical ordering $s_1,\ldots, s_{|S|}$, we say that $\mathcal{P}$ admits a \emph{canonical reconstruction} if the following holds.
\begin{enumerate}
    \item \label{item:canonical1} Any prefix $s_1,\ldots, s_i$ of the canonical order of any maximal solution $S$ is a (non-maximal) solution of $~\mathcal{P}$.
    \item \label{item:canonical2} Given a maximal solution $S$ and any $v\not\in S$, there is set $\mathcal{X}\subseteq 2^S$ of \emph{removables}, such that 
    \begin{itemize} 
    	\item $\mathcal{X} = \{ X_1, X_2,~\ldots \}$ can be computed in polynomial time.
        \item $S\setminus X_i\cup\{v\}$ is a solution of $\mathcal{P}$ for any $X_i\in \mathcal{X}$.
        \item For any $S^*$ such that $v$ is the canonical extender of $S,S^*$, there is at least one $X_i\in \mathcal{X}$ such that $(S\scap S^*)  \cap X_i = \emptyset$.
        \footnote{Indeed, $(S\scap S^*)  \cap X_i = \emptyset$, i.e., $X_i$ does not intersect the proximity, implies that $S\setminus X_i\cup\{v\}$ contains $(S\scap S^*) \cup \{v\}$, which extends the proximity with $v$.}
    \end{itemize}
    \item There is a polynomial-time computable function $\comp(A)$ which takes a solution $A$ of $~\mathcal{P}$ and returns a maximal solution $A'\supseteq A$ of $~\mathcal{P}$.
\end{enumerate}

We further define the \emph{canonical reconstruction function}, as 
\[
\neighs(S,v) = \bigcup_{X_i\in\mathcal{X}}\comp((S\setminus X_i)\cup\{v\})
\]
and this corresponds to the solutions $S' \in \neighs(S)$ for which $v$ is the canonical extender of $S,S'$.
Hence, $\neighs(S)$ is obtained as $\bigcup_{v \in V(G)} \neighs(S,v)$.\footnote{For completeness, we define $\neighs(S,v) = \{S\}$ for $v\in S$, as $S\cup \{v\} = S$ is already a solution of $\mathcal{P}$.} 
\end{definition}

We observe how the removables and the neighboring function can be derived from one another, so the algorithm can be defined by providing either one: one can focus on defining removables from $S\cup\{v\}$ that do not intersect the proximity, or equivalently solutions contained in $S\cup\{v\}$ that fully contain the proximity.

We also recall that a polynomial-time computable function $\comp(\cdot)$ trivially exists for all hereditary and connected-hereditary properties that can be recognized in polynomial time (these include all the example problems shown in this paper).

\begin{figure*}
    \centering
    \includegraphics[width=\textwidth]{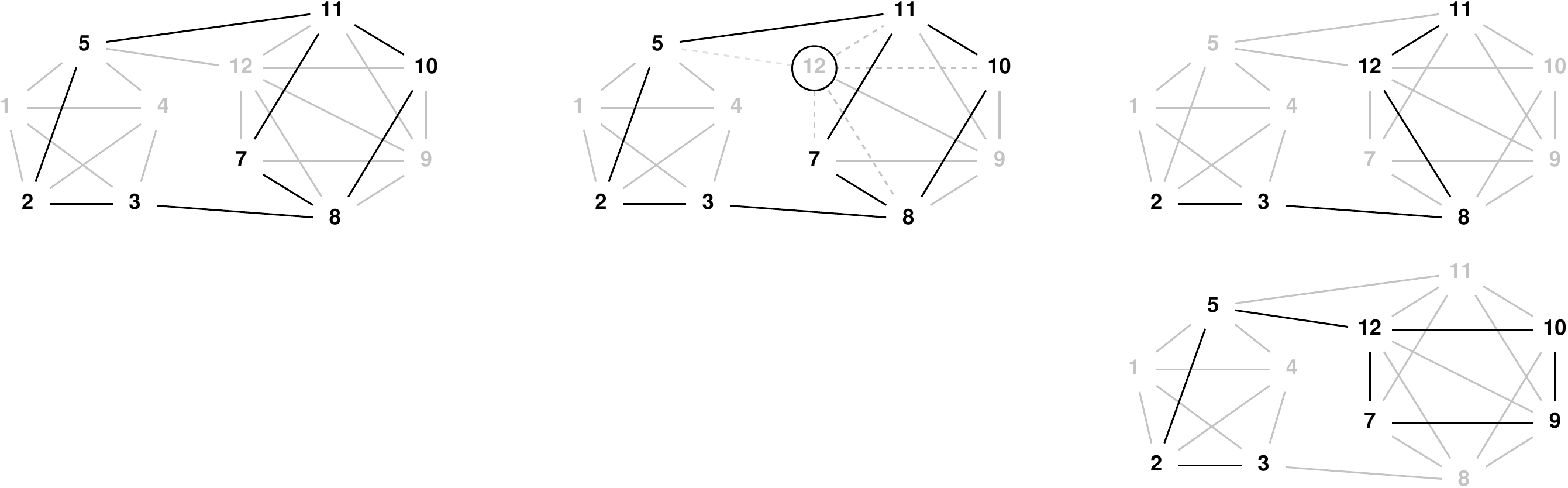}
    \put(-470,50){\begin{scriptsize}$S = \{2,3,5,7,8,10,11\}$, with $S_0 = \{2,8,11\}$, $S_1 = \{3,5,7,10\}$, and $v = 12$\end{scriptsize}}
    \put(-470,33){\begin{scriptsize}$X_0 = N(v)\cap S_0 = \{8,11\}$,\quad $X_1 = N(v)\cap S_1 = \{5,7,10\}$\end{scriptsize} }
    \put(-470,16){\begin{scriptsize}$S \setminus X_1 \cup \{v\} = \{2,3,8,11,12\}$ (top)\end{scriptsize}}
    \put(-470,0){\begin{scriptsize}$S \setminus X_0 \cup \{v\} = \{2,3,5,7,10,12\}$, with vertex $9$ added by $\comp(\cdot)$ (bottom)\end{scriptsize} }
    \caption{The steps taken by the neighboring function $\neighs(S,v)$, for a possible solution $S$ of the graph in Figure~\ref{fig:ex:bip} and $v=12$. The bottom two lines correspond to the neighboring solutions produced by the function $\neighs(S,12)$.}
    \label{fig:bip-running}
\end{figure*}

Finally, we show how canonical reconstruction immediately implies the maximal listing problem at hand is proximity searchable, using the $\neighs(\cdot)$ function defined above:

\begin{theorem}
\label{thm:canonical-reconstructio}
All maximal listing problems that allow a canonical reconstruction are proximity searchable.
\end{theorem}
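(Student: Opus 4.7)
The plan is to verify that the four conditions of Definition~\ref{def:searchable} follow from the existence of a canonical reconstruction, using the $\neighs(\cdot)$ function prescribed in Definition~\ref{def:crecon} together with the proximity function of Definition~\ref{def:proximity}. The argument is essentially a direct unpacking of definitions, since the notion of canonical reconstruction was engineered so that each ingredient lines up with one clause of proximity searchability.

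First I would dispatch the two bookkeeping conditions. Condition~(\ref{item:prox1}) is immediate: the empty sequence is a prefix of the canonical order of any maximal solution, hence by item~(\ref{item:canonical1}) of Definition~\ref{def:crecon} the empty set is a (non-maximal) solution, so $\comp(\emptyset)$ yields a maximal solution in polynomial time by item~(3) of the same definition. Condition~(\ref{item:prox2}) follows by direct inspection of $\neighs(S)=\bigcup_{v\in V(G)}\neighs(S,v)$: the outer union ranges over polynomially many vertices; for each $v$ the set $\mathcal{X}$ of removables is computable in polynomial time; and for each $X\in\mathcal{X}$ a single call to $\comp((S\setminus X)\cup\{v\})$ produces a maximal solution in polynomial time.

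The heart of the proof is condition~(\ref{item:prox3}). Given two distinct maximal solutions $S$ and $S^*$, let $s^*_1,\ldots,s^*_{|S^*|}$ be the canonical order of $S^*$ and let $s^*_1,\ldots,s^*_k = S\scap S^*$ be the longest prefix of this order contained in $S$. Since both $S$ and $S^*$ are maximal and distinct, neither is contained in the other, so $S^*\not\subseteq S$; hence $k<|S^*|$ and the canonical extender $v:=s^*_{k+1}$ exists and lies outside $S$. Item~(\ref{item:canonical2}) of Definition~\ref{def:crecon} applied to $S$ and $v$ then yields some removable $X\in\mathcal{X}$ with $(S\scap S^*)\cap X=\emptyset$, and $(S\setminus X)\cup\{v\}$ is a solution of $\mathcal{P}$ that contains $(S\scap S^*)\cup\{v\}=\{s^*_1,\ldots,s^*_{k+1}\}$. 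Because $\comp(\cdot)$ only adds elements to its argument, the resulting $S'=\comp((S\setminus X)\cup\{v\})\in\neighs(S,v)\subseteq\neighs(S)$ still contains this prefix of the canonical order of $S^*$, forcing $|S'\scap S^*|\geq k+1>k=|S\scap S^*|$, as required. Condition~(\ref{item:prox4}) is noted in Definition~\ref{def:searchable} as a consequence of the previous items, but can also be checked directly: $|S\scap S^*|=|S^*|$ forces $S^*\subseteq S$, which by maximality of $S^*$ implies $S=S^*$.

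I do not expect a genuine obstacle: the only subtle step is ensuring that the canonical extender is well defined whenever $S\neq S^*$, and this rests on the maximality of both solutions combined with item~(\ref{item:canonical1}) of Definition~\ref{def:crecon}, which guarantees that prefixes of canonical orders correspond to solutions and hence that $S\scap S^*$ is well formed. With all four conditions verified, Theorem~\ref{thm:strong} applies and delivers the polynomial delay algorithm, concluding the proof.
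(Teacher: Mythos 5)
Your proposal is correct and follows essentially the same route as the paper's proof: verify conditions (1) and (2) directly from the definition of $\neighs(\cdot)$ and $\comp(\cdot)$, establish condition (3) via the canonical extender and a removable disjoint from the proximity, and note condition (4) from maximality. The only difference is that you spell out a couple of details the paper leaves implicit (existence of the canonical extender when $S\neq S^*$, and that $\comp(\cdot)$ only adds elements), which is fine.
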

\begin{proof}
Let us show that a listing problem $\mathcal{P}$ that satisfies Definition~\ref{def:crecon} satisfies the four conditions of Definition~\ref{def:searchable}.
Condition~(\ref{item:prox1}) is trivially satisfied, say, using $\comp(\emptyset)$.
As for condition~(\ref{item:prox2}), recall that 
\[
\neighs(S) = 
\bigcup_{v \in V(G)} \bigcup_{X_i\in\mathcal{X}}(\comp(S\setminus X_i\cup\{v\})).
\]
Considering that both $|V(G)|$ and $|\mathcal{X}|$ are polynomial, and $\comp(\cdot)$ takes polynomial time, it follows that computing $\neighs(S)$ takes polynomial time.

For condition~(\ref{item:prox3}), consider the canonical extender $v$ for $S,S^*$: By Definition~\ref{def:crecon} there is $X_i\in \mathcal{X}$ such that $X_i\cap (S\scap S^*) = \emptyset$; it holds that $S' = \comp(S\setminus X_i\cup\{v\}) \in \neighs(S,v) \subseteq \neighs(S)$, and $(S\scap S^*) \cup \{v\}\subseteq S'$, thus $|S'\scap S^*|>|S\scap S^*|$ because of $v$. 
  
Finally, condition~(\ref{item:prox4}) is satisfied by looking at the definition of proximity in Definition~\ref{def:proximity}: fixed $S^*$, the proximity $S\scap S^*$ is maximized if $S \supseteq S^*$, as we have $S\scap S^* = S^* = S$; however, as this is a maximal listing problem, all solutions are inclusion-wise maximal, meaning that $S \supseteq S^*$ is only true for $S=S^*$. 
\end{proof}

As a final remark, we note a somewhat surprising feature of this technique: while in general connected-hereditary properties (e.g., Maximal Connected Induced Bipartite Subgraphs) are more challenging to deal with than hereditary ones (e.g., Maximal Induced Bipartite Subgraphs), in the case of proximity search there is typically no difference, and in some instances, we even use the connected case as a starting point for the non-connected one (see, e.g., Section~\ref{sec:binon}).


\section{Maximal Bipartite Subgraphs}\label{sec:bip}

We now illustrate how to apply proximity search to maximal bipartite subgraph enumeration, giving the full details for the example in the previous section. 

A graph $G$ is bipartite if its vertices can be partitioned into two sets $V_0,V_1$, such that $V_0\cap V_1 = \emptyset$, $V_0 \cup V_1 = V(G)$, and both $G[V_0]$ and $G[V_1]$ are edge-less graphs. Equivalently, $G$ is bipartite if it has no cycle of odd length.
Maximal bipartite subgraphs have also been studied as \textit{minimal odd cycle transversals}~\cite{kratsch2014compression}, as one is the complement of the other.

The problem of listing \textit{all} bipartite (and induced bipartite) subgraphs has been efficiently solved in~\cite{wasa2018bipartite}. However, to the best of our knowledge, neither the techniques in~\cite{wasa2018bipartite} nor other known ones extend to efficiently listing \textit{maximal} bipartite subgraphs, which poses a challenge. Consider the instance of input-restricted problem shown in Figure~\ref{fig:bip:restr} (left). We can exploit the fact that a subgraph of a bipartite graph is itself bipartite, meaning that the property is hereditary.
Hence, we could take the current solution $S$ (which are the endpoints of the bold edges) and a vertex $v \not \in S$, to then try to list all the maximal solutions contained in the induced subgraph $G[S \cup \{v\}]$; however, $G[S \cup \{v\}]$ has exponentially many solutions, meaning we cannot solve the input-restricted problem in polynomial time and thus we cannot get polynomial delay with the techniques from~\cite{lawler1980generating,Cohen20081147,conte2019framework}. 
The best we could hope for is solving the input-restricted problem in polynomial delay or incremental polynomial time, which would yield an incremental polynomial time algorithm for the general problem~\cite{Cohen20081147}. Figure~\ref{fig:bip:restr} (right) shows an analogous situation for edge-induced subgraphs. 

We thus turn to proximity search. First, let us introduce some preliminary notions: We denote an induced bipartite \textit{subgraph} of $G$ as a pair of vertex sets $\langle B_0,B_1 \rangle$, with $B_0\cap B_1 = \emptyset$ and $B_0 \cup B_1 \subseteq V(G)$, such that $G[B_0]$ and $G[B_1]$ are edge-less graphs. By convention, $B_0$ is the side of the bipartition containing the vertex of smallest label among those in the subgraph. In case $G[B_0 \cup B_1]$ has multiple connected components, this applies to all components. This way, any bipartite subgraph (connected or not) always has a \emph{unique} representation $\langle B_0,B_1 \rangle$.
We will sometimes use simply $B$ to refer to the subgraph $G[B_0 \cup B_1]$ induced by $\langle B_0,B_1 \rangle$. When performing $\comp(B)$ (defined at the end of Section~\ref{sec:prelim}) and $B$ is not connected, this may move some vertices from $B_0$ to $B_1$ and vice versa due to different components becoming connected; even when $B$ is connected, if a vertex with smaller label than all others in $B$ is added to $B_1$, then $B_0$ and $B_1$ are immediately swapped to preserve the invariant of the smallest vertex being in $B_0$.
We define the intersection between two bipartite subgraphs $B$ and $B'$ as the set of all shared vertices, i.e.: $B\cap B' = (B_0 \cup B_1) \cap (B'_0\cup B'_1)$.

We consider the case of connected induced bipartite subgraphs in Section~\ref{sec:bicon}. We will later briefly show how this structure can be adapted to cover the non-connected and non-induced cases with small changes in Sections~\ref{sec:binon} and~\ref{sec:ebi}. Their complexity will be discussed in Section~\ref{sec:bitime}.

\subsection{Listing Maximal Connected Induced Bipartite Subgraphs}
\label{sec:bicon}

Let $B = \langle B_0,B_1 \rangle$ be a maximal induced bipartite subgraph of $G$, and $v$ a vertex not in $B$, i.e., in $v \in V(G)\setminus B$. 
Looking at Definition~\ref{def:crecon}, we immediately observe that a polynomial-time computable function $\comp(\cdot)$ exists since the problem is connected-hereditary.
Then, we need to define a suitable canonical order, and prove the existence of the corresponding removables. 

Consider a BFS order of $G[B]$ starting from its vertex of smallest label, say $b_1$. In this order, a vertex $u$ precedes a vertex $v$ if the distance of $u$ from $b_1$ is smaller than that of $v$ or, in case the distance is equal, $u$'s label is smaller than $v$'s. 

\begin{definition}[canonical order for connected induced bipartite subgraphs]\label{def:bip-con-canon} 
The \emph{canonical} order of a \emph{connected induced bipartite subgraph} $B$ is the sequence $b_1, \ldots, b_{|B|}$ given by a BFS order of $G[B]$ rooted at the vertex $b_1$ of smallest label, where ties are broken by placing the vertex of smallest label first.
\end{definition}

For the subgraph in Figure~\ref{fig:ex:bip} (b) the canonical order is $2,3,5,8,11,7,10$, and for the one in (c) it is $2,3,8,12,11$. The definition of proximity is then automatically given by Definition~\ref{def:proximity}.

Last ingredient for Definition~\ref{def:crecon} is the set $\mathcal{X}\subseteq 2^B$, that contains just \emph{two} removables. 
In order to get a bipartite graph, it is possible to make two removables as follows:
\begin{itemize}
    \item $X_0 = N(v) \cap B_0 $,
    \item $X_1 = N(v) \cap B_1 $.
\end{itemize}

That is, remove all the neighbors of $v$ in one of the two sides $B_i$: clearly, $v$ can be included in $B_i$ as it is now only adjacent to vertices of $B_{1-i}$.
While this works for the Maximal Induced Bipartite Subgraphs problem, we have the further constraint of \textit{connectivity}, so we must also discard every vertex that is not in the same connected component as $v$. The removables become as follows:

\begin{itemize}
    \item $X_0 = B \setminus \ccomp_v( \{v\} \cup (B_0 \setminus N(v)) \cup B_{1} )$,
    \item $X_1 = B \setminus \ccomp_v( \{v\} \cup (B_1 \setminus N(v)) \cup B_{0} )$.
\end{itemize}

That is, we remove all vertices not in the same connected component as $v$, after introducing $v$ and removing all its neighbors in either $B_0$ or $B_1$.

We can use these to create the neighboring function to be plugged in Algorithm~\ref{alg:general}, following Definition~\ref{def:crecon}:

\begin{definition}[neighboring function for maximal connected induced bipartite subgraphs]\label{def:neigh-bip-con}
$$\neighs(B,v) = \{ \comp( \ccomp_{v} ( \{v\} \cup (B_i \setminus N(v)) \cup B_{1-i} ) ) \mid i =0,1 \}$$

\end{definition}

A graphical example of this procedure is given in Figure~\ref{fig:bip-running} (for simplicity, we adopt an example where the subgraphs are connected after removing $v$'s neighbors, so the removables are equivalent to the simpler ones of the non-connected version).


\begin{lemma}\label{lem:bip-con-scap}
The problem of listing all Maximal Connected Induced Bipartite Subgraphs admits a canonical reconstruction.
\end{lemma}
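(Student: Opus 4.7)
The plan is to verify the three conditions of Definition~\ref{def:crecon} using the canonical order of Definition~\ref{def:bip-con-canon} and the two removables $X_0, X_1$ defined above. Conditions~(\ref{item:canonical1}) and~(3) I would dispose of quickly: every prefix $b_1, \ldots, b_i$ of a BFS rooted at $b_1$ is connected, since each $b_k$ with $k>1$ has its BFS parent among the earlier vertices, and it is bipartite since it induces a subgraph of the bipartite $G[B]$; meanwhile, $\comp(\cdot)$ can be implemented in polynomial time because connected induced bipartiteness is a connected-hereditary, polynomial-time recognizable property, so greedy vertex addition suffices.

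For condition~(\ref{item:canonical2}), I would first note that $\mathcal{X} = \{X_0, X_1\}$ has constant size and can be built in $O(m+n)$ time by computing $N(v)$ and running a BFS from $v$ in $G[\{v\} \cup (B_i \setminus N(v)) \cup B_{1-i}]$ for $i = 0,1$. Next, I would verify that each $(S \setminus X_i) \cup \{v\}$ is a connected induced bipartite subgraph: connectedness is immediate from the definition of $\ccomp_v$, and bipartiteness follows because after deleting $N(v) \cap B_i$ the vertex $v$ has no neighbor left in $B_i$, so placing $v$ on side $i$ extends the bipartition of $B$ consistently.

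The crux of the proof, and the step I expect to be the main obstacle, is the disjointness requirement: for any $S^*$ whose canonical extender from $S$ is $v$, at least one of $X_0, X_1$ must be disjoint from $P := S \scap S^*$. My key observation is that $P \cup \{v\} = \{s^*_1, \ldots, s^*_{|P|+1}\}$ is a BFS prefix of the canonical order of $S^*$, hence connected in $G$; removing $v$ still leaves $P$ connected, since it is itself the shorter BFS prefix $\{s^*_1,\ldots,s^*_{|P|}\}$. Being connected and bipartite, $G[P]$ has a 2-coloring that is unique up to swap, so the bipartitions induced on $P$ by $S$ and by $S^*$ must coincide up to a global swap. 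In $S^*$'s bipartition every neighbor of $v$ in $P$ lies on the single side opposite to $v$, so the same holds in $S$'s bipartition; I can then choose $i \in \{0,1\}$ so that $B_{1-i}$ is the side of $S$ containing $N(v) \cap P$. With this choice $N(v) \cap B_i \cap P = \emptyset$, hence $P \cup \{v\} \subseteq \{v\} \cup (B_i \setminus N(v)) \cup B_{1-i}$; since $P \cup \{v\}$ is connected in $G$, all of $P$ falls into $\ccomp_v$, giving $X_i \cap P = \emptyset$, as required.
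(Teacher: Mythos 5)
Your proposal is correct and follows essentially the same route as the paper: prefix-connectivity plus heredity for condition (1), the two removables with $\ccomp_v$ for condition (2), and for the disjointness requirement the same key argument — the proximity $P$ is a connected prefix, so its $2$-coloring is unique up to swap, the bipartitions of $S$ and $S^*$ agree on $P$, and choosing $i$ so that $N(v)\cap P\subseteq B_{1-i}$ together with connectivity of $P\cup\{v\}$ gives $X_i\cap P=\emptyset$. The only cosmetic difference is that the paper treats the case $S\scap S^*=\emptyset$ separately, while your argument covers it trivially.
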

\begin{proof}
For the canonical order given in Definition~\ref{def:bip-con-canon}, any prefix induces a graph that is connected because of the BFS order, and bipartite because bipartite subgraphs are hereditary, so condition~(\ref{item:canonical1}) of Definition~\ref{def:crecon} is satisfied. 
As for condition~(\ref{item:canonical2}), it is evident from the definition of removables (alternatively, of the neighboring function) that they can be computed in polynomial time, and that they produce connected bipartite subgraphs.
We only need to show that the third item holds: given $B$, $B^*$ and their canonical extender $\dv$, we have $B\scap B^*\cap X_i = \emptyset$ for either $i=0$ or $i=1$; this will imply that $|B' \scap B^*| > |B \scap B^*|$ for $B' = \comp( \ccomp_{\dv} ( \{\dv\} \cup (B_i \setminus N(\dv)) \cup B_{1-i} ) )$, so the proximity is successfully increased. 

If $B\scap B^* = \emptyset$ the claim is trivially true, as we can consider $b^*_1$ as canonical extender.
%
%
Otherwise, let $Z = B \scap B^* = \{b^*_1, \ldots, b^*_h\}$, and we have $\dv = b^*_{h+1}$. By Definition~\ref{def:bip-con-canon}, $Z$ is a connected induced bipartite subgraph, meaning that it allows a unique bipartition $Z_0,Z_1$ (with $Z_0$ being the set containing the vertex of smallest label in $Z$, that is, $b^*_1$). 
Since $b^*_1$ is the vertex of smallest label in $B^*$, it will be in $B^*_0$, so it follows that $Z_0 \subseteq B^*_0$ and $Z_1 \subseteq B^*_1$.

Let $j$ be the value in $\{0,1\}$ such that $\dv\in B^*_j$, and observe that $N(\dv)\cap Z_j \subseteq N(\dv)\cap B^*_j = \emptyset$.
Furthermore, we know that $b^*_1\in B^*_0$ and $b^*_1\in B$, but we do not know whether $b^*_1\in B_0$ or $b^*_1\in B_0$; however, there exists a value $i$ in $\{0,1\}$ such that either $b^*_1 \in (B_i\cap B^*_j)$ or $b^*_1 \in (B_{1-i}\cap B^*_{1-j})$. Observe that $Z_j \subseteq B^*_j \cap B_i$ and $Z_{1-j} \subseteq B^*_{1-j} \cap B_{1-i}$.

Finally, let $B' = \ccomp_{\dv} ( \{\dv\} \cup (B_i \setminus N(\dv)) \cup B_{1-i} )$, and consequently we have $X_i = B \setminus B'$.
$Z$ is fully contained in $B'$: the only vertices removed from $B$ by $X_i$ are (i) those in $N(\dv)\cap B_i$, but $N(\dv)\cap B_i \cap Z \subseteq N(\dv)\cap B_i \cap Z_j \subseteq N(\dv)\cap Z_j = \emptyset$, and (ii) the vertices not in the connected component of $\dv$ in $G[\{\dv\} \cup (B_i\setminus N(\dv)) \cup B_{1-i}]$, but no such vertex can be in $Z$ as $Z\cup \{\dv\}$ is a prefix of the canonical order of $B^*$, so it induces a connected subgraph.

We thus have that $Z\cup \{\dv\} \subseteq B'$, meaning that $X_i \cap Z = \emptyset$, which proves the claim.
Constructively, we can finally observe how the maximal solution $B''=\comp(B')$ is the one produces by the algorithm which increases the proximity to $B^*$, as we have $\{b^*_1, \ldots, b^*_h, b^*_{h+1}\} \subseteq B''\scap B^*$ and thus $|B'' \scap B^*| \ge |B \scap B^*|+1$.
\end{proof}


From this, we immediately obtain the correctness of the algorithm. 

\begin{theorem}\label{thm:bip-con}
A proximity search algorithm (Algorithm~\ref{alg:general}), using the\linebreak $\neighs(\cdot)$ function from Definition~\ref{def:neigh-bip-con} outputs all Maximal Connected Induced Bipartite Subgraphs of a graph $G$ without duplication with $O(nm)$ delay.
\end{theorem}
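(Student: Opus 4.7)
The plan is to leverage the machinery already developed in Sections~\ref{sec:outline} and~\ref{sec:reconstruction}, reducing the theorem to a short chain of references plus a focused complexity analysis. Correctness — that Algorithm~\ref{alg:general} outputs every maximal connected induced bipartite subgraph exactly once — follows immediately by composition: Lemma~\ref{lem:bip-con-scap} establishes that the problem admits a canonical reconstruction with the neighboring function of Definition~\ref{def:neigh-bip-con}; Theorem~\ref{thm:canonical-reconstructio} then yields proximity searchability; and Theorem~\ref{thm:strong} hands back the polynomial-delay enumeration guarantee with no duplicates (non-repetition is enforced by the membership test against $\sol$ in Algorithm~\ref{alg:general}). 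Hence the only thing genuinely left to prove is the concrete delay bound $O(nm)$.

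By Theorem~\ref{thm:strong}, the delay is dominated by the cost of a single evaluation of $\neighs(B)$, since the $\sol$-management overhead is negligible by Appendix~\ref{sec:sol}. I would decompose the cost as $\neighs(B) = \bigcup_{v \in V(G)} \neighs(B,v)$, reducing the problem to showing that each per-vertex call costs $O(m)$. For a fixed candidate canonical extender $v$, both candidates produced by $\neighs(B,v)$ are obtained by: (i) removing the neighbors $N(v) \cap B_i$; (ii) restricting to the connected component of $v$ in the induced subgraph via a single BFS/DFS; and (iii) applying $\comp(\cdot)$. Steps (i) and (ii) are clearly $O(m)$, and I would implement step (iii) in $O(m)$ as well by maintaining the bipartition incrementally: starting from the current seed, expand along boundary edges and, for each external vertex $u$, check consistency of $2$-coloring against its neighbors already inside; if consistent, add $u$ to the appropriate side, otherwise discard it permanently, since in a connected component the color of an already-attached vertex is fixed.

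Putting these together, $\neighs(B)$ costs $O(n \cdot m)$ per call, matching the claimed delay. The main obstacle I anticipate is implementing $\comp(\cdot)$ carefully enough: a naive completion that re-checks bipartiteness from scratch after every addition would cost $\Omega(nm)$ per call and blow the budget when multiplied by the outer loop over $v$. One also needs to handle correctly the convention that $B_0$ must contain the smallest-labeled vertex of each connected component, so that when a new vertex smaller than the current component's minimum is attached, the roles of $B_0$ and $B_1$ are swapped on that component. Once this incremental maintenance is verified, the theorem follows by combining the $O(nm)$ per-call bound with the correctness chain sketched in the first paragraph.
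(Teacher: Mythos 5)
Your proposal is correct and follows essentially the same route as the paper: correctness via Lemma~\ref{lem:bip-con-scap}, Theorem~\ref{thm:canonical-reconstructio} and Theorem~\ref{thm:strong}, and the delay bounded by $O(n)$ calls to $\neighs(B,v)$, each costing $O(m)$ for the component restriction and for $\comp(\cdot)$. Your inline incremental $2$-coloring argument for completing in $O(m)$ time is exactly the content of the paper's Lemma~\ref{lem:bip:comp} for the connected case, so nothing is missing.
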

\begin{proof}
The correctness follows from Theorem~\ref{thm:canonical-reconstructio}.
The delay is dominated by the cost of the $\neighs(B)$ function, i.e., calling $O(n)$ times $\neighs(B,v)$. The cost of the latter is $O(m)$ time to compute $\ccomp_v(\cdot)$, and $O(m)$ time to compute the $\comp(\cdot)$ function by Lemma~\ref{lem:bip:comp} (delayed to Section~\ref{sec:bitime} for compactness). The statement follows.
\end{proof}

\subsection{Listing Maximal Induced Bipartite Subgraphs}
\label{sec:binon}

We can extend our solution to the non-connected case by building one connected component at a time.  We obtain the canonical order by Definition~\ref{def:canonBFSnon}, that is, a BFS order of each component:

\begin{definition}[canonical order for induced bipartite subgraphs]\label{def:bip-non-canon}
The \emph{canonical} order of an \emph{induced bipartite subgraph} $B$ is the sequence $b_1, \ldots, b_{|B|}$ obtained by first ordering the connected components of $G[B]$ by incremental order of smallest-id vertex, then ordering each component by a BFS order (given in Definition~\ref{def:bip-con-canon}) rooted in its smallest-id vertex.
\end{definition}

In essence, this corresponds to ordering each connected component as in the connected case (Definition~\ref{def:neigh-bip-con}), and placing earlier components whose smallest-id vertex is smaller.
Looking again at Figure~\ref{fig:ex:bip}, and letting $B$ be the subgraph shown in (d) and $B^*$ as that shown in (e), the canonical order of $B$ is $\langle 1,2,7,8,11,10\rangle $, that of $B^*$ is $\langle 1,2,7,9,12,10\rangle $. By the definition of proximity for canonical reconstruction, we also obtain $B\scap B^* = \{1,2,7\}$.

The removables become simpler for this case, as we can simply remove $N(v)\cap B_i$ for $i=0,1$.
%
As a result, the neighboring function is essentially the same as the connected case (Definition~\ref{def:neigh-bip-con}), with minor changes as we do not require the connectivity:

\begin{definition}[neighboring function for maximal induced bipartite subgraphs]\label{def:neigh-bip-non}
$$\neighs(B,v) = \{  \comp(\{v\} \cup (B_i \setminus N(v)) \cup B_{1-i} ) \mid i=0,1\}$$

\end{definition}

We can then proceed to prove correctness and complexity of this case:

\begin{theorem}\label{thm:nbip}
A proximity search algorithm (Algorithm~\ref{alg:general}), using the\linebreak $\neighs(\cdot)$ function from Definition~\ref{def:neigh-bip-non} outputs all maximal induced bipartite subgraphs of a graph $G$ without duplication with $O(n(m+n\iack(n)))$ delay.
\end{theorem}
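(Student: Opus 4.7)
The plan is to mirror Lemma~\ref{lem:bip-con-scap} and Theorem~\ref{thm:bip-con}: I would first establish that Maximal Induced Bipartite Subgraphs admits a canonical reconstruction under the order of Definition~\ref{def:bip-non-canon} and the removables $X_0=N(v)\cap B_0$, $X_1=N(v)\cap B_1$ implicit in Definition~\ref{def:neigh-bip-non}, whereupon Theorem~\ref{thm:canonical-reconstructio} delivers correctness of Algorithm~\ref{alg:general}; the delay is then analyzed separately.

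Condition~(\ref{item:canonical1}) of Definition~\ref{def:crecon} would be immediate by hereditariness: any prefix of the canonical order of a bipartite subgraph induces a bipartite subgraph. Both removables are computable in $O(m)$ time, and each produces a bipartite subgraph (placing $v$ on side $i$ alongside $B_i\setminus N(v)$ and leaving $B_{1-i}$ on the opposite side yields a valid bipartition of $\{v\}\cup(B_i\setminus N(v))\cup B_{1-i}$). The substantive step is the third item of condition~(\ref{item:canonical2}): for every $B^*$ whose canonical extender from $B$ is $v$, at least one of $X_0,X_1$ must be disjoint from $Z:=B\scap B^*$.

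To tackle this I would reduce the analysis to the connected case. Let $C^*$ be the component of $B^*$ containing $v$. Since edges of $G$ between vertices of $B^*$ stay inside the components of $B^*$, we have $N(v)\cap B^*\subseteq C^*$, and hence $N(v)\cap Z = N(v)\cap(Z\cap C^*)$. Set $P=Z\cap C^*$. If $P=\emptyset$ then $N(v)\cap Z=\emptyset$ and both removables work. Otherwise, by Definition~\ref{def:bip-non-canon}, $P\cup\{v\}$ is a BFS prefix of $C^*$ rooted at its smallest-id vertex, so both $P$ and $P\cup\{v\}$ are connected through their BFS trees; writing $v\in C^*_j$ for the unique $j\in\{0,1\}$ and $P_j=P\cap C^*_j$, bipartiteness of $C^*$ gives $N(v)\cap P_j=\emptyset$. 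Since $P$ is connected it has a unique bipartition, which up to swap must coincide with the restriction of $(B_0,B_1)$ to $P$, so there is $i\in\{0,1\}$ with $P_j\subseteq B_i$. Then
\[
N(v)\cap Z \;=\; N(v)\cap P \;\subseteq\; P_{1-j} \;\subseteq\; B_{1-i},
\]
so $X_i\cap Z=\emptyset$, and $B'=\comp(\{v\}\cup(B_i\setminus N(v))\cup B_{1-i})\in\neighs(B,v)$ contains $Z\cup\{v\}$, delivering $|B'\scap B^*|\ge|Z|+1$ as required.

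For the delay, $\neighs(B)$ evaluates $\neighs(B,v)$ for $O(n)$ choices of $v$; each call performs $O(m)$ set manipulations plus a single call to $\comp(\cdot)$, which I would implement with a parity-augmented union--find to detect odd cycles incrementally, running in $O(m+n\iack(n))$ time (with the full bookkeeping deferred to Section~\ref{sec:bitime}). Hence $\neighs(B)$ costs $O(n(m+n\iack(n)))$, and the alternative-output method of~\cite{Uno2003} propagates this bound to the delay. The main obstacle is the non-connected proximity analysis: because $Z$ may be spread across several components of $B^*$, one first needs to localize attention to the component $C^*$ containing $v$ in order to reuse the parity argument from the connected case; once this localization is in place, the single-component reasoning from Lemma~\ref{lem:bip-con-scap} goes through essentially verbatim.
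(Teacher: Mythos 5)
Your proposal is correct and follows essentially the same route as the paper: the paper's proof also localizes the proximity $B\scap B^*$ to the component of $B^*$ containing the canonical extender (noting that only neighbors of $v$ are removed, all of which lie in that component) and then invokes the connected-case argument of Lemma~\ref{lem:bip-con-scap}, which you simply inline via the unique-bipartition parity argument, before bounding the delay by $O(n)$ calls to $\neighs(B,v)$ at $O(m+n\iack(n))$ each via Lemma~\ref{lem:bip:comp}. The only cosmetic difference is that you route correctness explicitly through Definition~\ref{def:crecon} and Theorem~\ref{thm:canonical-reconstructio} rather than arguing the proximity increase directly, which changes nothing of substance.
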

\begin{proof}
Consider the solutions $B$ and $B^*$. Let $b^*_1,\ldots,b^*_{|B^*|}$ be the canonical ordering of $B^*$ by Definition~\ref{def:neigh-bip-non}, $B\scap B^* = b^*_1,\ldots,b^*_{i}$, and $u=b^*_{i+1}$ the canonical extender for $B,B^*$. Let $C$ be the connected component of $B^*$ containing $u$. Since all the neighbors of $u$ in $B^*$ must be in its same connected component $C^{B^ *}_x$, and the neighbouring function (Definition~\ref{def:neigh-bip-non}) only removes neighbors of $u$ from $B$, the function may not remove from $B$ any vertex of $B^*$ that is \textit{not} in $C$. As for vertices in $C$, $B\scap B^*$ contains a (possibly empty) prefix of its BFS order, which is itself a connected bipartite subgraph in canonical order. By the correctness of Lemma~\ref{lem:bip-con-scap}, for either $B' = \comp( \ccomp_{v} ( \{v\} \cup (B_0 \setminus N(v)) \cup B_{1} ) ) $ or $B' = \comp( \ccomp_{v} ( \{v\} \cup (B_1 \setminus N(v)) \cup B_{0} ) )$, this prefix is expanded with $u$, giving us $B'\scap B^* \supseteq (B\scap B^*)\cup\{u\}$ and proving correctness.

As for the delay, we can see that the cost of $\neighs(B)$ is bounded as for the connected case by $O(n)$ times the cost of $\neighs(B,v)$, which is in turn bounded by $O(m + n\iack(n))$ by Lemma~\ref{lem:bip:comp}, proving the statement.


\end{proof}

\subsection{Maximal Edge Bipartite Subgraphs}
\label{sec:ebi}

Finally, we show how to adapt the above algorithm to Maximal \textit{Edge} Bipartite Subgraphs, where edge-induced subgraphs are denoted by a set of edges, rather than vertices.  In the following, given two sets of vertices $A$ and $B$, let $E(A,B)$ be the set of edges with one endpoint in $A$ and the other in $B$. We observe that the \textit{Maximal} Edge Bipartite Subgraphs of a connected graph are always connected, otherwise some edge could be added to joint components without creating cycles; by the same logic they span all vertices, and may thus be represented by simply a bipartition $\langle B_0,B_1 \rangle$ of $V(G)$, where the bipartite subgraph corresponds to the edges in $E(B_0,B_1)$. For readability, we use the shorthand $E_B \equiv E(B_0,B_{1})$ to refer to the edges of the bipartite subgraph $B$.

We also observe that the problem is hereditary and allows for a polynomial time computable $\comp(\cdot)$ function. 
We define the canonical order of a solution $B$ by taking the canonical order $ b_1, \ldots, b_{|B|} $  of the \textit{vertices} of $G[B]$ according to Definition~\ref{def:bip-con-canon},\footnote{Note that the vertices of $G[B]$ are all of $V(G)$, but to compute the canonical order we need to consider only the edges in the bipartite subgraph $G[B]$.} then taking the edges of $B$ in increasing order of their \textit{latter} vertex in the vertex order, and breaking ties by increasing order of the earlier endpoint. This essentially corresponds to ``building'' $B$ in a similar fashion as in the induced version, but adding one edge at a time incident to the newly selected vertex. The removables for an edge $e = \{a,b\}$, where $a < b$, are as follows.

\begin{itemize}
    \item $X_0 = (E_B \setminus N_E(a))\cup \{e\} )$,
    \item $X_1 = (E_B \setminus N_E(b))\cup \{e\} )$.
\end{itemize}

The principle behind the neighboring function is different but inspired by the induced case: rather than taking a vertex out of the solution and trying to add it to $B_0$ or $B_1$, we take an edge $e = \{a,b\}$ with both endpoints in the same $B_i$, and try to move the two vertices $a$ and $b$ to opposite sides of the bipartition.

This can be achieved by including the edge $e$ in the solution, and then, to preserve the subgraph being bipartite, removing either $N_E(a)$ or $N_E(b)$ from it. Finally we apply the $\comp(\cdot)$ function to obtain a solution that is maximal. 

More formally, recalling $E_B \equiv E(B_0,B_{1})$, we define $\neighs(B)$ as 

$$\bigcup\limits_{e=\{a,b\}\in E(G)\setminus E_B}\{
\comp((E_B \setminus N_E(a))\cup \{e\}),\; \comp((E_B \setminus N_E(b))\cup \{e\})
\}$$






Consider two solution $B$ and $B^*$, with $e_1, \ldots, e_{|E(B^*_0,B^*_1)|}$ being the canonical order of $B^*$. Furthermore, let $B\scap B^* = \{e_1, \ldots, e_h\}$ and $\de = e_{h+1} = \{a,b\}$ the canonical extender, i.e., the first edge in the ordering of $B^*$ which is not in $B$.

By the definition of the canonical ordering, we have that $\{e_1, \ldots, e_h\}$ is a connected bipartite subgraph, meaning that it allows a unique bipartition $B' = B'_0, B'_1$ of its incident vertices. As $\{e_1, \ldots, e_h\}\cup \{\de\}$ is also a connected bipartite subgraph, for some $j \in \{0,1\}$ we must have both $N(a)\cap B'_j = \emptyset$ and $N(b)\cap B'_{1-j} = \emptyset$.

Since $B'$ is included in $B$, we must have either (i) $B'_j\subseteq B_i$ and $B'_{1-j}\subseteq B_{1-i}$ or (ii) $B'_{1-j}\subseteq B_i$ and $B'_{j}\subseteq B_{1-i}$. 
Recall now that both $a$ and $b$ are assumed wlog to be in $B_i$, meaning that $N(a)\cap B_i = N(b)\cap B_i = \emptyset$.
In the (i) case, we have $N(b) \cap  B_{1-i} \cap B'_{1-j} = \emptyset$, so removing $N_E(b)$ from $B$ may not remove any edge of $B'$. 
Thus 
$\comp((E_B \setminus N_E(b))\cup \{\de\})$, which belongs to $\neighs(B)$,
will contain $(B\scap B^*)\cup\{e\}$.

In the (ii) case, we have $N(a) \cap  B_{1-i} \cap B'_{1-j} = \emptyset$, removing $N_E(a)$ may not remove any edge of $B'$. 
Thus 
$\comp((E_B \setminus N_E(a))\cup \{\de\})$, which also belongs to $\neighs(B)$,
will contain $(B\scap B^*)\cup\{e\}$.

This means that in both cases, $\neighs(B)$ will yield a solution $S'$ that contains $(B\scap B^*)\cup\{e\}$, i.e., such that $|S'\scap S^*| > |S \scap S^*|$.

As the complexity is bounded by $O(m)$ calls to the $\neighs(B_i,B_{1-i},e)$ function, whose cost is again bounded by that of $\comp(\cdot)$, that is $O(m^2)$ time (By Lemma~\ref{lem:bip:comp}), the following theorem holds:
\begin{theorem}
Maximal (edge-induced) Bipartite Subgraphs can be listed in $O(m^3)$ time delay.
\end{theorem}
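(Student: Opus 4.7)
The plan is to verify the three conditions of canonical reconstruction (Definition~\ref{def:crecon}) for the Maximal Edge Bipartite Subgraphs problem, and then apply Theorem~\ref{thm:canonical-reconstructio} together with a complexity analysis of $\neighs(\cdot)$. The problem is hereditary on edges, so a polynomial-time $\comp(\cdot)$ exists by greedily adding edges whose inclusion preserves bipartiteness; I would implement it analogously to Lemma~\ref{lem:bip:comp}, checking bipartiteness with a union-find structure on the two sides of the bipartition.

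First, I would argue that any prefix $e_1,\ldots,e_i$ of the canonical order of $B$ is a (non-maximal) solution. By construction, the prefix corresponds to edges selected according to a BFS order of the vertices of $G[B]$, and any prefix of a BFS-built edge sequence induces a connected subgraph of $G[B]$; since bipartite edge-induced subgraphs are hereditary, such a prefix is bipartite. This gives Definition~\ref{def:crecon}.(\ref{item:canonical1}).

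The heart of the proof is Definition~\ref{def:crecon}.(\ref{item:canonical2}): given $B$ and the canonical extender $\de = \{a,b\}$ for some $B^*$, I must exhibit a removable $X \in \{X_0,X_1\}$ with $(B\scap B^*)\cap X=\emptyset$. The argument I would run is the one already sketched right before the theorem statement: let $B'_0,B'_1$ be the unique bipartition of the connected bipartite subgraph induced by $B\scap B^*$, and consider in which side $B_i$ (of the bipartition currently held by $B$) each of $a,b$ lies and on which sides of $B'$ their forbidden neighbors cannot appear. By the canonical order both $a$ and $b$ are placed on the same side $B_i$ in $B$, so $N(a)\cap B_i = N(b)\cap B_i = \emptyset$; combined with the fact that $(B\scap B^*)\cup\{\de\}$ is a bipartite subgraph of $B^*$, one of the two candidate removables $N_E(a)\cup\{\de\}$ or $N_E(b)\cup\{\de\}$ (intersected with $E_B$) must avoid $B\scap B^*$ entirely. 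Applying $\comp(\cdot)$ then yields a neighbor $B'$ with $(B\scap B^*)\cup\{\de\}\subseteq B'$, i.e., $|B'\scap B^*| > |B\scap B^*|$. I expect the main obstacle to be being fully careful in the case split on whether $a,b$ lie in $B_i$ versus $B_{1-i}$ relative to the induced bipartition $B'_0,B'_1$ of the proximity; this mirrors exactly the reasoning in Lemma~\ref{lem:bip-con-scap}, adapted from vertex removal to edge removal.

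Finally, for the delay bound: $\neighs(B)$ iterates over $O(m)$ candidate edges $\de\in E(G)\setminus E_B$; for each $\de$ it computes two removables and invokes $\comp(\cdot)$ twice. The removable construction is $O(m)$, while $\comp(\cdot)$ for the edge-bipartite problem costs $O(m^2)$ (adding each edge if compatible, testing bipartiteness each time). This yields $O(m)\cdot O(m^2)=O(m^3)$ per neighboring computation, and by Theorem~\ref{thm:strong} this is the delay of Algorithm~\ref{alg:general}. Combined with the above, Theorem~\ref{thm:canonical-reconstructio} and Theorem~\ref{thm:strong} conclude the $O(m^3)$-delay enumeration claimed.
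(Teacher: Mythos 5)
Your proposal is correct and follows essentially the same route as the paper: the same BFS-derived edge canonical order with connected prefixes, the same two removables obtained by adding $\de=\{a,b\}$ and stripping $N_E(a)$ or $N_E(b)$, the same case split on how the unique bipartition of the connected prefix $B\scap B^*$ aligns with $\langle B_0,B_1\rangle$, and the same $O(m)\times O(m^2)$ accounting giving $O(m^3)$ delay. One small correction: the fact that $a$ and $b$ lie on the same side of $B$ follows from maximality (every edge between $B_0$ and $B_1$ is already in $E_B$, so any edge outside the solution has both endpoints on one side), not from the canonical order, though this does not affect the rest of your argument.
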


\subsection{Complexity}
\label{sec:bitime}

In order to complete the analysis, let us look at the cost $\ccost$ for the three variants considered:

\begin{lemma}\label{lem:bip:comp}
$\ccost$ is $O(m)$ for Maximal Connected Induced Bipartite Subgraphs, $O(m+n\iack(n))$ for Maximal Induced Bipartite Subgraphs, and $O(m^2)$ for Maximal Edge-induced Bipartite Subgraphs,
where $n$ and $m$ are the number of vertices and edges, and $\iack(\cdot)$ is the functional inverse of the Ackermann function~\cite{Tarjan75UF}.
\end{lemma}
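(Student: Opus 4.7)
For each of the three variants I would implement $\comp(\cdot)$ as the natural greedy procedure: starting from the input solution $B$, repeatedly pick a candidate element (vertex for the induced cases, edge for the edge-induced case) that is not yet in $B$, and add it if doing so preserves bipartiteness (and, where required, connectivity). The procedure terminates when no further element can be added, which, since the property is (connected-)hereditary, guarantees a maximal solution containing~$B$. The three bounds then reduce to the efficiency of the feasibility test per candidate.

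\textbf{Connected induced case.} I would maintain a queue of frontier vertices $v \in V(G)\setminus B$ with $N(v)\cap B\ne\emptyset$. When $v$ is dequeued, I scan $N(v)\cap B$: if it lies entirely in $B_0$, then $v$ can be safely added to $B_1$ (and symmetrically); otherwise $v$ is discarded, since adding it would create an edge inside one side of the bipartition. When $v$ is accepted, its neighbors in $V(G)\setminus B$ are pushed onto the queue if not already present. A charging argument---each edge of $G$ incident to the final $B$ is scanned a constant number of times---gives the $O(m)$ bound.

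\textbf{Induced case.} The subtlety is that adding $v$ may merge several previously disjoint bipartite components of $B$, whose 2-colorings were chosen independently and must now be reconciled through~$v$. I would maintain a union-find structure with parity: each element stores the parity of its distance to the root of its tree, so a \texttt{find} returns both the component identifier and the side of the current bipartition that the element occupies. To test a candidate $v$, I call \texttt{find} on every neighbor of $v$ in $B$, compute the side forced upon $v$ (if consistent), and, if feasible, union the involved components with the appropriate parity flip. With union-by-rank and path compression, the $O(m)$ find operations together with the at most $n-1$ unions over $n$ vertices contribute $O(m + n\iack(n))$ total, which dominates the rest of the work.

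\textbf{Edge-induced case.} I would iterate over each edge $e\in E(G)\setminus E_B$ and test bipartiteness of $E_B\cup\{e\}$ directly by a BFS 2-coloring in $O(m)$ time, adding $e$ whenever the test succeeds. Since there are $O(m)$ candidate edges, the total cost is $O(m^2)$.

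\textbf{Main obstacle.} The bottleneck of the argument is the induced (non-connected) case: a naive recomputation of the bipartition after each tentative vertex insertion would cost $\Theta(nm)$, and the near-linear bound really does require the union-find-with-parity technique together with the standard amortized analysis that cleanly separates the $O(m)$ edge-scanning work from the $O(n\iack(n))$ cost of the union-find operations on a universe of $n$ vertices. The other two cases are comparatively routine greedy/BFS analyses.
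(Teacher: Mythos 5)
Your proposal is correct and follows essentially the same route as the paper: a greedy completion that tests each candidate vertex (or edge) exactly once, with an $O(|N(v)|)$ frontier/charging analysis for the connected case, union-find over components (your parity-augmented version is just a concrete implementation of the paper's ``track the two partitions per component'') giving $O(m+n\iack(n))$ for the induced case, and an $O(m)$ bipartiteness test per candidate edge giving $O(m^2)$ for the edge-induced case. The only point worth stating explicitly, as the paper does, is that a candidate that fails the test can never become addible later (components only merge and relative sides within a component are fixed), which is what justifies testing each candidate only once.
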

\begin{proof}
$\ccost$ is a bound for computing the $\comp(S)$ function as well as a canonical order. As the latter is computed by a BFS, it takes $O(m)$ time in all three cases, let us then focus on $\comp(S)$:

Firstly, observe that using $O(n)$ space and standard data structures, we can mark to which bipartition each vertex of $S$ belongs to (using $O(m)$ time to compute the initial bipartition of $S$), and which vertices have been already tested for addition, in $O(1)$ time per vertex. If a vertex fails the test to be added, it will not be possible to add it later on, so the total cost of $\comp(S)$ comes from selecting which vertices to test, and testing each of these vertices once.

For the connected case, we must test only vertices adjacent to $S$ (in no particular order): we can find these initial ``candidates'' in $O(\sum_{x\in S}|N(x)|) = O(m)$ time, marking each vertex as tested the first time so it is not tested again. Whenever trying to add a vertex $v$ to $S$, we must pay $O(|N(v)|)$ time to check that all its neighbors belong to the same bipartition of $S$, in which case $v$ belongs to the other one.
If $v$ is not addible, we immediately discard it. If instead we add it to $S$, we mark it with the correct bipartition, and update the list of candidate with its neighbors in $O(|N(v)|)$ time. As each vertex is only tested once and only added once, the total cost of $\ccost$ is bounded by $O(m)$. 

For the non-connected case, we further keep track of connected components via union-find~\cite{Tarjan75UF} (actually, for each connected component we will keep track of its two partitions). To test a vertex $v$ we must just check that it does not connect to two vertices in different partitions $C_0$ and $C_1$ of the same connected component $C$ of $X$: this can be done in $O(|N(v)|)$. Updating the union-find can be done in total $O(n \iack(n))$, where $\iack(\cdot)$ is the functional inverse of the Ackermann function~\cite{Tarjan75UF}.\footnote{As $\iack(n)$ grows extremely slowly, we remark that $\iack(n)$ is in essence $O(1)$ on real, finite, graphs.}

Once we tested a vertex, if this was not addible, it will never become addible, thus we only need to test each vertex once. The cumulative cost for testing will be the sum of the degrees of all tested vertices, that is bounded by $O(m)$.
The total time is thus $O(m + n \iack(n))$.

Finally, for Maximal Edge-induced Bipartite Subgraphs, we need to test each edge for addition just once as the property is hereditary. For each test we can simply check if the resulting graph is bipartite, which takes $O(m)$ time, for a total cost of $O(m^2)$.
\end{proof}


\section{Maximal k-Degenerate Subgraphs}\label{sec:list:first}
We here consider the enumeration of maximal $k$-degenerate subgraphs, giving an algorithm that has polynomial delay when $k$ is bounded.

A graph $G$ is $k$-degenerate if it allows an elimination order where each vertex has degree at most $k$ when deleted. 
Equivalently, it is $k$-degenerate if no subgraph of $G$ is a $(k+1)$-core, that is a graph where each vertex has degree greater or equal to $k+1$. 
The \textit{degeneracy} $d$ of $G$ is the smallest $k$ for which $G$ is $k$-degenerate.

A \textit{degeneracy ordering} of $G$ is an order of its vertices in which each vertex $v$ has at most $d$ neighbors occurring later than $v$, where $d$ is the degeneracy of $G$. It is well known that a degeneracy ordering can be found in $O(m)$ time by iteratively removing the vertex of smallest degree~\cite{DBLP:journals/corr/cs-DS-0310049}. To remove ambiguity, when multiple vertices have the same degree we can remove the one with smallest label.

The degeneracy is a well-known sparsity measure~\cite{EppsteinLS13}; its definition generalizes that of independent sets ($0$-degenerate graphs) and trees and forests (connected and non-connected $1$-degenerate graphs). Furthermore, degeneracy is linked to planarity as all planar graphs are $5$-degenerate, while outerplanar graphs are $2$-degenerate~\cite{lick1970k}.

We are interested in listing all maximal $k$-degenerate subgraphs of a graph $G$. An output-polynomial algorithm is known for maximal \textit{induced} $k$-degenerate subgraphs if $G$ is chordal~\cite{DBLP:conf/cocoon/ConteKOUW17}, but no output-polynomial results are known for general graphs.


\subsection{Maximal Induced k-Degenerate Subgraphs}\label{sec:kdegen:ind}

A subgraph of a $k$-degen\-erate graph is $k$-degenerate so the property is hereditary, and degeneracy can be computed in linear time so we can implement the $\comp(\cdot)$ function in polynomial time. 

Given a maximal induced $k$-degenerate subgraph $S$, we define its \textit{canonical order} as the \textit{reverse} of its degeneracy ordering, i.e., an ordering $s_1,\ldots, s_{|S|}$, such that $s_{|S|},\ldots, s_1$ is the degeneracy ordering of $S$. In the case of non connected subgraphs, this is adapted by considering the connected components one at a time in lexicographical order. Then, the proximity is defined by Definition~\ref{def:proximity}.

In the resulting ordering we have $|N(s_i)\cap \{s_1,\ldots,s_{i-1}\}| \le k$, i.e., the neighbors of $s_i$ in $S$ that precede $s_i$ in the canonical order are at most $k$. This is the key property that gives us the intuition for the algorithm: the removables correspond to all neighbors of the canonical extender \textit{except} a set of size at most $k$.
The neighboring function is obtained as follows.

\begin{definition}[Neighboring function for Maximal Induced $k$-Degenerate Subgraphs]\label{def:kdeg:neigh}
$$\neighs(S) = \bigcup\limits_{v\in V(G)} \neighs(S,v)$$
Where 
$\neighs(S,v) = \{\comp(\{v\} \cup S \setminus (N(v) \setminus K) : K\subseteq (S\cap N(v)) \text{ and } |K|\le k\}$
\end{definition}

Less formally, when computing $\neighs(S,v)$, we try to add $v$ to $S$ as canonical extender. Since $S$ is maximal, this violates the degeneracy constraint, so we remove all neighbors of $v$ except at most $k$ (the \textit{removable} set being $N(v)\setminus K$). The resulting subgraph $D = \{v\} \cup S \setminus (N(v) \setminus K)$ is $k$-degenerate: as $D\setminus \{v\}$ is $k$-degenerate as it is a subgraph of $S$, and any degeneracy ordering of $D\setminus\{v\}$ becomes a $k$-degenerate ordering for $D$ if we prepend $v$ in the beginning, because $v$ has at most $k$ neighbors in $D$. This means $N(v)\setminus K$ is a suitable removable according to Definition~\ref{def:crecon}.



We now show how these choices for $K$ satisfy Definition~\ref{def:crecon}: we iteratively try for $K$ all possible subsets of $S\cap N(v)$ of size at most $k$. These combinations, i.e., the number of removables, are $O(\sum_{i\in\{1, \ldots, k\}}\binom{|N(v)|}{i}) = O(n^k)$, which is polynomial when $k$ is bounded.

Let us now look at a target solution $S^*$ such that $v$ is the canonical extender for $S,S^*$, and let $s^*_1,\ldots,s^*_{|S^*|}$ be the canonical order of $S^*$: if $v = s_i$ in this order, it follows that $S\scap S^* = \{s^*_1,\ldots,s^*_{i-1}\}$, and $|N(v)\cap \{s^*_1,\ldots,s^*_{i-1}\}| \le k$. 

We also have that $\{s^*_1,\ldots,s^*_{i-1}\}\subseteq S$, so $N(v)\cap \{s^*_1,\ldots,s^*_{i-1}\} \subseteq N(v)\cap S$: since we try as $K$ all possible subsets of $N(v)\cap S$ of size at most $k$, we will eventually have $K =  N(v)\cap \{s^*_1,\ldots,s^*_{i-1}\}$.

At this point the neighboring function will yield $S' = \comp(\{v\} \cup S \setminus (N(v) \setminus K) = \comp(\{v\} \cup S \setminus (N(v) \setminus (S\scap S^*))$. In other words, we only remove some neighbors of $v$ from $S$, but all the neighbors that are part of $S\scap S^*$ are not removed, thus $S' \supseteq \{v\} \cup (S\scap S^*)$, meaning $|S'\scap S^*| > |S\scap S^*|$.





As for the running time, let us consider the cost $\ccost$ of a $\comp(X)$ call. $k$-degenerate graphs are hereditary, i.e., if a vertex is not addible it will not become addible later, so we need to test each $v\in V(G)\setminus X$ for addition at most once. As testing the degeneracy takes $O(m)$ time, $\ccost = O(mn)$ time.

Consider now $\neighs(S,v)$: firstly, we enumerate each possible $K \subseteq N(v)\cap S$, which takes $O(\sum_{i\in\{1,\ldots,k\}}\binom{|N(v)|}{i}) = O(|N(v)|^k)$ time. For each, we run $\comp(\{v\} \cup (S\setminus N(v)) \cup K)$, which takes $O(mn)$ time. The total cost is $O(n^{k+1} m)$ time.

The problem is thus proximity searchable, and the delay of the listing algorithm is the cost of $\neighs(S)$, i.e., running $O(n)$ times $\neighs(S,v)$ (maintaining $\sol$ is negligible). More formally:

\begin{theorem}\label{thm:kdegen}
Maximal Induced $k$-degenerate Subgraphs are proximity searchable when $k$ is constant, and can be enumerated in $O(mn^{k+2})$ time delay.
\end{theorem}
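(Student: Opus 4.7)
By Theorem~\ref{thm:canonical-reconstructio}, it suffices to exhibit a canonical reconstruction (Definition~\ref{def:crecon}) for the problem; the proximity searchability claim and the polynomial-delay algorithm then follow, with the delay equal to the cost of one call to $\neighs(\cdot)$ by Theorem~\ref{thm:strong}. The neighboring function is already given in Definition~\ref{def:kdeg:neigh}, so the work is to verify the three conditions of Definition~\ref{def:crecon} and then bound the runtime.

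First, I would fix the canonical order of a maximal induced $k$-degenerate subgraph $S$ as the \emph{reverse} of a degeneracy ordering of $G[S]$, with ties broken by smallest label, and with connected components processed in lexicographic order (as sketched in Section~\ref{sec:kdegen:ind}). Condition~(\ref{item:canonical1}) of Definition~\ref{def:crecon} is immediate since $k$-degeneracy is hereditary, so every prefix of the canonical order induces a $k$-degenerate subgraph. Condition~(3), the existence of a polynomial-time $\comp(\cdot)$, follows because degeneracy is testable in $O(m)$ time and the property is hereditary, so a greedy $\comp(\cdot)$ tests each $v \notin S$ at most once, giving $\ccost = O(mn)$.

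The substantive step is condition~(\ref{item:canonical2}). For a candidate extender $v\notin S$ I would take the removables $\mathcal{X} = \{\, N(v)\setminus K : K \subseteq N(v)\cap S,\ |K|\le k\,\}$. There are $O(n^k)$ such sets, polynomial for bounded $k$, and each gives a valid $k$-degenerate subgraph: prepending $v$ to any degeneracy ordering of $S\setminus(N(v)\setminus K)$ preserves $k$-degeneracy, since $v$ has only its $\le k$ neighbors in $K$ remaining. For the disjointness requirement, given any $S^*$ whose canonical extender from $S$ is $v$, write $S\scap S^* = \{s^*_1,\ldots,s^*_{i-1}\}$ and $v = s^*_i$ in $S^*$'s canonical order. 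The whole point of choosing the \emph{reverse} degeneracy order is that $s^*_i$ has at most $k$ neighbors among the \emph{earlier} vertices of $S^*$'s canonical order, i.e.\ $|N(v)\cap (S\scap S^*)|\le k$. Since $S\scap S^* \subseteq S$, the set $K := N(v)\cap (S\scap S^*)$ is a legal choice of size $\le k$, and the corresponding removable $N(v)\setminus K$ is disjoint from $S\scap S^*$, as required.

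Finally, for the delay bound: one call to $\comp(\cdot)$ costs $O(mn)$; for each $v$, the function $\neighs(S,v)$ enumerates $O(n^k)$ choices of $K$ and runs $\comp(\cdot)$ on each, for $O(mn^{k+1})$; summing over $v\in V(G)$ gives $O(mn^{k+2})$, which is the claimed delay by Theorem~\ref{thm:strong} (and the cost of maintaining $\sol$ is negligible). The main obstacle is the disjointness argument in condition~(\ref{item:canonical2}); everything hinges on selecting a canonical order under which the extender has only a bounded number of earlier neighbors, so that brute-force searching over candidate $K$'s remains polynomial when $k$ is constant.
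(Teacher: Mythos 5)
Your proposal is correct and matches the paper's own argument essentially step for step: the same reverse-degeneracy canonical order, the same removables $N(v)\setminus K$ over all $K\subseteq N(v)\cap S$ with $|K|\le k$, the same disjointness argument via the at-most-$k$ earlier neighbors of the canonical extender, and the same $O(mn)\cdot O(n^k)\cdot O(n)=O(mn^{k+2})$ delay accounting. No gaps to report.
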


We now observe that $1$-degenerate subgraphs are exactly forests, and the connected ones are trees; setting $k=1$ we immediately obtain polynomial-delay algorithms for listing Maximal Induced Forests that could be easily adapted to Maximal Induced Trees. However, an ad-hoc analysis, delayed to Section~\ref{sec:indtrees}, shows we can obtain algorithms with better delay, and even reduce the space usage to polynomial for these problems.

\subsection{Maximal Edge-induced \textit{k}-Degenerate Subgraphs}

We now consider Maximal Edge-induced \textit{k}-Degenerate Subgraphs, i.e., maximal sets of edges $E\subseteq E(G)$ that correspond to a $k$-degenerate subgraph of $G$.
An algorithm for this case can be obtained by exploiting the structure of the induced one.
In the following, let $N_E(v)$ be the \textit{edge neighborhood} of $v$, i.e., the set of edges of $G$ incident to the vertex $v$.
Note that edge-induced $k$-degenerate subgraphs are also hereditary, and so $\comp(\cdot)$ takes polynomial time.

Let $S$ be an edge-induced $k$-degenerate subgraph, and let $v_1, \ldots , v_l$ be the canonical order of the vertices of $G[S]$ (i.e., the graph containing only edges in $S$ and vertices incident to them), as in Section~\ref{sec:kdegen:ind}.

The canonical ordering of $S$ is obtained by selecting the edges of $B$ by increasing order w.r.t. their \textit{later} endpoint in the vertex order, breaking ties by order of the other (earlier) endpoint.

This corresponds to selecting the vertices $v_1, \ldots , v_l$ in order, and for each adding the edges towards the preceding vertices one by one. Whenever all the edges from $v_i$ to the preceding vertices have been added, we can observe that the graph corresponds to that induced in $G[S]$ by the vertices $\{v_1, \ldots , v_i\}$. By the canonical order of the vertices defined in Section~\ref{sec:kdegen:ind}, this means $v_i$ has at most $k$ neighbors in $\{v_1, \ldots , v_{i-1}\}$.

Again, the proximity $\scap$ is given by Definition~\ref{def:proximity}.

We can now define the neighboring function:

\begin{definition}[Neighboring function for Maximal Edge-induced $k$-Degenerate\linebreak Subgraphs]\label{def:e:kdeg:neigh}
Let $S$ be a maximal edge-induced $k$-degenerate subgraph, and $e = \{a,b\}$ an edge not in $S$.
We define:
$$\neighs(S) = \bigcup\limits_{e=\{a,b\} \in E\setminus S} \neighs(S,a,b) \cup \neighs(S,b,a)$$
Where
$\neighs(S,a,b) = \{\comp(\{e\} \cup (S\setminus N_E(a)) \cup K) : K\subseteq (S\cap N_E(a)) \text{ and }$\linebreak $ |K|\le k-1\}$
\end{definition}

In other words, we add an edge $e=\{a,b\}$ to $S$, then force $a$ (or, respectively, $b$) to have degree at most $k$, by removing all other edges incident to it except at most $k-1$, as well as adding $e$.
The resulting graph is $k$-degenerate as $a$ (respectively $b$) has degree $k$, and the residual graph is a subgraph of $S$, which is $k$-degenerate, so it is possible to compute a degeneracy ordering.


Consider now two solutions $S$, $S^*$, with $S \scap S^* = \{e_1, \ldots , e_h\}$, and let $\de = \{x,y\}$ be the earliest edge in the canonical order of $S^*$ that is not in $S$, i.e, $e_{h+1}$.
Assume wlog that $x$ comes before $y$ in the canonical (vertex) ordering of $S^*$. In this ordering, $y$ has at most $k$ neighbors preceding it, i.e, $|\{e_1, \ldots , e_h\} \cap N_E(y)| \le k$. Furthermore, by the same definition, all edges incident to $y$ that precede $\de$ in the ordering must be between $y$ and another vertex which comes earlier than $x$, and thus than $y$, in the ordering, thus they may be at most $k-1$ ($k$, including $\de$ itself, from $y$ to $x$). Let $K'$ be the set of these edges (not including $\de$).

When computing $\neighs(S,y,x)$, we consider all subsets of edges in $S$ incident to $y$ of size at most $k-1$. By what stated above, at some point we will consider exactly $K'$. In this case, we will obtain $S' = \comp(\{\de\} \cup (S\setminus N_E(y)) \cup K')$. This must contain all edges in $\{e_1, \ldots , e_h\}$, as we only removed edges neighboring $y$, but all those in $\{e_1, \ldots , e_h\}$ were in $K'$. Thus we have $\{e_1, \ldots , e_h\} \cup \de = \{e_1, \ldots , e_h, e_{h+1}\} \subseteq S'$, which implies $|S'\scap S^*| > |S\scap S^*|$.
The case in which $x$ comes after $y$ in the ordering is similarly satisfied by $\neighs(S,x,y)$.

Finally, we only need to show that $\neighs(S)$ takes polynomial time to compute: indeed this is $O(m)$ times the cost of $\neighs(S,y,x)$, which in turn has the cost of computing $\comp(\cdot)$ once for each possible considered set $K$. These latter are $O(\binom{N_E(y)}{k-1})$, and the $\comp(\cdot)$ can be easily implemented in $O(m^2)$ (as above, testing degeneracy takes $O(m)$ time and each edge needs to be considered at most once for addition since the problem is hereditary), for a total cost that is polynomial when $k$ is constant. We can thus state the following:

\begin{theorem}\label{thm:e:kdegen}
Maximal Edge-induced $k$-degenerate Subgraphs are proximity searchable when $k$ is constant, and can be enumerated with delay $O(\binom{n}{k-1}m^3)$.
\end{theorem}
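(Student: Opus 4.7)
The plan is to establish Theorem~\ref{thm:e:kdegen} by showing the problem admits a canonical reconstruction in the sense of Definition~\ref{def:crecon}, so that Theorem~\ref{thm:canonical-reconstructio} gives proximity searchability, with the stated delay following from a cost analysis of $\neighs(\cdot)$. The proof will parallel the structure used for the induced version in Theorem~\ref{thm:kdegen} and for the edge-induced bipartite case, adapting the two ingredients: (a) the canonical order on edges derived from the vertex degeneracy order, and (b) the removables obtained by stripping all but at most $k-1$ edges incident to one endpoint of the extender.

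First I would verify condition~(\ref{item:canonical1}) of Definition~\ref{def:crecon}: any prefix $e_1, \ldots, e_i$ of the canonical edge order is a (non-maximal) $k$-degenerate subgraph. This follows immediately from hereditariness: an edge subset of a $k$-degenerate graph remains $k$-degenerate. At the same time, the existence of a polynomial-time $\comp(\cdot)$ follows by testing edges one at a time for addition; since the property is hereditary on edges, each edge needs to be tested only once, and each test (checking $k$-degeneracy of the tentative graph) costs $O(m)$, giving $\ccost = O(m^2)$.

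The main step is to verify condition~(\ref{item:canonical2}): for each candidate canonical extender $e = \{x,y\}$, the removables enumerated by $\neighs(S,x,y) \cup \neighs(S,y,x)$ cover all targets $S^*$. Fixing $S^*$ with $S \scap S^* = \{e_1, \ldots, e_h\}$ and extender $\de = e_{h+1} = \{x,y\}$, assume wlog that in the canonical vertex order of $S^*$ the vertex $x$ precedes $y$; the symmetric case is handled by $\neighs(S,x,y)$. By the canonical vertex order being the reverse of a degeneracy ordering, $y$ has at most $k$ neighbors in $S^*$ among the vertices preceding it, so among the edges of $S^*$ incident to $y$ and appearing no later than $\de$ in the canonical edge order, there are at most $k$; since $\de$ itself is one of them, the set $K' := \{e_1, \ldots, e_h\} \cap N_E(y)$ has $|K'| \le k-1$. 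Because $K' \subseteq S \cap N_E(y)$, the enumeration inside $\neighs(S,y,x)$ eventually selects exactly $K = K'$, producing $S' = \comp(\{\de\} \cup (S \setminus N_E(y)) \cup K')$. Edges of $S \scap S^*$ not incident to $y$ are untouched, while those incident to $y$ are preserved by being placed in $K'$, so $\{e_1, \ldots, e_h, \de\} \subseteq S'$ and hence $|S' \scap S^*| > |S \scap S^*|$.

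For the delay, the cost of $\neighs(S)$ is dominated by iterating over $O(m)$ edges, for each considering $O(\binom{|N_E(\cdot)|}{k-1}) = O(n^{k-1})$ choices of $K$ across both orientations, and invoking $\comp(\cdot)$ at cost $O(m^2)$ per choice, yielding the advertised $O(\binom{n}{k-1} m^3)$ bound. The main obstacle I foresee is the correctness step: one must be careful that the bound $|K'| \le k-1$ uses \emph{both} that the canonical vertex order of $S^*$ is a reverse degeneracy order (contributing $\le k$) \emph{and} that $x$ is necessarily one of the earlier neighbors of $y$ in $S^*$ (consuming one slot), so that only $k-1$ remain to be ``guessed'' by the removable mechanism; without this tightening, the polynomial bound on the removables enumerated for fixed $k$ would not suffice.
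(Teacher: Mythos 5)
Your proposal is correct and follows essentially the same route as the paper: the same edge canonical order derived from the reverse degeneracy ordering of the vertices, the same neighboring function $\neighs(S,a,b)$ with removables keeping at most $k-1$ edges at the later endpoint, the same counting argument (at most $k$ earlier neighbors of the later endpoint, one slot consumed by the extender itself, hence $|K'|\le k-1$), and the same $O(m)\cdot O(\binom{n}{k-1})\cdot O(m^2)$ delay analysis. No gaps to report.
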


\section{Maximal Chordal Subgraphs}\label{sec:chordal}

\subsection{Maximal Induced Chordal Subgraphs}
A graph $G$ is chordal if every cycle in $G$ of length greater than 3 has a chord, i.e., an edge between two non-consecutive vertices in the cycle. 
Chordal graphs have been widely studied, and it is known that several problems which are challenging on general graphs become easier on chordal graphs (see, e.g.,~\cite{Chandran2001,Okamoto2005,blair1993introduction}). While the problem of finding a largest chordal subgraph has been studied~\cite{bliznets2016largest}, to the best of our knowledge there are no known enumeration results.

We here aim at listing Maximal Induced Chordal Subgraphs of $G$.
The problem is hereditary, and chordality can be tested in $O(m)$ time~\cite{rose1976algorithmic}, thus $\comp(\cdot)$ takes $O(mn)$ time.

A (sub)graph is chordal iff it allows a \textit{perfect elimination ordering} $\{v_1, \ldots, v_n\}$ of its vertices, i.e., such that for all $i$, $N(v_i) \cap \{v_{i+1},\ldots,v_{n}\}$ is a clique~\cite{DBLP:conf/cocoon/ConteKOUW17}.
We can obtain this by recursively removing simplicial vertices, i.e., vertices whose neighborhood in the residual graph is a clique.\footnote{To remove ambiguity, we can remove the lexicographically smallest when multiple simplicial vertices are present.} 

As the neighbors of a simplicial vertex form a clique, we observe that removing a simplicial vertex cannot disconnect the residual graph.
It is also known that a chordal graph has $O(n)$ maximal cliques, and a vertex $v$ participates in $O(|N(v)|)$ maximal cliques~\cite{DBLP:conf/cocoon/ConteKOUW17}.



We use this to define the canonical order, which is then combined with Definition~\ref{def:proximity} to obtain the proximity function~$\scap$.

\begin{definition}[Canonical Order for Maximal (Connected) Induced Chordal Subgraphs]\label{canon:chordal}

The canonical order $\{s_1, \ldots, s_{|S|}\}$ of $S$ is the \textit{reverse} of its perfect elimination ordering, i.e., such that $\{s_{|S|}, \ldots, s_{1}\}$ is the perfect elimination ordering.
\end{definition}

This way, the neighbors of $v$ that precede $v$ in the ordering form a clique.  
Furthermore, when $S$ is a connected subgraph, any prefix $\{s_1, \ldots, s_{j\le |S|}\}$ of the canonical order induces a connected subgraph, because we can iteratively remove the last vertex, which is always simplicial. This means the canonical order satisfies condition~(\ref{item:canonical1}) of Definition~\ref{def:crecon}, in the case of both Maximal Induced Chordal Subgraphs and Maximal Connected Induced Chordal Subgraphs.
The neighboring function is defined as follows.

\begin{definition}[Neighboring function for Maximal (Connected) Induced\linebreak Chordal Subgraphs]\label{neighboring:chordal}~

We define $\neighs(S) = \bigcup\limits_{v\in V(G)\setminus S} \neighs(S,v)$.

\noindent For the non connected case we define 

$\neighs(S,v)= \{\comp(S \cup \{v\} \setminus ( N(v) \setminus Q )) :$

$Q \text{ is a maximal clique of } G[S \cup \{v\}] \text{ containing v} \}$

\noindent While for the connected case we define 

$\neighs(S,v)=\{\comp( \ccomp_v(S \cup \{v\} \setminus ( N(v) \setminus Q )) ) :$ 

$Q \text{ is a maximal clique of } G[S \cup \{v\}] \text{ containing v} \}$


\end{definition}

Less formally, we add a vertex $v$ to $S$, then remove all its neighbors except one maximal clique $Q$ (meaning the \textit{removable} by definition of canonical reconstruction will be $N(v)\setminus Q$). In the connected case, we further remove vertices not in the connected component of $v$. 

We can easily see that $S \cup \{v\} \setminus ( N(v) \setminus Q )$ is chordal, by showing a perfect elimination ordering: $v$ itself is simplicial as its neighbors form a clique, and can be removed; we can then complete the perfect elimination order as the remaining vertices form an induced subgraph of $S$, which is chordal as induced chordal subgraphs are hereditary.

We now need to prove the last condition; let $S$ and $S^*$ be two solutions, $S\scap S^* = \{s^*_1, \ldots, s^*_h\}$ and $\dv = s^*_{h+1}$ the earliest vertex in the canonical order of $S^*$ not in $S$.

By the canonical order, $N(\dv) \cap (S\scap S^*) = N(\dv) \cap \{s^*_1, \ldots, s^*_h\}$ is a clique. When computing $\neighs(S,\dv)$, as we try all maximal cliques, for some $Q$ we will have $N(\dv) \cap (S\scap S^*) \subseteq Q$. The resulting $S'$ will thus contain all neighbors of $\dv$ in $S\scap S^*$, and thus all of $S\scap S^*$, plus \dv, meaning that $|S'\scap S^*|>|S\scap S^*|$, which proves the correctness the $\neighs(\cdot)$ function.

Finally, $\neighs(\cdot)$ can indeed be computed in polynomial time.
We first need to list all maximal cliques containing $v$ in $G[S \cup \{v\}]$: these correspond exactly to the maximal cliques of $G[(S\cap N(v))\cup \{v\}]$; as $v$ is adjacent to \textit{all} vertices in $(S\cap N(v))$, we can further say that these correspond exactly to all maximal cliques of $G[(S\cap N(v))]$, to which we then add $v$ ($v$ can clearly be added to any clique of $G[(S\cap N(v))]$ since it is adjacent to all its vertices).

This correspondence is important, because $G[(S\cap N(v))]$ is an induced subgraph of $S$, and thus a chordal graph.


Recall now that a chordal graph has $O(n)$ cliques and they can be listed in $O(m)$ time (e.g., by computing a perfect elimination ordering~\cite{rose1976algorithmic}): $G[S\cap N(v)]$ has at most $|N(v)|$ vertices and $O(|N(v)|^2)$ edges, so we can list all the maximal cliques of $G[S\cap N(v)]$ --and thus all maximal cliques of $G[S \cup \{v\}]$ containing $v$-- in $O(|N(v)|^2)$ time, obtaining at most $|N(v)|$ maximal cliques.

The cumulative cost of listing all cliques for each $v\in V(G)\setminus S$ is thus bounded by $O(\sum\limits_{v\in V(G)\setminus S}|N(v)|^2) = O(mn)$ time, and the process yields $O(\sum\limits_{v\in V(G)\setminus S}|N(v)|) = O(m)$ maximal cliques. For each clique $Q$, we must further compute the corresponding $\comp(\cdot)$ call: as the problem is hereditary, again we only need to test each vertex at most once for addition, and a chordality can be tested in $O(m)$ time, the cost $\ccost$ of a $\comp(\cdot)$ is $O(mn)$ time (which dominates the time for checking membership in $\sol$). Furthermore, the same bound applied to the connected case, as we simply need to consider vertices for addition only when they become adjacent to the current solution. Scanning the neighborhoods of the vertices that are added to the solution to find these candidates has an additional cost of $O(m)$ which does not affect the $O(mn)$ bound. The total cost will be $O(mn + m\cdot mn) = O(m^2n)$

We can thus state that:

\begin{theorem}
Maximal Induced Chordal Subgraphs and Maximal Connected Induced Chordal Subgraphs are proximity searchable, and can be listed with $O(m^2n)$ time delay.
\end{theorem}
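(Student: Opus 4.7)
The plan is to apply Theorem~\ref{thm:canonical-reconstructio}: once I show the problem admits a canonical reconstruction in the sense of Definition~\ref{def:crecon}, proximity-searchability follows, and feeding the neighboring function into Algorithm~\ref{alg:general} yields polynomial delay bounded by the cost of $\neighs(\cdot)$. I would use the reverse perfect elimination ordering (Definition~\ref{canon:chordal}) as canonical order and the clique-based $\neighs$ from Definition~\ref{neighboring:chordal}.

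The first step is to verify condition~(\ref{item:canonical1}) of Definition~\ref{def:crecon}, i.e.\ that every prefix $\{s_1,\dots,s_j\}$ of the canonical order of $S$ is itself a (non-maximal) solution. This amounts to observing that the reverse of such a prefix is a perfect elimination ordering of its induced subgraph, so every prefix induces a chordal subgraph; for the connected variant, removing a simplicial vertex from a connected chordal graph preserves connectivity, so iteratively placing a simplicial vertex on top of a connected graph keeps every prefix connected.

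The main obstacle is condition~(\ref{item:canonical2}): showing that among the polynomially many removables $\{N(v)\setminus Q\}$, at least one avoids $S\scap S^*$ for the canonical extender. Writing $S\scap S^* = \{s^*_1,\dots,s^*_h\}$ and $\dv = s^*_{h+1}$, the key idea is that $N(\dv)\cap \{s^*_1,\dots,s^*_h\}$ is a clique, because in the reverse perfect elimination ordering of $S^*$ the vertices preceding $\dv$ that are adjacent to $\dv$ form a clique (this is precisely the PEO property applied at $\dv$). Since this set together with $\dv$ is a clique in $G[S\cup\{\dv\}]$, it extends to some maximal clique $Q$ of $G[S\cup\{\dv\}]$ containing $\dv$; the removable $N(\dv)\setminus Q$ is then disjoint from $S\scap S^*$, exactly as required, and $\comp(\cdot)$ can only expand the proximity. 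I would additionally check that $S\cup\{\dv\}\setminus(N(\dv)\setminus Q)$ is chordal by declaring $\dv$ the first vertex of a PEO (its remaining neighborhood is $Q\setminus\{\dv\}$, a clique) and completing with any PEO of the residual induced subgraph of $S$; for the connected case, restricting to $\ccomp_{\dv}(\cdot)$ preserves both chordality and connectivity and cannot drop any vertex of $S\scap S^*$, since $(S\scap S^*)\cup\{\dv\}$ is a prefix of the canonical order of $S^*$ and hence connected.

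The last step is to verify condition~(3) of Definition~\ref{def:crecon} and to bound the delay. The function $\comp(\cdot)$ runs in $O(mn)$ time since chordality is hereditary and testable in $O(m)$, so each candidate vertex is tried at most once. For the cost of $\neighs(S)$, the crucial observation is that maximal cliques of $G[S\cup\{v\}]$ containing $v$ are in bijection with maximal cliques of the chordal graph $G[S\cap N(v)]$; there are $O(|N(v)|)$ such cliques, listable in $O(|N(v)|^2)$ time via a perfect elimination ordering. Summing over $v\in V(G)\setminus S$ gives $O(m)$ cliques enumerated in $O(mn)$ total time, and each triggers one $\comp(\cdot)$ call of cost $O(mn)$, yielding the claimed $O(m^2n)$ delay for both the connected and non-connected variants; Theorem~\ref{thm:canonical-reconstructio} then concludes.
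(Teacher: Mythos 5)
Your proposal is correct and follows essentially the same route as the paper: the reverse perfect elimination ordering as canonical order, the observation that $N(\dv)\cap(S\scap S^*)$ is a clique that extends to one of the maximal cliques $Q$ tried by $\neighs(S,\dv)$, chordality of $S\cup\{\dv\}\setminus(N(\dv)\setminus Q)$ via $\dv$ being simplicial, and the identical clique-counting and $\comp(\cdot)$ cost analysis giving $O(m^2n)$ delay. Your explicit remark that $\ccomp_{\dv}(\cdot)$ cannot drop vertices of $S\scap S^*$ because $(S\scap S^*)\cup\{\dv\}$ is a connected prefix is slightly more detailed than the paper's treatment of the connected case, but the argument is the same.
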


\subsection{Maximal Edge-induced Chordal Subgraphs}

An algorithm for the edge version can be obtained by defining the canonical order for the edge-induced subgraph in the same way as for Bipartite Subgraphs, based on the canonical ordering of the vertices (see Definition~\ref{def:bip-non-canon}). 

In this problem too, note how all Maximal Edge-induced Chordal Subgraphs of a connected graph are connected, as we can always add edges to a non-connected subgraph without creating cycles, so we do not need to separately consider the connected and non-connected case.

We can then devise a neighboring function $\neighs(S,(x,y))$ like the first one in Definition~\ref{neighboring:chordal}, where we use an edge $(x,y)$ as canonical extender.

When adding an edge $(x,y)$ to a maximal solution $S$, we try as $Q$ all maximal cliques containing either $x$ or $y$ in $G[S]$.
In any $S^*$ (for which $(x,y)$ is the canonical extender) one between $x$ and $y$ will occur later in the canonical ordering; wlog, let us say $y$. 
As the canonical ordering is based on a reversed perfect elimination ordering, the neighbors of $y$ preceding $y$ in the canonical order of $S^*$ form a clique (including $x$ as well). Thus the neighboring funciton will eventually consider a clique $Q$ containing $y$ and all its preceding neighbors, and when this happens the proximity with $S^*$ is extended.
The number of neighboring solutions generated this way will be $O(\sum_{(x,y)\in E(G)}|N(x)|+N|(y)|) = O(mn)$.

The only further requirement is a polynomial time $\comp(\cdot)$ function which needs to be applied to each neighboring solution: this follows from~\cite{HEGGERNES20091}, who prove that edge-induced chordal subgraphs are \textit{sandwich monotone}. In other words, if a edge-induced chordal subgraph $S\subseteq E(G)$ is \textit{not} maximal, then there is always a single edge $e\in E(G) \setminus S$ such that $S\cup \{e\}$ is a chordal subgraph.

This means $\comp(\cdot)$ can be computed in a greedy way by testing, up to $m$ times, that any of the $O(m)$ remaining edges in the graph can be added, which takes $O(m)$ time, for a total cost $\ccost = O(m^3)$. The total cost follows.

\begin{theorem}
Maximal Edge-induced Chordal Subgraphs are proximity searchable, and can be listed with $O(m^4n)$ time delay.
\end{theorem}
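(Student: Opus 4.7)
The plan is to verify the three conditions of Definition~\ref{def:crecon} for this problem, so that Theorem~\ref{thm:canonical-reconstructio} gives proximity searchability, and then to read off the delay bound from the cost of a single $\neighs(\cdot)$ call. Concretely, for a Maximal Edge-induced Chordal Subgraph $S$ I would take a reverse perfect elimination ordering $v_1,\ldots,v_\ell$ of the vertices incident to $S$ (which exists and is computable in $O(m)$ time by~\cite{rose1976algorithmic} because $G[S]$ is chordal), and order the edges of $S$ by increasing later endpoint in that vertex order, breaking ties by the smaller endpoint. Any prefix of this edge order is chordal: when the prefix begins adding edges incident to $v_k$ it already contains all edges of $S$ among $\{v_1,\ldots,v_{k-1}\}$, and the reversed-PEO property forces the backward neighbors of $v_k$ in $S$ to form a clique, so $v_k$ is simplicial in every intermediate stage. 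This establishes condition~(\ref{item:canonical1}).

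For the neighboring function I would mirror the induced-chordal construction of Definition~\ref{neighboring:chordal}: for each edge $\de=\{x,y\}\notin S$, enumerate every maximal clique $Q$ of $G[S\cup\{\de\}]$ containing $y$, and symmetrically those containing $x$; for each such $Q$ containing $y$, use as the removable the edges of $S$ incident to $y$ whose other endpoint lies outside $Q$, add $\de$, and apply $\comp(\cdot)$. After the removal, $y$'s residual neighborhood sits inside $Q\setminus\{y\}$, a clique of $G[S\cup\{\de\}]$, so $y$ is simplicial in the intermediate subgraph and chordality is preserved, exactly as in the induced case.

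The central claim is condition~(\ref{item:canonical2}). Given $S^*$ with canonical extender $\de=\{x,y\}$ and $S\scap S^*=\{e_1,\ldots,e_h\}$, one of $x,y$ is later in the reverse perfect elimination ordering of $G[S^*]$; call it $y$. The edges of $S^*$ preceding $\de$ that touch $y$ connect $y$ to vertices strictly earlier than $x$ in that vertex order, and the PEO property makes the full backward neighborhood of $y$ in $G[S^*]$ (including $x$) a clique in $G[S^*]$. Every edge among the earlier backward neighbors has later endpoint smaller than $y$, hence precedes $\de$ in the edge order of $S^*$, so it lies in $S\scap S^*\subseteq S$; together with the preserved $\{y,\cdot\}$ edges of $S\scap S^*$ incident to $y$, this makes the relevant vertex set a clique in $G[S\cup\{\de\}]$ that extends to a maximal clique $Q$ containing $y$ which the neighboring function enumerates. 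For that $Q$ the removable contains no edge of $S\scap S^*$, so $\comp(\cdot)$ returns a solution $S'$ with $(S\scap S^*)\cup\{\de\}\subseteq S'$, giving $|S'\scap S^*|>|S\scap S^*|$.

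For the delay I would use that a chordal graph has $O(n)$ maximal cliques and each vertex lies in $O(|N(v)|)$ of them~\cite{DBLP:conf/cocoon/ConteKOUW17}, so $\neighs(S)$ performs $O(\sum_{\{x,y\}\in E(G)}(|N(x)|+|N(y)|))=O(mn)$ neighbor constructions. The main obstacle I expect is implementing $\comp(\cdot)$: greedy edge addition is not obviously correct because an individually non-addable edge might only become addable after a specific sequence of others is chosen, but the sandwich-monotonicity theorem of~\cite{HEGGERNES20091} guarantees that any non-maximal edge-induced chordal subgraph admits at least one edge whose addition preserves chordality. Hence $\comp(\cdot)$ can be implemented by repeatedly scanning the $O(m)$ candidate edges and testing chordality in $O(m)$ per scan, giving $\ccost=O(m^3)$ and an overall delay of $O(mn)\cdot O(m^3)=O(m^4n)$.
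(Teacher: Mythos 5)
Your construction is essentially the paper's: the same canonical order on edges induced by a reversed perfect elimination ordering of the vertices (later endpoint first, ties by earlier endpoint), the same clique-based neighboring function built around the later endpoint of the canonical extender edge, the same coverage argument via the backward neighborhood of that endpoint being a clique whose edges all precede $e$ and hence lie in $S\scap S^*\subseteq S$, and the same appeal to sandwich monotonicity~\cite{HEGGERNES20091} to implement $\comp(\cdot)$ greedily in $O(m^3)$, giving $O(mn)$ neighbor constructions and $O(m^4n)$ delay. Your explicit verification of condition~(\ref{item:canonical1}) of Definition~\ref{def:crecon} (prefixes of the edge order are chordal) is a welcome addition that the paper leaves implicit.

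One step, however, is wrong as stated: you let $Q$ range over all maximal cliques of $G[S\cup\{e\}]$ containing $y$ and claim that, after removing the $S$-edges from $y$ to vertices outside $Q$, the vertex $y$ is simplicial ``exactly as in the induced case''. This fails when $x\notin Q$: the added edge $e=\{x,y\}$ keeps $x$ in $y$'s residual neighborhood, and $x$ need not be adjacent to $Q\cap N(y)$, so the intermediate edge set need not be chordal; this breaks the second bullet of condition~(\ref{item:canonical2}) of Definition~\ref{def:crecon}, and $\comp(\cdot)$ is only defined on solutions. Concretely, take $G=C_4$ on $x,a,c,y$ with $S=\{xa,ac,cy\}$ and $e=xy$: for $Q=\{c,y\}$ nothing is removed and the intermediate is the chordless $C_4$. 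The fix is immediate and costless: restrict $Q$ to maximal cliques of $G[S\cup\{e\}]$ containing \emph{both} endpoints of $e$ (the analogue of requiring $v\in Q$ in the induced case); then $y$'s residual neighborhood lies inside $Q\setminus\{y\}$, whose edges are all in $S$, so $y$ is simplicial and chordality holds. Your coverage argument is unaffected, since the clique you construct there already contains $x$ (the edges from $x$ to the earlier backward neighbors of $y$ precede $e$ in the order of $S^*$, hence lie in $S\scap S^*\subseteq S$), and the complexity bounds do not change. For what it is worth, the paper's own description (``all maximal cliques containing either $x$ or $y$ in $G[S]$'') is similarly loose, and its correctness argument likewise only relies on the good clique that contains both endpoints.
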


\section{Maximal Induced Proper Interval Subgraphs}\label{sec:interval-exp}
Interval graphs are a well-known subclass of chordal graphs, whose vertices can be arranged as intervals on a line such that two vertices are adjacent if and only if their intervals intersect. In this section, we present a polynomial-delay enumeration algorithm for Maximal \textit{Proper} Interval Subgraphs, a subclass of interval subgraphs corresponding to interval graphs where no two intervals properly contain another.

Despite chordal graphs, interval graphs, and proper interval graphs being closely related to each other, it is interesting to observe how the three enumeration algorithms proposed here (chordal subgraphs, proper interval subgraphs) and in~\cite{Cao:arXiv:2020} (interval subgraphs) differ significantly. 
Furthermore, an interesting open question would be to determine whether it is possible to enumerate Maximal Interval Subgraphs directly via proximity search, or whether there is an intrinsic difference in what can be achieved with retaliation-free paths.

\subsection{Maximal Connected Induced Proper Interval Subgraphs}

A \textit{proper} interval graph is an interval graph where, in the interval representation, no interval properly contains another. Equivalently, it can be defined as interval graphs that admit a \textit{unit-length} representation, i.e., where all intervals have length $1$~\cite{fulkerson1965incidence}. In this section we will adopt this latter definition, and all interval representations considered will be intended as unit-length.

In the following, we show how to enumerate Maximal (Connected) Proper Interval Subgraphs of a graph $G$. We show the connected version of the problem, and later remark how to adapt it to the non-connected case.


It is important to observe that every connected proper interval (sub)graph $S$ has two unique interval representation represented by a sequence $v_1, \ldots, v_{|S|}$, and its reverse.\footnote{Ambiguity may be caused by identical vertices, i.e., adjacent and with the same sets of neighbors, but it can be resolved by taking the smallest-id vertex first.} The \textit{canonical order} $v_1, \ldots, v_{|S|}$ of a Maximal Proper Interval Subgraph $S$ is defined as the sequence given by the interval representation of $S$ which has as $s_1$ the smallest among the two possible values.

A graphical example is given in Figure~\ref{fig:pinterval} (a),(b),(d),(e).

\begin{figure*}
    \centering
    \includegraphics[width=\textwidth]{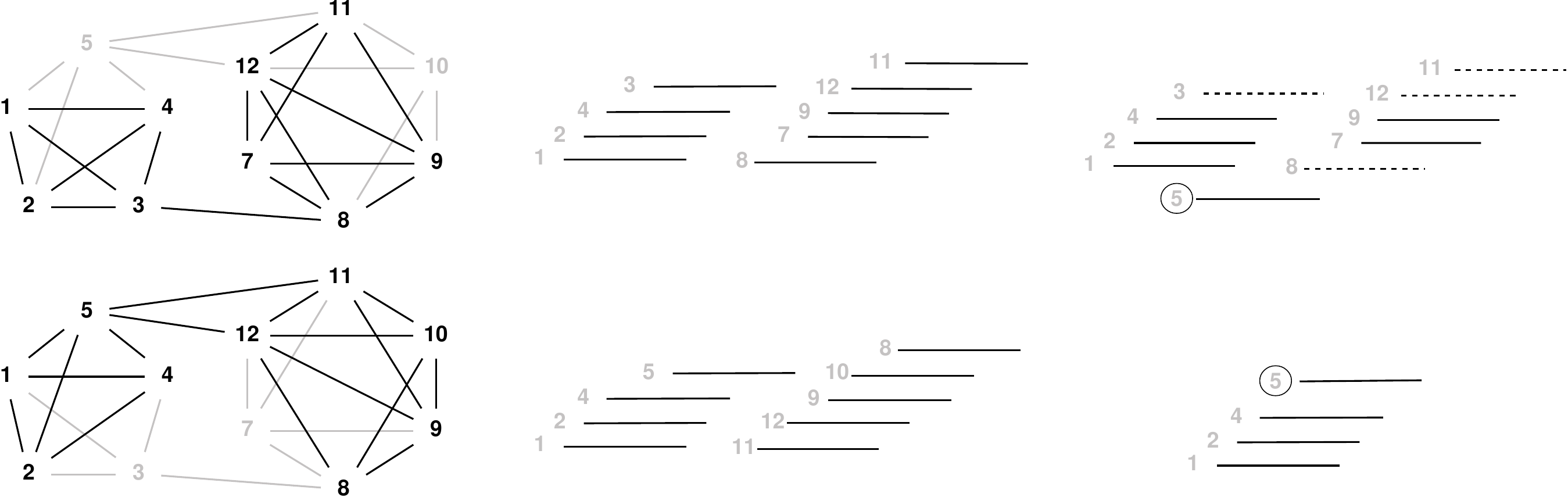}
    \put (-505,145){\small(a)}
    \put (-320,145){\small(b)}
    \put (-140,145){\small(c)}
    \put (-505,60){\small(d)}
    \put (-320,60){\small(e)}
    \put (-140,60){\small(f)}
    %
    %
\caption{\textbf{a}: a proper interval subgraph $S$ of the graph in Figure~\ref{fig:ex:bip}. \textbf{d}: another proper interval subgraph $T$. \textbf{b}: the unit interval representation of $S$, which induces the canonical order $1,2,4,3,8,7,9,12,11$. \textbf{e}: the representation of $T$, which induces the canonical order $1,2,4,5,11,12,9,10,8$.}
    \label{fig:pinterval}
\end{figure*}

In order to generate neighboring solutions, we take each of the two representations, and for each we proceed as follows. Firstly, we try all vertices $v\in V(G)\setminus S$ as canonical extender.
Next, we want to identify the right way of inserting $v$ in the interval representation of $S$. 
As the interval length is fixed to 1, there are $5|S|$ possibilities: for any other $w \in S$, we can place $v$ ending just before / just after the start of $w$, exactly overlapping $w$, or starting just before / just after the end of $w$. We can observe how these capture all possible ways to insert $v$, since given any other placement we can slide $v$ in any direction until it is about to gain or lose an overlap, i.e., one of the endpoints of $v$ will approach the endpoint of another interval, which puts us into one of the $5|S|$ cases above.

Finally, we have to make the representation consistent, by removing from $S\cup \{v\}$:
\begin{itemize}
\item All vertices (intervals) coming \emph{after} $v$ in this representation.
\item All neighbors of $v$ that do \emph{not} overlap its interval.
\item All non-neighbors of $v$ that do overlap its interval.
\item Finally, all vertices that have become disconnected from the connected component containing $v$ by performing the previous steps.
\end{itemize}

The resulting representation is consistent with a proper interval graph, so the resulting graph is clearly a proper interval graph. Let us call it $X$, and let us call $S' = \comp(\ccomp_{v}(X))$.
The following example illustrates these operations on the example graphs from Figure~\ref{fig:pinterval}.

\begin{example*}
Looking at the graphs in Figure~\ref{fig:pinterval} (\textbf{(a)}, \textbf{(d)}) and their canonical orderings (see \textbf{(b)},\textbf{(e)}) we can give an example of the operations performed by the neighboring function.

\textbf{(c)}: $5$ is added as canonical extender and placed ``just before'' $3$; we remove vertices the dashed vertices $3$ and $8$ (overlapping $5$ but not neighbors of $5$ in $G$), and $11$ and $12$ (not overlapping $5$ but neighbors of $5$ in $G$); $9$ and $7$ are also removed as not part of the same connected component as $5$. The resulting graph \textbf{(f)} is then is maximalized with $\comp(\cdot)$, and has greater proximity with $b$ than $a$.

Indeed, $S\scap T = \{1,2,3,4\}$, while the maximal solution obtained maximalizing \textbf{(f)} contains $1,2,3,4,5$.
\end{example*}

The set $\neighs(S)$ will be made of the $S'$ obtained by trying both representations of $S$, for each all $v\in V(G)\setminus S$, and for each all possible insertions, for a total of $O(n^2)$ neighboring solutions.

We now show that, given $S,T$, we always obtain some $S'\in \neighs(S)$ such that $|S'\scap T| > |S\scap T|$.

Let the proximity $S\scap T$ be $t_1, \ldots, t_{i-1}$: this is a prefix of $T$, and since it is a connected subgraph, its vertices in the same order as in $T$ in one of the two representations of $S$. Furthermore, let $v$ be our canonical extender.

Looking at the canonical order of $T$, consider the placement of $t_i$ relative to the preceding intervals $t_1, \ldots, t_{i-1}$. Now, consider the case where $v = t_i$, the correct one between the two representations of $S$ is considered, and the placement of $v$ relative to $t_1, \ldots, t_{i-1}$ is the same as $t_i$ in $T$.

As we try all possible placements for $v = t_i$, and as $t_1, \ldots, t_{i-1}$ is a connected subgraph of $T$, 
we will also try the placement of $t_i$ considered above.
The correct placement of $v$ tells us that, when we remove all intervals coming \emph{after} $v$ from $S\cup \{v\}$, we do not remove vertices from $t_1, \ldots, t_{i-1}$. The placement also tells us that all neighbors of $v$ in $t_1, \ldots, t_{i-1}$ overlap with $v$ in the interval representation, while all non-neighbors of $v$ in $t_1, \ldots, t_{i-1}$ do not, so vertices of $t_1, \ldots, t_{i-1}$ are not removed in the remaining steps.
It follows that the set $X$ obtained is a proper interval subgraph of $G$ containing $t_1, \ldots, t_{i}$, and that $S' = \comp(X)$ is such that $|S'\scap T| > |S\scap T|$.

\subsection{Induced Proper Interval Subgraphs}

The non-connected version is similarly solved: as in previous sections, we define the canonical order by ordering each connected component as in the connected case, then ordering the components by smallest-id vertex.

The proximity is then defined by canonical reconstruction (Definition~\ref{def:proximity}).

Let $S_1,\ldots, S_j$ be the connected components of $S$, and $T_1, \ldots, T_k$ those of $T$, for some $T$.

We will have that $S\scap T$ consists of some complete connected components of $T$, as well as a (possibly empty) subset of one component $T_i$. The canonical extender $t_i$ will be the earliest vertex of the canonical ordering of $T_i$ that is not in $S$.
We can also observe that each of these components is contained itself in some connected component of $S$.

When we select a canonical extender $v$, we also select a component $S_i$ (of course, trying all possibilities): we aim to find the component which contains the partially formed $T_i$.

As we add $v$ to $S$, we can immediately remove from $S$ all neighbors of $v$ in $S\setminus S_i$, as indeed as $t_i$ only has neighbors in $T_i$.

We then proceed as in the connected case, ordering $S_i$ in the two possible ways, and trying all $O(n)$ insert possibilities. 
However, we must take care of the fact that $S_i$ may contain more than one connected component of $T$: components \textit{preceding} $v$ in the order are preserved (they cannot overlap $v$ as they precede the other elements of $T_i$, to which they are not adjacent), but there may be some \textit{following} $v$. 

For this reason, we must introduce another ``guess'', that is we guess which of the intervals preceding $v$ is $t_{i-1}$ (as, to obtain the correct placement of $v$ in the interval representation, we may not have placed it just after $t_{i-1}$). Note how the number of possible $t_{i-1}$ is at most $|N(v)|$.
This step, which was not necessary in the connected version, allows us to effectively compute a safe way of detaching the intervals following $v$.


Specifically, we remove from $S_i \cup \{v\}$ the following vertices:

\begin{itemize}
\item All vertices (intervals) between $t_{i-1}$ and $v$.
\item All vertices (intervals) coming \emph{after} $v$ in this representation that are adjacent to $v$ or to $t_{i-1}$.
\item All neighbors of $v$ that do \emph{not} overlap its interval.
\item All non-neighbors of $v$ that do overlap its interval.
\end{itemize}

In this way, all the vertices removed could not be part of the proximity $S\scap T$: the ones between $t_{i-1}$ and $t_i$ cannot be in the interval representation since $t_{i-1}$ and $t_i$ are consecutive in the interval representation of $T$; the others we remove were neighbors of $t_{i-1}$ or $t_i$, but did not precede them in the interval representation (i.e., in the canonical order), so they could not be part of $S\scap T$. On the other hand, the remaining intervals following $v$ are now in a separate connected component, thus the interval representation of the resulting graph is consistent, contains $t_1, \ldots, t_i$ (for any $T$, when the correct choices are performed), and we can apply the $\comp(\cdot)$ function to obtain a neighboring solution $S'$.

\subsection{Running time}

As for the complexity, the neighboring function tries $O(n)$ candidates for $v$, and for each, $5|S| = O(n)$ possible placements in $S$, in each of the $2$ representations.
For each, the cost of the procedure is dominated by the application of the $\comp(X)$ call.

Since we can test whether a graph is a proper interval graph in $O(m)$ time~\cite{booth1975linear}, the $\comp(X)$ function can be implemented in $O(mn)$ time as for chordal subgraphs, giving us a total cost per solution of $O(n^3m)$ time.

For the non-connected version, we must consider two additional phases: for each $v$, we firstly selected a connected component $S_i$ of $S$, and secondly we selected the possible $t_{i-1}$ among the neighbors of $v$. We can bound the number of connected components by $O(n)$, and, rather than adding another factor $O(n)$ for the choice of $t_{i-1}$, we can observe how each distinct pair $v,t_{i-1}$ corresponds to an edge, so the number of possible $v$ and $t_{i-1}$ pairs is $O(m)$, for a total complexity of $O(n^3m^2)$

\begin{theorem}
Maximal Induced Proper Interval Subgraphs and Maximal Connected Induced Proper Interval Subgraphs are proximity searchable, and can be listed with $O(n^3m^2)$ time delay and $O(n^3m)$ time delay, respectively.
\end{theorem}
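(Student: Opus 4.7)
The plan is to derive both statements uniformly from Theorem~\ref{thm:canonical-reconstructio} by exhibiting a canonical reconstruction (Definition~\ref{def:crecon}) for each variant. The canonical order of a maximal connected proper interval subgraph $S$ is read off from the unit interval representation whose first vertex has smaller id among the two possible representations, and the proximity $\scap$ is then the one given by Definition~\ref{def:proximity}; for the non-connected case we concatenate the canonical orders of the components in increasing order of their smallest-id vertex. With this choice, any prefix of a canonical order is a prefix of a unit interval representation, hence induces a proper interval subgraph (and a connected one in the connected variant), which verifies condition~(\ref{item:canonical1}) of Definition~\ref{def:crecon}.

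The key step is to verify condition~(\ref{item:canonical2}) for the neighboring function already described: given $S$, we iterate $\dv\in V(G)\setminus S$ as the candidate canonical extender, both representations of $S$, and the $5|S|$ qualitative placements of $\dv$'s interval relative to those of $S$. For the connected case and an arbitrary target $S^*$ with canonical extender $\dv=s^*_{h+1}$, I would argue that $s^*_1,\ldots,s^*_h$ sit inside one of the two representations of $S$ in the same relative order as in $S^*$. The enumeration therefore eventually picks the placement that reproduces the relative position of $\dv$ in $S^*$'s representation, and at that point one checks that none of the four removal rules (intervals later than $\dv$; neighbors of $\dv$ not overlapping it; non-neighbors overlapping $\dv$; vertices disconnected from $\dv$) can erase any vertex of $S\scap S^*$. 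Hence $(S\scap S^*)\cup\{\dv\}$ survives into the neighboring solution $S'=\comp(\ccomp_\dv(X))$, giving $|S'\scap S^*|>|S\scap S^*|$.

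I expect the main obstacle to be the non-connected case, where a single component of $S$ may simultaneously contain the partial component of $S^*$ carrying the proximity and additional intervals that belong, in $S^*$, to a strictly later component lying immediately after $\dv$ in the interval representation. Simply truncating everything after $\dv$ is too aggressive, while leaving those intervals in place destroys the structure of $S^*$. The fix is to guess, in addition to $\dv$, the vertex $t_{i-1}$ preceding $\dv$ in the canonical order of $S^*$'s component containing $\dv$: this vertex must lie in $N(\dv)\cap S$, so there are $O(|N(\dv)|)$ guesses. Given such a guess, we remove from the selected component $S_i$ of $S$ exactly the intervals strictly between $t_{i-1}$ and $\dv$, those later than $\dv$ and adjacent to $\dv$ or $t_{i-1}$, and the usual overlap corrections. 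A structural argument, using the fact that in the canonical order of $S^*$ the vertices preceding $t_i=\dv$ can only be neighbors of earlier vertices of $T_i$, shows these removals never touch $S\scap S^*$; the remaining forward intervals become a separate component that is safely absorbed by $\comp(\cdot)$.

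For the delay, I would bound $\neighs(S)$ by $O(n)$ extenders, $O(n)$ placements per extender, and two representations, each followed by a $\comp(\cdot)$ call of cost $O(mn)$ (using linear-time recognition of proper interval graphs~\cite{booth1975linear} and the hereditarity of the property to test each vertex at most once). This yields the $O(mn^3)$ delay for the connected case. In the non-connected case the additional choice of $S_i$ and $t_{i-1}$ naively costs an extra $O(n^2)$, but pairing $\dv$ with $t_{i-1}\in N(\dv)$ shows that across all $\dv$ the number of such pairs is $O(m)$, giving the claimed $O(m^2n^3)$ delay.
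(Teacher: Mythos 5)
Your construction is essentially the paper's: the same canonical order read off the (unique up to reversal) unit-interval representation with components concatenated by smallest-id vertex, the same $5|S|$ qualitative placements over both representations followed by the four removal rules, the same extra guesses of the component $S_i$ and of $t_{i-1}$ in the non-connected case, and the same accounting ($O(mn)$ per $\comp(\cdot)$ call, $O(n^2)$ neighbours in the connected case, $O(m)$ admissible pairs $(\dv,t_{i-1})$ in the non-connected one), yielding the claimed $O(n^3m)$ and $O(n^3m^2)$ delays.

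The one concrete omission is in the non-connected case: you restrict all deletions to the selected component $S_i$, whereas the paper, immediately after adding $\dv$, also deletes every neighbour of $\dv$ lying in $S\setminus S_i$. This step is needed for condition~(\ref{item:canonical2}) of Definition~\ref{def:crecon}: without it the produced set need not be a proper interval subgraph at all (for instance, if $\dv$ has a neighbour $u$ in another component of $S$ and $u$ has two independent neighbours there that are not adjacent to $\dv$, the kept set contains an induced claw), so it is not a solution and $\comp(\cdot)$ cannot legitimately be applied to it. The deletion is also harmless for the proximity: since $S^*$ is an induced subgraph, every neighbour of $\dv=t_i$ that belongs to $S^*$ lies in the same component $T_i$ of $S^*$, whose proximity prefix sits inside the chosen $S_i$; hence no vertex of $S\scap S^*$ is removed by this extra rule. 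With that step restored, your argument coincides with the paper's proof.
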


\section{Maximal Obstacle-free Convex Hulls}
\newcommand{\shad}{\phi\xspace}
\newcommand{\moc}{\textsc{moc}\xspace}
\newcommand{\mocs}{\textsc{moc}s\xspace}

In application domains such as robotics planning and routing, a common problem is finding areas, typically convex, in a given environment which are free from obstacles (see, e.g.,~\cite{Deits2015,savin2017algorithm}).

In this section we solve the following formulation of the problem: let $V$ and $X$ be two sets of elements, which corresponds to points on a 2-dimensional plane. $V$ represents the point of interest for our application, and $X$ represents the obstacles. For short, let $|V|=j$ and $|X|=h$, and let $n = j+h$ be the total number of points. We are interested in listing all maximal obstacle-free convex hulls (\mocs for short), where an obstacle-free convex hull is a set of elements $S\subseteq V$ such that the convex hull of $S$ does not contain any element of $X$.

This problem does not concern a graph, but its solutions are modeled as sets of elements, thus the technique may still be applied. Furthermore, we can naturally generalize the problem by adding a graph structure to $V$, i.e., adding edges between its points, and considering the problem of Maximal \textit{Connected} Obstacle-free Convex hulls.

\subsection{Maximal Obstacle-free Convex Hulls}

Again, note that the problem is hereditary, i.e., each subset $S'$ of a solution $S$ clearly also admits a convex hull which does not include elements of $X$ (since it will be contained in that of $S$).

It is worth observing that this is the only problem in this paper to which we do \textit{not} apply the canonical reconstruction strategy.

Consider a maximal solution $S$ and an element $v\in V\setminus S$. As $S$ is maximal, there is at least one element $x\in X$ included in the convex hull of $S\cup \{v\}$. This element $x$ casts two ``shadows'' $S_1$ and $S_2$ on $S$, seen by $v$: consider the straight line between $v$ and $x$, $S_1$ consists of all elements of $S$ above this line, and $S_2$ of all those below it. It is straightforward to see how both the convex hull of $S_1\cup \{v\}$ and that of $S_2\cup \{v\}$ do not contain $x$. Any element of $S$ that falls exactly on the line may not participate in any solution involving $v$.\footnote{Note that it may not fall between $v$ and $x$ otherwise the convex hull of $S$ would have included $x$.} Furthermore, any element $x'\in X$ above this line, and still in the convex hull of $S\cup \{v\}$, further casts two shadows on $S_1$, as any element below this line casts them on $S_2$. If we repeat this process for all elements of $X$ in the convex hull of $S\cup \{v\}$ we obtain a number of shadows of $S$ which is at most linear in the number of elements of $X$. Let $\shad(S,v)$ be the set of these shadows.
For each of these shadows $S_i\in \shad(S,v)$, we have that the convex hull of $S_i\cup \{v\}$ may not include elements of $X$, i.e., $S_i\cup \{v\}$ is a (possibly not maximal) solution.

The neighboring function is then obtained as follows.

\begin{definition}[Neighboring function for \mocs]~

$$\neighs(S) = \bigcup\limits_{v\in V(G)\setminus S} \neighs(S,v)$$

Where 
$$\neighs(S,v) = \{\comp(S_i \cup \{v\}) : S_i \in \shad(S,v)\}$$
\end{definition}

Finally, for two solutions $S$ and $S^*$, we simply define $S \scap S^*$ as the intersection $S\cap S^*$ between their elements.

Let $I = S\cap S^* = S\scap S^*$, and $v$ any element in $S^*\setminus S$. Since $I\cup \{v\}$ is contained in a \moc, $S^*$, its convex hull cannot contain any element of $X$. It follows that $I$ must be fully contained in a single $S_i \in \shad(S,v)$: indeed, if we take two points $u_i\in S_i$ and $u_j\in S_j$, it is evident by the definition of $\shad(S,v)$ that the convex hull of $\{v,u_i,u_j\}$ (or any superset of it) contains at least an element of $X$. 
We have that the neighboring function will return $S'= \comp(S_i\cup \{v\})$, with $I\cup\{v\} \subseteq S'$, which implies $|S'\scap S^*| > |S\scap S^*|$. The algorithm is thus correct.

As for the complexity, the problem is hereditary, so we may compute a $\comp(S)$ call by testing each vertex in $V\setminus S$ once. The convex hull of $S$ can be computed in $O(|S|\log |S|)$ time~\cite{Chan1996}, and testing a solution consists in checking that each vertex of $X$ is not in this hull, which can trivially be done in $O(|S|\cdot h)$ time. The cost of $\comp(\cdot)$ is thus $O(j(h+\log j))$ time.
For each candidate $v$, we have at most $h$ neighboring solutions, and since we need to consider at most $j$ candidates, the delay of the algorithm will be $j\cdot h$ times the cost of a $\comp(\cdot)$ call.

We thus obtain an algorithm with the following complexity:

\begin{theorem}
Maximal Obstacle-free Convex Hulls are proximity searchable, and can be listed in $O(j^2h(h+\log j)) = O(n^4)$ time delay.
\end{theorem}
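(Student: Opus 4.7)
My plan is to verify the three conditions of Definition~\ref{def:searchable}, instantiating the proximity as $S \scap S^* = S \cap S^*$ and the neighboring function as $\neighs(S) = \bigcup_{v \in V \setminus S} \{\comp(S_i \cup \{v\}) : S_i \in \shad(S,v)\}$, and then bound the delay via Theorem~\ref{thm:strong}. Condition~(\ref{item:prox1}) is immediate since any single element of $V$ is trivially a \moc, so one solution is found in constant time (or via $\comp(\emptyset)$). Condition~(\ref{item:prox4}) follows because $\moc$s are inclusion-maximal, so $|S \cap S^*|$ is maximized exactly when $S=S^*$. Conditions~(\ref{item:prox2}) and~(\ref{item:prox3}) require geometric arguments and a complexity count, which I treat separately.

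The heart of the proof is condition~(\ref{item:prox3}). I would fix two distinct \mocs $S$ and $S^*$, let $I = S \cap S^*$, and pick as canonical extender any $v \in S^* \setminus I$ (which must exist since $S^* \neq S$ and both are maximal). The claim I need is that $I$ lies entirely inside one shadow $S_i \in \shad(S,v)$. I would prove this by contradiction: if two points $u_1, u_2 \in I$ sit in distinct shadows, then by the recursive construction of $\shad(S,v)$ they are separated by some line through $v$ and an obstacle $x \in X$ that lies in the convex hull of $\{v,u_1,u_2\}$. But $\{v, u_1, u_2\} \subseteq S^*$, contradicting $S^*$ being obstacle-free. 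Once $I \subseteq S_i$ is established, the solution $S' = \comp(S_i \cup \{v\}) \in \neighs(S,v)$ satisfies $I \cup \{v\} \subseteq S'$, giving $|S' \scap S^*| \geq |I| + 1 > |S \scap S^*|$.

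For condition~(\ref{item:prox2}) and the delay bound, I would first argue that $|\shad(S,v)| = O(h)$: each obstacle in the convex hull of $S \cup \{v\}$ introduces at most one additional splitting line, so the total number of shadow regions grows linearly in $h$. Each shadow can be identified in $O(j h)$ time by classifying the $j$ points of $S$ against the relevant lines. For the $\comp(\cdot)$ function, since the property is hereditary, I would greedily test each of the $O(j)$ remaining points once: testing whether adding a point keeps the convex hull obstacle-free reduces to computing the updated hull in $O(j \log j)$ and checking the $h$ obstacles against it in $O(j h)$, giving $\ccost = O(j(h + \log j))$ amortized across a full maximalization sweep, and $O(j^2(h+\log j))$ in total for one $\comp$ call. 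Multiplying by $j$ candidates $v$ and $h$ shadows each yields the advertised $O(j^2 h (h + \log j)) = O(n^4)$ delay.

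The main obstacle I anticipate is making the shadow separation argument geometrically rigorous, in particular handling degeneracies (points of $S$ lying exactly on a line through $v$ and some $x \in X$, or obstacles that are collinear with $v$ and one another). The text already notes that such boundary points cannot participate in any solution containing $v$, so they can be safely discarded from every shadow; I would formalize $\shad(S,v)$ as a recursive refinement that processes obstacles one by one, each time cutting every existing shadow by the line through $v$ and the next obstacle, and show by induction that (i) the number of shadows is bounded by $h+1$ and (ii) any obstacle-free set of the form $I \cup \{v\}$ must have $I$ contained in a single leaf shadow. Once this is formalized, everything else reduces to plugging the bounds into Theorem~\ref{thm:strong}.
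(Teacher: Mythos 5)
Your proposal is correct and follows essentially the same route as the paper: the same proximity $S \scap S^* = S\cap S^*$, the same shadow-based neighboring function $\neighs(S,v)=\{\comp(S_i\cup\{v\}): S_i\in\shad(S,v)\}$, the same argument that $I=S\cap S^*$ lies in a single shadow (two points of $I$ in distinct shadows together with $v$ would enclose an obstacle, impossible since all three lie in the obstacle-free $S^*$), and the same hereditary greedy $\comp(\cdot)$ with the $j\cdot h$ multiplication for the delay. The only difference is in bookkeeping: the paper simply charges $\comp(\cdot)$ at $O(j(h+\log j))$ (one hull computation plus one obstacle sweep), which is exactly the figure your ``amortized'' remark falls back on, so your final $O(j^2h(h+\log j))=O(n^4)$ bound coincides with the paper's.
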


It could be argued that the neighboring function actually reports all solutions of the input-restricted problem in this instance, allowing us to induce a parent-child relationship with the structure of~\cite{lawler1980generating,Cohen20081147}, and reducing the space usage to $O(n)$ by using stateless iteration~\cite{DBLP:conf/icalp/ConteGMV16}. However, it is worth observing that proximity search required proving a weaker statement, and allows for an arguably simpler proof.


\subsection{Maximal Connected Obstacle-free Convex Hulls}

We now consider an extension of the problem where on top of $V$ and $X$ we have a graph structure $G = (V,E)$ on the points of $V$, and we are interested in listing all maximal set of points $S\subseteq V$ such that the convex hull of $S$ is obstacle-free, and $G[S]$ is connected.

We consider this a natural extension as, in the applications mentioned above, it could model requirements on the structure of the obstacle-free areas identified.

The algorithm is remarkably similar to the above version, as the neighboring function still considers $S_i \cup \{v\}$ for all $S_i\in \shad(S,v)$, but only keeps the connected component of $G[S_i \cup \{v\}]$ containing $v$.

\begin{definition}[Neighboring function for \mocs]~

$$\neighs(S) = \bigcup\limits_{v\in V(G)\setminus S} \neighs(S,v)$$

Where 
$$\neighs(S,v) = \{ \ccomp_v(\comp(S_i \cup \{v\}) ) : S_i \in \shad(S,v)\}$$
\end{definition}

For two solutions $S$ and $S^*$, we define $S \scap S^*$ as the largest connected component of their intersection $S\cap S^*$. We now prove that there is $S'\in \neighs(S)$ such that $|S'\scap S^*| > |S\scap S^*|$.

Let $I = S\cap S^* = S\scap S^*$, and $v$ any element in $S^*\setminus S$ such that $G[I\cup \{v\}]$ is connected.
Note that a suitable $v$ must exist, otherwise $I$ would not be connected to the elements of $S^*\setminus I$, contradicting the fact that $S^*$ is a connected solution.
Since $I\cup \{v\}$ is contained in $S^*$, its convex hull cannot contain any element of $X$. It follows that $I$ must be fully contained in a single $S_i \in \shad(S,v)$.
Furthermore, as $I\cup \{v\}$ is connected, it must be contained in $\ccomp_v(S_i \cup\{v\})$, the connected component of $G[S_i \cup\{v\}]$ containing $v$.

Similarly to the above case, we have that the neighboring function will return $S'= \comp(\ccomp_v(S_i\cup \{v\}))$, with $I\cup\{v\} \subseteq S'$, which implies $|S'\scap S^*| > |S\scap S^*|$. The algorithm is thus correct.

The complexity can be also derived from the non-connected case: the only additional step is applying $\ccomp_v(\cdot)$ before the $\comp(\cdot)$ function. As $\ccomp_v(\cdot)$ takes $O(m)$ time, where $m = |E(G)| = O(n^2)$, we can conclude the following:

\begin{theorem}
Maximal Connected Obstacle-free Convex Hulls are proximity\linebreak searchable, and can be listed in $O(jh (m + j(h+\log j))) = O(n^4)$ time delay.
\end{theorem}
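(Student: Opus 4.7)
The plan is to verify the three conditions of Definition~\ref{def:searchable}, taking $S \scap S^*$ to be the largest connected component of $S \cap S^*$ (with a deterministic tie-breaker, e.g., preferring the component whose smallest-id vertex is smaller), and $\neighs(\cdot)$ as the function given immediately above the theorem, which for each $v \in V \setminus S$ and each shadow $S_i \in \shad(S,v)$ produces $\comp(\ccomp_v(S_i \cup \{v\}))$. Condition~(\ref{item:prox1}) is immediate: any single vertex of $V$ is a connected (non-maximal) solution to which $\comp(\cdot)$ may be applied. Condition~(\ref{item:prox2}) follows from the non-connected analysis together with the observation that the only extra work per neighbor is one call to $\ccomp_v(\cdot)$ on $G$.

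The core of the proof is condition~(\ref{item:prox3}). Given distinct solutions $S, S^*$, let $I = S \scap S^*$. First I would exhibit a witness $v \in S^* \setminus S$ that is adjacent in $G$ to $I$ (or, if $I = \emptyset$, any vertex of $S^* \setminus S$). Such a $v$ must exist: otherwise the non-empty subsets $I$ and $S^* \setminus I$ of $S^*$ have no $G$-edge between them, contradicting the connectedness of $S^*$; if instead $S^* \subseteq S$, maximality gives $S^* = S$. Next I would argue that $I \cup \{v\}$ lies in a single shadow $S_i \in \shad(S,v)$: if two points $p, q \in I$ belonged to distinct shadows, the construction of $\shad(S,v)$ places an obstacle $x \in X$ inside the convex hull of $\{v, p, q\}$, but this hull is contained in the hull of $S^*$, contradicting $S^* \in \mocs$. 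Since $I \cup \{v\}$ is $G$-connected and contains $v$, it is contained in $\ccomp_v(S_i \cup \{v\})$, and therefore in $S' := \comp(\ccomp_v(S_i \cup \{v\})) \in \neighs(S)$. Because $I \cup \{v\}$ is a connected subset of $S' \cap S^*$ strictly larger than $I$, some connected component of $S' \cap S^*$ has size at least $|I| + 1$, yielding $|S' \scap S^*| > |S \scap S^*|$. Condition~(\ref{item:prox4}) follows from maximality, since $|S \scap S^*| \le |S^*|$ with equality only when $S \supseteq S^*$, i.e., $S = S^*$.

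For the delay bound, I would follow the non-connected case: $\neighs(S)$ iterates over $O(j)$ candidates $v$ and, for each, over $O(h)$ shadows; producing a single neighbor costs $O(m)$ for $\ccomp_v(\cdot)$ plus $O(j(h + \log j))$ for $\comp(\cdot)$. Summing gives $O(jh(m + j(h+\log j))) = O(n^4)$ delay, and Theorem~\ref{thm:strong} converts proximity searchability into the claimed polynomial-delay enumeration.

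The main technical obstacle I anticipate is the geometric single-shadow lemma: formalizing that two points of $I$ in distinct shadows always certify an obstacle in the triangle with $v$. I would handle it by induction on the recursive construction of $\shad(S,v)$, showing that each shadow is exactly the set of points of $S$ ``visible'' from $v$ on one side of every separating line $\overline{v x}$ used so far. A secondary subtlety is the well-definedness of $S \scap S^*$ when $S \cap S^*$ has several components of equal maximum size; here a fixed tie-breaking rule is needed so that condition~(\ref{item:prox4}) is unambiguous and the edge cases $I = \emptyset$ and $I$ disconnected from $S^* \setminus I$ slot cleanly into the witness argument above.
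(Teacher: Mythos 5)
Your proposal is correct and follows essentially the same route as the paper: pick a witness $v \in S^*\setminus S$ adjacent to the largest component $I$ of $S\cap S^*$ (it exists by connectivity of $S^*$ and maximality of $S$), argue via the triangle-with-an-obstacle observation that $I$ lies in a single shadow, note $I\cup\{v\}\subseteq \ccomp_v(S_i\cup\{v\})\subseteq S'$, and add the $O(m)$ cost of $\ccomp_v(\cdot)$ to the non-connected delay analysis. The extra care you take (tie-breaking among equal-size components, the $I=\emptyset$ case, proving the single-shadow lemma by induction on the shadow construction) only tightens details the paper treats as evident and does not change the argument.
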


\section{Maximal Connected Directed Acyclic Subgraphs}\label{sec:list:last}
\newcommand{\mcais}{\textsc{mcais}\xspace}

In this section we consider a \textit{directed} graph, where each edge has a head and a tail, and its direction is from the tail to the head. We call $N^+(v)$ the \textit{out-neighbors} of the vertex $v$ and $N^-(v)$ its \textit{in-neighbors}.

The goal of this section is listing Maximal Induced Connected Acyclic  Subgraphs (\mcais hereafter) of a given directed graph $G$.
The problem is connected-hereditary, 
and acyclicity can be tested in $O(m)$ time, thus $\comp(\cdot)$ can be implemented in $O(mn)$ time.

For completeness, we remark that the non-connected version (Maximal Induced Directed Acyclic Subgraphs), corresponds to listing the complements of Minimal Feedback Vertex Sets in a directed graph, and is of no interest here as an output-polynomial algorithm is given in~\cite{schwikowski2002enumerating}. We thus address the connected version of the problem, which has no natural counterpart in terms of feedback vertex set.
Let us define the canonical order:

\begin{definition}[Canonical Order for Maximal Connected Induced Acyclic Subgraphs]\label{def:mcaisorder}
The canonical order of a \mcais $S$ is the order $\{s_1, \ldots, s_{|S|}\}$ such that, for each $s_i$, $\{s_1, \ldots, s_{i}\}$ is connected, and either $\{s_1, \ldots, s_{i-1}\}\cap N^+(s_i) = \emptyset$ or $\{s_1, \ldots, s_{i-1}\}\cap N^-(s_i) = \emptyset$.
If multiple orders are possible let it be the lexicographically minimum.
\end{definition}

Our algorithm does not need to compute this order or $\scap$, but we need to show that it always exists.


Firstly, recall that every acyclic graph has at least one source and one target, and let us observe an important property of acyclic graphs with a single source (whose proof trivially follows from the fact that any non-source vertex has a neighbor occurring before itself in the order):

\begin{lemma}\label{lem:topcon}
Let $G$ be a single-source acyclic connected graph, and $v_1, \ldots, v_n$ any topological order of $G$. Any prefix $v_1, \ldots, v_i$ of this order induces a connected subgraph.
\end{lemma}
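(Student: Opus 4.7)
The plan is to proceed by straightforward induction on the prefix length $i$, where the only non-trivial ingredient is a small observation linking single-sourcedness to topological orders.

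First I would record the structural fact that makes the statement work: in a single-source DAG, the unique source $s$ must appear first in every topological order (since it has no in-neighbors, nothing can precede it), and every other vertex $v$ has at least one in-neighbor, all of which must precede $v$ in any topological order. Hence for every $j>1$, the vertex $v_j$ has some in-neighbor $v_k$ with $k<j$, i.e., there is an edge from $\{v_1,\dots,v_{j-1}\}$ to $v_j$ in the underlying undirected sense.

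With that in hand, I would prove the lemma by induction on $i$. The base case $i=1$ is trivial: $\{v_1\}=\{s\}$ is a single vertex, hence connected. For the inductive step, assume $\{v_1,\dots,v_{i-1}\}$ induces a connected subgraph. By the observation above, $v_i$ has an in-neighbor $v_k$ with $k<i$, so in the induced subgraph on $\{v_1,\dots,v_i\}$ the vertex $v_i$ is adjacent (ignoring direction) to $v_k\in\{v_1,\dots,v_{i-1}\}$. Since $\{v_1,\dots,v_{i-1}\}$ is already connected by the inductive hypothesis, adding $v_i$ together with the edge $(v_k,v_i)$ preserves connectivity, proving the claim.

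There is no real obstacle here: the argument is essentially bookkeeping, and the only conceptual step is recognising that ``single source'' is exactly what guarantees a back-edge from every newly appended vertex to the previous prefix. It is worth noting that the single-source hypothesis is necessary: with two sources $s_1$ and $s_2$, a topological order can legitimately begin $s_1, s_2, \dots$, and the prefix of length $2$ would be disconnected, so the proof correctly uses the hypothesis in the inductive step.
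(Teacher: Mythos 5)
Your proof is correct and follows essentially the same route as the paper, which dismisses the lemma as following ``trivially'' from the fact that every non-source vertex has an (in-)neighbor occurring earlier in the topological order; your induction merely spells out that observation, with the added (correct) remark that the unique source must be $v_1$. Nothing further is needed.
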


Lemma~\ref{lem:topcon} also implies that the \textit{reversed} topological order (i.e., where vertices have no forward out-neighbors) of a single-target acyclic connected graph is such that every prefix induces a connected subgraph. We also remark that both these orders satisfy the intersection properties of Definition~\ref{def:mcaisorder}.

We now use this lemma to show that the defined canonical order exists for any \mcais.
In the following, we define \textit{collapsing} a set of vertices $A \subseteq S$ into $x$ as replacing them with a single vertex $x$, whose in- and out-neighbors correspond to all vertices in $S\setminus A$ that were respectively in- and out-neighbors of some vertex in $A$.

\begin{lemma}
Every Directed Acyclic Graph allows a canonical order by Definition~\ref{def:mcaisorder}.
\end{lemma}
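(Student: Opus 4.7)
My plan is to argue by strong induction on $n = |V(G)|$, assuming $G$ is connected (on a disconnected DAG no ordering can have all prefixes connected, so the lemma is effectively a statement about the connected case; if one wishes to extend it verbatim, one can order each connected component in turn, as the paper does for other problems). The base case $n = 1$ is trivial. For $n \ge 2$ I aim to exhibit a vertex $v$ that is simultaneously (i) a source or target of $G$ and (ii) not a cut-vertex of the underlying undirected graph $G'$. Given such a $v$, I would invoke the induction hypothesis on the connected DAG $G - v$ to obtain a canonical order $v_1, \dots, v_{n-1}$, and then append $v$ to form $v_1, \dots, v_{n-1}, v$. Condition (ii) keeps every proper prefix connected, and the full prefix $V(G)$ is connected by assumption. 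Condition (i) ensures that $v$'s neighborhood inside $\{v_1,\dots,v_{n-1}\}$ consists entirely of in-neighbors or entirely of out-neighbors, which is precisely what Definition~\ref{def:mcaisorder} requires of $v_n$; the intersection conditions for the earlier $v_i$ are inherited from the inductive hypothesis, since $v$ does not appear in any of the prefixes $\{v_1,\dots,v_{i-1}\}$ for $i \le n-1$.

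The substance of the proof, and the step I expect to be the main obstacle, is showing that a vertex $v$ satisfying both (i) and (ii) always exists. I would handle this via the block-tree decomposition of $G'$. Pick any leaf block $B$ of this tree; it contains at most one cut-vertex $c$ of $G'$. Every $w \in V(B) \setminus \{c\}$ is a non-cut-vertex of $G'$, so it lies in the unique block $B$, and therefore every $G$-neighbor of $w$ is in $V(B)$. Consequently the in- and out-degrees of $w$ in $G$ coincide with its in- and out-degrees in $G[V(B)]$, so a source (resp.\ target) of $G[V(B)]$ that belongs to $V(B)\setminus\{c\}$ is automatically a source (resp.\ target) of $G$.

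It remains to argue that $G[V(B)]$ has a source or a target outside $\{c\}$. Since $B$ is a block, $G'[V(B)]$ is connected, and $G[V(B)]$ is acyclic, so it admits at least one source $s$ and one target $t$. If either $s$ or $t$ lies in $V(B) \setminus \{c\}$ we are done; otherwise $s = t = c$, which forces $c$ to have neither in- nor out-neighbors in $G[V(B)]$, making it isolated there and contradicting connectivity whenever $|V(B)| \ge 2$. The degenerate case $|V(B)| = 2$ (the block is a single bridge $\{c,v\}$) is subsumed: depending on the edge direction, the non-cut endpoint $v$ is either a source or a target of $G[V(B)]$, hence of $G$. Finally, the case where $G'$ has no cut-vertex at all is covered by taking $B = G'$ and letting $v$ be any source or target of $G$, which exists because $G$ is a nonempty DAG. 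Once existence of $v$ is established, the induction concludes and the lemma follows.
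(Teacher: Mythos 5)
Your proof is correct, but it follows a genuinely different route from the paper's. The paper argues globally and constructively: starting from a source $v_1$, it partitions the (connected) DAG into alternating layers $S_1, S_2, S_3, \ldots$ obtained by repeatedly closing under forward reachability, then backward reachability (collapsing the part already ordered into a single vertex), and orders odd layers by a topological order and even layers by a reverse topological order; concatenating these layer orders yields an order in which every vertex has either no backward out-neighbors or no backward in-neighbors and every prefix is connected. You instead argue by induction from the back: you locate a vertex $v$ that is a global source or sink of $G$ and is not a cut-vertex of the underlying undirected graph, remove it, and append it to a canonical order of $G-v$; the existence of such a $v$ is obtained from a leaf block $B$ of the block--cut tree, using that non-cut vertices of $B$ have all their neighbors inside $B$ and that $G[V(B)]$ must have a source or sink distinct from the unique cut-vertex $c$ (otherwise $c$ would be isolated in $B$). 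All the steps check out: prefix connectivity and the intersection conditions for $v_1,\dots,v_{n-1}$ are indeed unaffected by deleting $v$, and a global source/sink trivially satisfies the condition at the last position. Your restriction to connected DAGs matches the paper's intent (the lemma is applied only to \mcais, and the paper's own construction likewise only covers the component of the chosen source), and the lexicographic-minimum clause is immaterial for existence, as the paper itself remarks. Comparing the two: the paper's layer construction exposes the alternating ``no backward out-neighbor / no backward in-neighbor'' structure that mirrors the neighboring function (which deletes either $N^+(v)$ or $N^-(v)$), while your argument is more elementary and modular, relying only on standard facts about block trees and sources/sinks of DAGs, at the cost of revealing less about the global shape of the resulting order.
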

\begin{proof}
Let $S$ be a Directed Acyclic Graph. Let $v_1$ be a source of $S$, and $S_1$ be the set of vertices reachable by $v_1$, including $v_1$. 
Let $s_{1,1}, \ldots, s_{1,|S_1|}$ a topological ordering of $S_1$. 


No vertex in $S_1$ can have an out-neighbor outside of $S_1$ as otherwise said vertex would be in $S_1$ itself. Let instead $S_2$ be the set of all vertices in $S\setminus S_1$ that can reach some vertex of $S_1$. 

If we collapse $S_1$ into a vertex $x$, we can observe that $S_2 \cup \{x\}$ is acyclic subgraph with $x$ being the only target. Let $x, s_{2,1}, \ldots, s_{2,|S_2|}$ be a reverse topological ordering of $S_2 \cup \{x\}$. 

If we replace $x$ with the previously computed order of $S_1$, we obtain an order $s_{1,1}, \ldots, s_{1,|S_1|}, s_{2,1}, \ldots, s_{2,|S_2|}\}$ which respects Definition~\ref{def:mcaisorder}:
Each vertex in $s_{1,1}, \ldots, s_{1,|S_1|}$ has no backward out-neighbor by the topological ordering of $S_1$; each $s_{2,1}, \ldots, s_{2,|S_2|}$ has no backward in-neighbor by the reverse topological ordering of $S_2$, and because vertices of $S_1$ can not have out-neighbors outside $S_1$; finally, every prefix of $s_{1,1}, \ldots, s_{1,|S_1|}, s_{2,1}, \ldots, s_{2,j}$ is connected, as $x, s_{2,1}, \ldots, s_{2,j}$ is connected, meaning that all vertices in $s_{2,1}, \ldots, s_{2,j}$ are connected to some vertex in $S_1$, that is itself connected.

We may now repeat this step by collapsing $S_1 \cup S_2$ into a vertex $x'$, and since $x'$ will be a source, take $S_3$ as all vertices reached by $x'$ in $S\setminus (S_1 \cup S_2)$, and take a topological order of $S_3 \cup \{x'\}$, which we append to the order obtained so far (excluding $x'$).

By iterating steps, we obtain an ordering $s_{1,1}, \ldots, s_{1,|S_1|}, s_{2,1}, \ldots, s_{2,|S_2|},$\linebreak $s_{3,1}, \ldots, s_{3,|S_3|} \ldots, s_{k,1}, \ldots, s_{k,|S_k|}$, with $k\le |S|$, that contains all vertices of $S$, and such that any prefix will induce a connected subgraph, and any $s_{i,j}$ will have no backward out-neighbors if $i$ is odd, and no backward in-neighbors if $i$ is even, thus there exist an ordering satisfying Definition~\ref{def:mcaisorder} (if a feasible order exists, a lexicographically minimum one must exist too).
\end{proof}

Finally, the proximity $\scap$ follows by Definition~\ref{def:proximity}.
We define the neighboring function as follows.

\begin{definition}[Neighboring Function for Maximal Connected Induced Acyclic Subgraphs]
For a solution $S$ and a vertex $v \in V(G)\setminus S$, we define
$$\neighs(S) = \bigcup\limits_{v\in V(G)\setminus S} \neighs(S,v)$$
Where
$\neighs(S,v) =\{ \comp( \ccomp_v(\{v\} \cup S\setminus N^+(v)) ), \comp( \ccomp_v(\{v\} \cup S\setminus N^-(v)) )\}$
\end{definition}

In other words, the function will add $v$ to $S$. $S\cup \{v\}$ is not acyclic, but all cycles must involve $v$, so we make it acyclic by removing either all the out-neighbors $N^+(v)$, which makes $v$ a target, or all its in-neighbors $N^-(v)$, which makes $v$ a source.
It then takes the connected component containing $v$ and feeds the result to $\comp(\cdot)$, to surely obtain a \mcais.


Consider now two solutions $S$ and $S^*$, and again let $\dv$ be the first vertex in the canonical order of $S^*$ which is not in $S\scap S^*$. More formally, let $S\scap S^* = \{s^*_1, \ldots, s^*_h\}$ and $\dv = s^*_{h+1}$. 

Let $S' = \comp( \ccomp_{\dv}(\{\dv\} \cup S\setminus N^+(\dv)) )$ and $S'' = \comp( \ccomp_{\dv}(\{\dv\} \cup S\setminus N^-(\dv)))$ be the two solutions generated by $\neighs(S,\dv)$.

By the canonical order of $S^*$, we have that $(S\scap S^*)\cup \{\dv\}$ is connected, and either $(S\scap S^*) \cap N^+(\dv) = \emptyset$ or $(S\scap S^*) \cap N^-(\dv) = \emptyset$.

It follows that if $(S\scap S^*) \cap N^+(\dv) = \emptyset$, then $(S\scap S^*)\cup \{\dv\} \subseteq \ccomp_{\dv}(\{\dv\} \cup S\setminus N^+(\dv)) \subseteq S'$, and otherwise we have $(S\scap S^*) \cap N^-(\dv) = \emptyset$, which means $(S\scap S^*)\cup \{\dv\} \subseteq \ccomp_{\dv}(\{\dv\} \cup S\setminus N^-(\dv))\subseteq S''$.

We thus have that either $|S'\scap S^*| > |S\scap S^*|$ or $|S''\scap S^*| > |S\scap S^*|$, which gives us the second necessary condition of proximity search.

Finally, it is straightforward to see that $\neighs(S)$ takes polynomial time, as its cost is bounded by $O(n)$ calls to $\comp(\cdot)$, which can be implemented in $O(mn)$, meaning that all conditions of Definition~\ref{def:searchable} are satisfied. Theorem~\ref{thm:mcais} follows.

\begin{theorem}\label{thm:mcais}
Maximal Connected Induced Directed Acyclic Subgraphs are proximity searchable, and can be listed $O(mn^2)$ time delay.
\end{theorem}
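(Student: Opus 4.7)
}
My plan is to show that the problem admits a canonical reconstruction in the sense of Definition~\ref{def:crecon}; by Theorem~\ref{thm:canonical-reconstructio} this immediately makes it proximity searchable, and then Theorem~\ref{thm:strong} (applied with the complexity of $\neighs(\cdot)$) yields the claimed delay. The three ingredients I need to verify are (i) a polynomial-time $\comp(\cdot)$, (ii) a canonical order for every \mcais whose prefixes are themselves solutions, and (iii) a polynomial-size family of removables that preserves any prefix of the canonical order of the target $S^*$.

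For (i), I would observe that the property is connected-hereditary and that acyclicity of a directed graph can be tested in $O(m)$ time by a topological sort; a standard greedy insertion then gives $\comp(\cdot)$ in $O(mn)$ time. For (ii), I would take the canonical order of Definition~\ref{def:mcaisorder} and argue that it exists by a layered construction: repeatedly pick a source $x$ of the residual graph, take the set $S_i$ of vertices reachable from $x$ (topologically ordered), then the set $S_{i+1}$ of vertices that reach $S_i$ in the rest of the graph (reverse-topologically ordered), then the sources reachable from $S_i \cup S_{i+1}$, and so on. Using Lemma~\ref{lem:topcon} I would check that (a) every prefix of the resulting concatenation induces a connected subgraph of $S$, and (b) each vertex has, on entry, only backward in-neighbors or only backward out-neighbors according to the parity of its layer. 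This is exactly the condition of Definition~\ref{def:mcaisorder}, so any prefix is itself a \mcais (condition~(\ref{item:canonical1}) of Definition~\ref{def:crecon}).

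For (iii), given a solution $S$ and a candidate canonical extender $v \notin S$, I would propose just two removables: $X^+ = (N^+(v) \cap S) \cup (S \setminus \ccomp_v(S \cup \{v\} \setminus N^+(v)))$ and the symmetric $X^-$ using $N^-(v)$. Each $S \setminus X^\pm \cup \{v\}$ is connected by construction and acyclic because every cycle of $S \cup \{v\}$ must pass through $v$ (since $S$ itself is acyclic), so deleting either all out-neighbors or all in-neighbors of $v$ kills every such cycle. To verify the proximity-preserving property, suppose $v$ is the canonical extender of $S, S^*$ and let $P = S \scap S^*$ be the associated prefix of the canonical order of $S^*$; by Definition~\ref{def:mcaisorder} applied at position $|P|+1$ we have either $P \cap N^+(v) = \emptyset$ or $P \cap N^-(v) = \emptyset$, and $P \cup \{v\}$ is connected. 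In either case the matching $X^\pm$ removes no vertex of $P$, so $P \cup \{v\}$ survives into $\comp(\ccomp_v(\ldots))$ and the proximity strictly increases.

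The main obstacle will be proving the existence of the canonical order cleanly: one has to verify that the alternating source/sink layering actually covers all of $S$ and that at every layer boundary the connectivity of the prefix is preserved after replacing the collapsed ``super-vertex'' with its internal order. Once this combinatorial lemma is in hand, the rest follows routinely, and the complexity is a simple accounting: $\neighs(S)$ tries $O(n)$ values of $v$ and two removables per $v$, each invocation doing $\ccomp_v(\cdot)$ in $O(m)$ and $\comp(\cdot)$ in $O(mn)$, giving $O(mn^2)$ per call; since maintaining $\sol$ is negligible (Appendix~\ref{sec:sol}) and alternating output preserves the bound, this is the final delay.
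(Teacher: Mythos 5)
Your proposal is correct and follows essentially the same route as the paper: the same canonical order of Definition~\ref{def:mcaisorder} established via the alternating source/sink layering with Lemma~\ref{lem:topcon}, the same two removables per candidate extender (delete $N^+(v)$ or $N^-(v)$, keep the component of $v$, then apply $\comp(\cdot)$), the same prefix-preservation argument, and the same $O(n)\cdot O(mn)$ accounting for the $O(mn^2)$ delay. No gaps beyond the existence lemma you already flag, which the paper proves exactly by the collapsing construction you sketch.
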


\subsection{Maximal Connected Edge-induced Directed Acyclic Subgraphs}
\newcommand{\mcaes}{\textsc{mcaes}\xspace}

We remark here that the structure can be adapted to the edge case, i.e., Maximal Connected Edge-induced Directed Acyclic Subgraphs (\mcaes). 

As the problem is still hereditary and acyclic subgraphs can be tested in linear time, we can implement the $\comp(\cdot)$ function in $O(m^2)$ time. The canonical order is as follows.

\begin{definition}[Canonical order for \mcaes]
Given a \mcaes $S$, let the canonical ordering of the \textit{vertices} of $G[S]$ according to Definition~\ref{def:mcaisorder} be $v_1, \ldots, v_{|V[S]|}$.

The canonical ordering of $S$ is obtained by selecting the edges of $S$ by increasing order with respect to their \textit{later} endpoint in the vertex order, and breaking ties by increasing order of the other (earlier) endpoint.
\end{definition}

We obtain a canonical ordering $e_1, \ldots e_{|S|}$ of $S$ with the following properties: take an edge $e_i = \{v_j,v_k\}$, assuming wlog $j<k$. All edges whose latter endpoint comes earlier than $v_k$ in the vertex order are preceding $e_i$ in the order, thus all edges in the induced subgraph $G[\{v_1, \ldots, v_{k-1}\}]$ will be in the prefix $e_1, \ldots e_{i}$ of the canonical ordering of $S$. By Definition~\ref{def:mcaisorder} $G[\{v_1, \ldots, v_{k-1}\}]$ is connected. Finally, the only other edges in $e_1, \ldots, e_{i}$ are those whose latter endpoint is $v_k$, so their earlier endpoint is in $\{v_1, \ldots, v_{k-1}\}$. Thus each prefix $e_1, \ldots e_{i}$ forms a connected (edge) subgraph, which is also acyclic as it is a subgraph of the acyclic subgraph $S$.

Furthermore, it also holds that, for the latter endpoint $v_k$ of $e_i$, either\linebreak $\{v_1, \ldots, v_{k-1}\} \cap N^+(v_k) = \emptyset$ or $\{v_1, \ldots, v_{k-1}\} \cap N^-(v_k) = \emptyset$. This implies that either $\{e_1, \ldots, e_{i-1}\} \cap N_E^+(v_k) = \emptyset$, or $\{e_1, \ldots, e_{i-1}\} \cap N_E^-(v_k) = \emptyset$, which gives us our neighboring function:

\begin{definition}[Neighboring Function for \mcaes]~

Let $S$ be a \mcaes and $e = (v_t, v_h)$ a directed edge in $E(G)\setminus S$ directed \textit{from} its tail $v_t$ \textit{to} its head $v_h$. Furthermore, let $N_E^+(v_h)$ and $N_E^-(v_t)$ be the out-edges and in-edges of $v_h$ and $v_t$, respectively. We define $ \neighs(S, v_t, v_h) = \{\comp(\ccomp_{v_t}(\{e\} \cup (S\setminus N_E^-(v_t) )) , \comp(\ccomp_{v_h}(\{e\} \cup (S\setminus N_E^+(v_h) ))\}$

And thus
$$\neighs(S) = \bigcup\limits_{e= (v_t, v_h) \in E(G)\setminus S}\neighs(S,v_t,v_h) $$
\end{definition}

In other words, we add $e$ to $S$, and try each of the two possibilities to obtain the latter vertex in the canonical order of $S^*$: if it is the tail $v_t$ of the edge, surely its backward out-neighborhood in the canonical order of $S^*$ is not empty as it contains $v_h$, so it's in-neighborhood must be, thus we can safely remove $N_E^-(v_t)$ to make $S\cup \{e\}$ acyclic. Conversely, if it is the head $v_h$ we can safely remove $N_E^+(v_h)$. We thus obtain $|S'\scap S^*| > |S\scap S^*|$ for some $S'\in \neighs(S)$.

We can observe that the cost $\ccost$ of a $\comp(X)$ call is $O(m^2)$ since we can test acyclicity in $O(m)$ time, which we do up to $m$ times, and finding and selecting the edges connected to $X$ take in total $O(m)$ time as well. As the neighboring function produces $O(m)$ solutions, we obtain:

\begin{theorem}\label{thm:mcaes}
Maximal Connected Edge-induced Directed Acyclic Subgraphs are proximity searchable, and can be listed $O(m^3)$ time delay.
\end{theorem}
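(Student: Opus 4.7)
\medskip

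\noindent\textbf{Proof proposal.}
The plan is to reduce the statement to Theorem~\ref{thm:canonical-reconstructio} by verifying that the canonical ordering and neighboring function defined above give a canonical reconstruction in the sense of Definition~\ref{def:crecon}, and then to bound the cost of $\neighs(\cdot)$. First, I would establish the two key structural properties of the canonical order $e_1,\ldots,e_{|S|}$ that are already sketched right before the neighboring function is defined: (i) every prefix $\{e_1,\ldots,e_i\}$ forms a connected edge subgraph that is acyclic (hence a non-maximal solution, giving condition~(\ref{item:canonical1}) of Definition~\ref{def:crecon}), and (ii) writing $v_k$ for the later endpoint of $e_i$ in the underlying vertex order, either $\{e_1,\ldots,e_{i-1}\}\cap N_E^+(v_k)=\emptyset$ or $\{e_1,\ldots,e_{i-1}\}\cap N_E^-(v_k)=\emptyset$. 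Both follow essentially by inheriting Definition~\ref{def:mcaisorder} from the induced case through the rule that edges are ordered by the later endpoint: property~(i) follows from Lemma~\ref{lem:topcon} applied to the current prefix of the induced subgraph, and~(ii) is a direct rewriting of the in/out-neighbor condition at $v_k$ from vertices to edges.

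Next, I would prove that for any two solutions $S$ and $S^*$, setting $\de = e_{h+1}$ to be the canonical extender of $S,S^*$ and writing $\de = (v_t,v_h)$, one of the two solutions produced by $\neighs(S,v_t,v_h)$ increases the proximity. Let $v_k\in\{v_t,v_h\}$ be the later endpoint of $\de$ in the canonical vertex order of $S^*$. By property~(ii) applied to $S^*$, the edges of $S\scap S^* = \{e_1,\ldots,e_h\}$ are disjoint from either $N_E^+(v_k)$ or $N_E^-(v_k)$. In the first case, removing $N_E^+(v_h)$ (when $v_k=v_h$) or $N_E^-(v_t)$ (when $v_k=v_t$) from $S$ does not touch $S\scap S^*$. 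Moreover, $(S\scap S^*)\cup\{\de\}$ is connected by property~(i) applied to $S^*$, so it is contained in the connected component of $v_h$ (resp.\ $v_t$) after the removal, and hence survives inside the corresponding candidate; after applying $\comp(\cdot)$ we obtain $S'\in\neighs(S)$ with $|S'\scap S^*|>|S\scap S^*|$. This establishes condition~(\ref{item:canonical2}) of Definition~\ref{def:crecon}.

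The complexity part is the easier step. The $\comp(\cdot)$ function can be implemented in $O(m^2)$ time since the problem is hereditary: at most $m$ edges need to be tested for addition, and each test amounts to an $O(m)$-time acyclicity check. The neighboring function tries each edge $e\in E(G)\setminus S$ as canonical extender, producing two candidates per edge, each obtained by one component extraction and one $\comp(\cdot)$ call, for a total cost of $O(m^3)$ per recursive call; as discussed in Algorithm~\ref{alg:general} and in Appendix~\ref{sec:sol}, maintaining $\sol$ is negligible, so the delay matches the claimed bound.

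I expect the main obstacle to be the case analysis in the proximity-extension step, specifically making sure that when we translate property~(ii) from the vertex-level statement (used in the induced MCAIS proof) to the edge-level statement used here, the connectivity of the prefix $(S\scap S^*)\cup\{\de\}$ is preserved after removing either $N_E^+(v_h)$ or $N_E^-(v_t)$. The potential subtlety is that removing an in- or out-edge set of a single vertex could in principle disconnect parts of $S\scap S^*$ from $\de$, but since $\de$ itself is added back and the removed edges are all incident to the endpoint $v_k$ that sits at the ``frontier'' of $S\scap S^*$ in $S^*$'s canonical order, the component $\ccomp_{v_k}(\cdot)$ retains the entire prefix. Once this is carefully argued, all hypotheses of Definition~\ref{def:crecon} are met, and Theorem~\ref{thm:canonical-reconstructio} yields the claim.
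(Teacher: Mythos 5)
Your route is the paper's own: the same canonical edge order (edges sorted by their later endpoint in the \mcais vertex order of Definition~\ref{def:mcaisorder}), the same two-candidate neighboring function, the same reduction to Definition~\ref{def:crecon} via Theorem~\ref{thm:canonical-reconstructio}, and the same accounting ($O(m)$ candidate extenders, two $\comp(\cdot)$ calls each at $O(m^2)$) for the $O(m^3)$ delay. The connectivity subtlety you flag at the end is real but you resolve it exactly as the paper does: no edge of $(S\scap S^*)\cup\{\de\}$ is removed, and that set is connected by the prefix property, so it survives inside $\ccomp_{v_k}(\cdot)$.

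There is, however, a genuine gap at the central proximity-extension step. Property~(ii) only gives a \emph{disjunction} at the later endpoint $v_k$ of $\de$: the proximity avoids $N_E^+(v_k)$ or it avoids $N_E^-(v_k)$, and you never determine which disjunct holds. Your sentence ``In the first case, removing $N_E^+(v_h)$ (when $v_k=v_h$) or $N_E^-(v_t)$ (when $v_k=v_t$) \ldots'' does not follow from that hypothesis: if $v_k=v_t$ and the disjunct that happens to hold concerned $N_E^+(v_t)$, nothing would license removing $N_E^-(v_t)$ --- and the neighboring function never offers removing $N_E^+(v_t)$ or $N_E^-(v_h)$, so the argument would stall (the second case is also never treated). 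The missing one-line observation, which is precisely the paper's key point here, is that the disjunction always resolves in the favorable direction because the \emph{earlier} endpoint of $\de$ is itself a backward neighbor of $v_k$ on the side determined by the orientation of $\de$: if $v_k=v_t$, then $v_h$ is a backward out-neighbor of $v_t$ in the canonical order of $S^*$, so by Definition~\ref{def:mcaisorder} the backward in-neighborhood of $v_t$ is empty and the proximity is disjoint from $N_E^-(v_t)$, exactly the set removed by the first candidate; symmetrically, if $v_k=v_h$, then $v_t$ witnesses a nonempty backward in-neighborhood, forcing the backward out-neighborhood empty and justifying the removal of $N_E^+(v_h)$. With that observation added, your argument coincides with the paper's proof.
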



\section{Proximity search in polynomial space}\label{sec:pspace-expl}\label{sec:pspace}
\newcommand{\good}{\ensuremath{\mathcal{F}}\xspace}
\newcommand{\univ}{\kern0.75pt\mathcal{U}\xspace}

Proximity search consists in a graph traversal, where the number of nodes corresponds to that of solutions. If we store the set of visited nodes, as done in the algorithms presented until now, it follows that the space requirement of the algorithm becomes exponential in $n$.


Techniques such as reverse-search are able to turn this graph into a rooted tree, that can be traversed without keeping track of visited nodes, by means of a parent-child relationship among solutions, thus achieving polynomial space. However, known instances of reverse search have de facto relied on the problem at hand being hereditary, and the input-restricted problem being solvable in polynomial time (respectively, polynomial total time) to obtain polynomial delay (polynomial total time). 
Recently, a generalization of reverse-search to non-hereditary properties has been proposed in~\cite{conte2019framework}: this allows us to induce a parent-child relationship for maximal solutions in any \textit{commutable} set system (a class of set systems which includes both hereditary and connected-hereditary properties), and obtain maximal listing algorithm with polynomial space, and whose delay is linked to the input-restricted problem.

In this section we show that, when suitable conditions are met, it is possible to get the best of both worlds: on one hand, using proximity search to overcome the burden of the input-restricted problem and achieve polynomial delay; on the other, using~\cite{conte2019framework} to induce a parent-child relationship among solutions and achieve polynomial space at the same time.

%
The final goal of the section is proving the following result.

\begin{theorem}
Let $(\univ,\good)$ be a commutable set system, and $\neighs(S,s)$ a canonical reconstruction function for a proximity search algorithm (see Definition~\ref{def:crecon}).
If the canonical order relative to the function $\neighs(S,s)$ satisfies the properties of a prefix-closed order (Definition~\ref{def:prefix-closed-order}), the maximal solutions of $(\univ,\good)$ can be enumerated without duplication in polynomial delay and polynomial space.
\end{theorem}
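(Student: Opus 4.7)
The plan is to replace the exponential-space visited set $\sol$ of Algorithm~\ref{alg:general} with an implicit rooted tree over the maximal solutions, in the style of reverse search, and then show that traversing this tree can be done without backtracking on stored solutions. The key is to use the canonical order to define, for each non-root maximal solution $S$, a \emph{parent} $\pi(S)$ which is again a maximal solution. The natural candidate is to take the canonical order $s_1,\ldots,s_{|S|}$ of $S$, strip off a suffix in a canonical way (e.g.\ remove the last element, or the maximal suffix whose removal keeps the invariants required by the prefix-closed order), and apply $\comp(\cdot)$ to the resulting prefix. The prefix-closed order hypothesis is precisely what guarantees that every such prefix is a valid (non-maximal) solution, so that $\pi(S)$ is well-defined and polynomial-time computable, while the commutable set-system hypothesis is what we invoke from~\cite{conte2019framework} to guarantee that $\comp(\cdot)$ on this prefix behaves well, in particular that iterating $\pi$ terminates at the fixed root solution $\comp(\emptyset)$.

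With $\pi$ in hand, I would turn the solution graph induced by $\neighs(\cdot)$ into a tree by keeping only the edges $(\pi(S),S)$. The enumeration algorithm then becomes a DFS of this tree: at a node $S$, scan $\neighs(S)$, and recurse into $S'\in\neighs(S)$ if and only if $\pi(S')=S$. Correctness (i.e.\ that every maximal solution is reached) follows because, by Theorem~\ref{thm:strong} applied to the canonical reconstruction, the solution graph defined by $\neighs(\cdot)$ is strongly connected, and in particular the parent $\pi(S)$ of any $S$ satisfies $S\in\neighs(\pi(S))$: this must be established using the fact that $\pi(S)$ differs from $S$ only by suffix manipulation in the canonical order, so $S$ is exactly the solution produced by one of the removables of $\neighs(\pi(S),v)$ for $v$ the next canonical element of $S$ after the prefix kept in $\pi(S)$. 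This last step is the key compatibility check between canonical reconstruction and the prefix-closed order, and I would treat it as a lemma proved by direct inspection of Definition~\ref{def:crecon} and Definition~\ref{def:prefix-closed-order}.

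The complexity analysis is then standard: computing $\pi(S')$ costs at most $\ccost$ plus the time to identify the suffix, which is polynomial; enumerating $\neighs(S)$ costs polynomial time by Definition~\ref{def:crecon}, and the membership test ``$\pi(S')=S$'' is also polynomial, so the work per generated child, and hence the amortized cost per output, is polynomial. Polynomial delay is obtained by the alternative-output trick of~\cite{Uno2003}, exactly as in the proof of Theorem~\ref{thm:strong}. Space is polynomial since we only maintain the current recursion path through the tree, and at each level only the current solution and an iterator over $\neighs(S)$; crucially, we no longer need to store $\sol$. The main obstacle I anticipate is the parent-compatibility lemma sketched above: proving that $\pi$ is both a function (unique parent, requiring the canonical order and its prefix-closure to be deterministic) and consistent with $\neighs$ (so that children can be detected locally). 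Everything else follows from plugging these ingredients into the reverse-search machinery of~\cite{conte2019framework}, with the commutable-set-system hypothesis doing the heavy lifting to ensure $\pi$ terminates at a unique root.
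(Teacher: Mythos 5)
Your reverse-search skeleton matches the paper's intent, but there is a genuine gap at exactly the step you flag as the ``key compatibility check'': the claim that every child $S$ satisfies $S\in\neighs(\pi(S))$. Definition~\ref{def:crecon} does not give you this. Writing $P=\pi(S)$ and letting $w$ be the canonical extender of $P,S$, canonical reconstruction only guarantees that some removable $X_i$ is disjoint from the proximity $P\scap S$, hence that some $R=\comp((P\setminus X_i)\cup\{w\})\in\neighs(P,w)$ \emph{contains} the retained prefix together with $w$; it does not guarantee $R=S$. The set $(P\setminus X_i)\cup\{w\}$ may keep elements of $P$ that are not in $S$, and the completion may add yet other elements, so in general the child is not among the polynomially many solutions returned by $\neighs(P)$. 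Consequently your traversal rule ``recurse into $S'\in\neighs(S)$ iff $\pi(S')=S$'' can miss children, and completeness fails; the lemma you propose to prove ``by direct inspection'' of Definitions~\ref{def:crecon} and~\ref{def:prefix-closed-order} is false in general.

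The paper closes this gap in a different way: children of $P$ are not sought inside $\neighs(P)$ itself. In Algorithm~\ref{alg:pspace}, for each candidate $w$ and each $R\in\neighs(P,w)$, one does not take $R$ as a child; instead one extracts from $R$ the prefix of elements up to $w$ with respect to the prefix-closed order $\Pi(R,s)$, for every admissible seed $s$, applies the \emph{lexicographic} $\comp(\cdot)$ (completion by repeatedly adding the smallest addible element in the prefix-closed order --- a requirement on $\comp$ your proposal omits) to that prefix, and keeps the resulting $S$ only if a consistency check on its parent, parent index, seed, and a canonical choice of $R$ succeeds. The ``core property'' of~\cite{conte2019framework} guarantees that any $S$ whose core plus parent index is contained in $R$ is recovered by this prefix-and-recomplete step, and the consistency check ensures each solution is generated exactly once, which is what yields polynomial delay and polynomial space. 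Relatedly, your parent must be the specific one of~\cite{conte2019framework} (completion of the \emph{longest} prefix whose completion differs from $S$), not an arbitrary ``strip the last element'' rule, for both the recovery argument and the progress-to-a-root argument to go through.
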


\subsection{Requirements and notation of~\cite{conte2019framework}}\label{sec:framework}


\newcommand{\seed}{\textsc{seed}\xspace}
\newcommand{\start}{\textsc{start}\xspace}
\newcommand{\parent}{\textsc{parent}\xspace}
\newcommand{\parind}{\textsc{pi}\xspace}
\newcommand{\core}{\textsc{core}\xspace}
\newcommand{\restr}{\textsc{restr}\xspace}

\newcommand{\gen}{\textsc{gen}\xspace}
\newcommand{\children}{\textsc{children}\xspace}
\newcommand{\cand}{\textsc{cand}\xspace}
\newcommand{\pcheck}{\textsc{p-check}\xspace}
\newcommand{\choice}{\textsc{choose}\xspace}
\newcommand{\erre}{\textsc{r}\xspace}

\let\oldnl\nl
\newcommand{\nonl}{\renewcommand{\nl}{\let\nl\oldnl}}


Let us briefly recall the requirements of~\cite{conte2019framework}. 
In a set system $(\univ,\good)$, $\univ$ is the 
\textit{ground set}, i.e., the elements constituting the solutions, and $\good$ defines the solutions, i.e., $S\in\good$ iff $S\subseteq \univ$ satisfies the property at hand. 

A set system is \textit{strongly accessible} if for any two distinct solutions $S,S'\in\good$ with $S \subset S'$, there exists an element $x\in S'\setminus S$ such that $S\cup \{x\}\in \good$. This is equivalent to saying that any non-maximal solution can be extended into a larger solution with a single element.

We say that a set system is \textit{commutable} if (i) it is strongly accessible, and (ii) it respects the \textit{commutable property}: for any $S,T\in \good$ with $S\subset T$, and any $a,b\in T\setminus S$, we have that $S\cup\{a\}\in\good \land S\cup\{b\}\in\good $ implies $S\cup\{a,b\}\in\good$. 
As mentioned in~\cite{conte2019framework}, it is straightforward to see that both hereditary and connected-hereditary properties correspond to commutable set systems.

Furthermore, we call $Z$ the set of ``singleton solutions'', i.e., $Z = \{e\in \univ : \{e\}\in\good\}$, and recall that in any strongly accessible set system $Z\cap S\ne \emptyset$ for any $S\in\good$. We also define, $S^+ = \{x : S\cup \{x\}\in\good \}$.

Given any commutable set system, we can obtain a maximal listing algorithm with two components. Firstly we need an efficient algorithm for solving the input-restricted problem. Secondly, to induce a parent-child structure we need what is called a \textit{family of prefix-closed orders} for the problem, satisfying the following properties:

\begin{definition}[Prefix-closed orders, from~\cite{conte2019framework}]
\label{def:prefix-closed-order}
Let $\Pi(X,v)$ be a family of orders parameterized by $X \in \good$ and $v \in X\cap Z$ such that $\Pi(X,v)$ yields a permutation of $X \cup X^+$. 
For $X \in \good$ and $v \in X\cap Z$, let us denote by $x^v_1,\ldots,x^v_k$ the elements of $X$ ordered according to $\Pi(X,v)$.\footnote{Note that $x_1=v$ and that $x_i, x_{i+1} \in X$ are not necessarily consecutive in $\Pi(X,v)$ as some elements from $X^+$ can be interleaved with them.} We call the family $\Pi$ \emph{prefix-closed} if for all $X \in \good$ and $v \in X \cap Z$, and $i \in \{1,\ldots,k-1\}$, the following properties hold:
\begin{description}
    \item[(first)] The minimal element is $v$, i.e., $x^v_1=v$.

    \item[(prefix)] The $i$-th prefix $X_i=\{x^v_1,\ldots,x^v_i\}$ of $X$ is a solution, i.e., $X_i \in \good$.

    \item[(greedy)] The element $x_{i+1}$ is the minimal element of $X_i^+\cap X$ with respect to the order $\Pi(X_i,v)$.
\end{description}
\end{definition}

As explained in~\cite{conte2019framework}, each subset $X$ of a maximal solution $S$ does not necessarily belong to $\good$ (as the set system is not necessarily hereditary). The \emph{first} property indicates that we can build $S$ starting from an element $v  \in S \cap Z$, whereas the \emph{greedy} property indicates that we can iteratively expand $X =\{v\}$ by considering the elements of $X \cup X^+$ in a prefix-closed order, so that at any point, the prefix $\{x_1,\ldots,x_j\}$ found so far is a solution thank to the \emph{prefix} property.
 
We use the shorthand notation $\prec^{t}_{S}$ to represent $\Pi(S,t)$, where $a\prec^{t}_{S} b$ for any two elements $a,b\in\univ$ means that $a$ occurs before $b$ in $\Pi(S,t)$.


Given a solution $S\in\good$ we define its \emph{seed}, $\seed(S)$, as the element of smallest id in $S\cap Z$, i.e., the element $s$ of smallest id in $S$ such that $\{s\}\in\good$. Observe that every non-empty solution $S$ of a strongly accessible set system has a seed: since $\emptyset \subset S$, there is some $s\in S \setminus \emptyset$ such that $\emptyset\cup \{s\} = \{s\}\in \good$.

The simplified notations $\prec_{S}$ corresponds to $\prec^t_{S}$ with $t = \seed(S)$. When $S$ is a maximal solution, $\prec_S$ defines an order $s_1,\ldots,s_{|S|}$ which is called the \emph{solution order} of $S$.

As in~\cite{conte2019framework}, we will also require a \textit{lexicographic} $\comp(\cdot)$ function: for a solution $S$, $\comp(S)$ must be obtained by iteratively adding to $S$ the smallest element in $S^+$ according to the order $\prec_S$ (i.e., the earliest in $\Pi(S,\seed(S))$), until $S^+$ is empty. The resulting solution is maximal by definition of strongly accessible set systems. We remark that this alternative definition of $\comp(S)$ still returns a maximal solution containing $S$, and is thus compatible with canonical reconstruction (Definition~\ref{def:crecon}).


Finally, given the canonical ordering $s_1, \ldots, s_{|S|}$ of $S$, the \textit{core} $\core(S)$ of $S$ is the \textit{longest} prefix $s_1, \ldots, s_i$ of this order such that $\comp(s_1, \ldots, s_i)\ne S$; its \textit{parent} is $\parent(S) = \comp(\core(S)) = \comp(s_1, \ldots, s_i)$; its parent index is $\parind(S) = s_{i+1}$, i.e., the element following the last one of the core. It follows by definition of parent that $\comp(\core(S)\cup \{\parind(S)\}) = \comp(s_1, \ldots, s_{i+1}) = S$.

The function $\parent(S)$ defines a forest among solutions, as every solution has a unique parent, except for the ones such that $\comp(\seed(S)) = S$ which are called \textit{roots}, and indeed correspond to the roots of the forest: these are linear in number (as each has a unique seed) and can be found by calling $\comp(\{u\})$ for any $u\in \univ$. The function $\children(P,w)$ lets us perform a traversal of this structure, since it will find all $S$ such that $P=\parent(S)$ and $w = \parind(S)$.


\subsection{Combining proximity search with \cite{conte2019framework}}\label{sec:pcombination}

\newcommand{\prefMacro}{\mathit{prefix}\xspace}
\begin{algorithm2e}[ht]
\caption{Polynomial-space proximity search}\label{alg:pspace}
\small%
\SetKwInOut{Input}{Input}
\SetKwInOut{Output}{Output}
%
\Input{Commutable set system $(\univ,\good)$\\Prefix-closed order family $\preceq^s_S$\\$\neighs(S,s)$ for canonical reconstruction based on $\preceq^s_S$}
\Output{All maximal $X\in\good$}
\BlankLine
\ForEach{$S$ \textnormal{such that} $\comp(\seed(S)) = S$\label{ln:ps:roots}}{
     $\rec( S )$
}

\SetKwProg{myproc}{Function}{}{}
  \nonl \myproc{$\rec(X)$}{
  \tcc{Output $X$ if depth is odd}
  
  \ForEach{$w\in \univ \setminus X$}{
    \ForEach{$S\in \children(X,w)$\label{ln:ps:rec}}{
    		$\rec(S)$
    }
  }
  \tcc{Output $X$ if depth is even}
}


\SetKwProg{myproc}{Function}{}{}
  \myproc{$\children (P, w)$}{
        \ForEach{$R\in \neighs(P,w)$\label{ln:ps:r}}{
            \ForEach{$s \in (R\cap Z) \setminus \{w\}$\label{ln:ps:s}}{
                $\prefMacro \gets \{x\in R : x\preceq_{R}^s w\}$\label{ln:ps:pref}\;
                $S \gets \comp(\prefMacro)$\;                
                \textbf{if} $\langle \parent(S), \parind(S), \seed(S), \erre(S)\rangle = \langle P,w,s,R\rangle$
                \textbf{then} \textbf{yield} $S$\label{ln:ps:pcheck}\;
            }
        }
    }

\myproc(\tcc*[f]{finds the first $R$ that can generate $S$}){$\erre (S)$}{
    $P\gets \parent(S)$\; 
    $w\gets \parind(S)$\;
    $s\gets \seed(S)$\;
    \ForEach{$R\in \neighs(P,w)$}{
        $\prefMacro \gets \{x\in R : x\preceq_{R}^s w\}$\;
        \lIf{$\comp(\prefMacro) = S$}{\textbf{return} $R$}  
    }
}

\end{algorithm2e}

When using proximity search in the \textit{canonical reconstruction} flavour, we use a canonical order to define the proximity by Definition~\ref{def:proximity}, and a suitable $\neighs(S,s)$ function such that together they satisfy Definition~\ref{def:searchable}.
In this section we show that we can combine proximity search and~\cite{conte2019framework} for commutable properties, if we can produce a canonical order for the canonical reconstruction that corresponds to the solution order induced by $\prec_S$.

We then show in Section~\ref{sec:pcord} that it is possible to meet these conditions for canonical orderings that are defined in a greedy way, e.g., by a BFS order like in bipartite subgraphs. 
%
Assuming that we meet these conditions, i.e., we have a $\neighs(S,s)$ function that fits canonical reconstruction (Definition~\ref{def:crecon}), based on a canonical order defined by a prefix-closed order $\preceq_S$, we define a variant of~\cite{conte2019framework}, showed in Algorithm~\ref{alg:pspace}.

The main idea behind this combination comes from the following observation: the parent $P = \parent(S) = \comp(\core(S))$ of $S$ is obtained from a prefix of $S$, and extending this prefix with $\parind(S)$, then applying $\comp(\cdot)$, gives us $\comp(\core(S)\cup\parind(S)) = S$ (see definitions in Section~\ref{sec:framework}). 
On the other hand, we will show that applying Definition~\ref{def:proximity}, $P \scap S$ is exactly $\core(S)$. 
Relying on the neighboring function $\neighs(P, \parind(S))$ of canonical reconstruction, and the \textit{core property} defined in~\cite{conte2019framework}, we are able to find the set $\core(S)\cup\parind(S)$, and finally obtain $S$.

We can now state:

\begin{theorem}
Given a commutable set system $(\univ,\good)$, a prefix-closed order family $\preceq^s_S$ for $(\univ,\good)$, and a function $\neighs(S,s)$ for canonical reconstruction (Definition~\ref{def:crecon}) based on $\preceq^s_S$, Algorithm~\ref{alg:pspace} enumerates all maximal solutions of $(\univ,\good)$ without duplication in polynomial delay.
\end{theorem}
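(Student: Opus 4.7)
The plan is to prove three things: (a) every maximal solution is produced by some recursive call, (b) no solution is produced twice, and (c) the cost between consecutive outputs is polynomial. The reverse-search machinery of~\cite{conte2019framework} already organizes maximal solutions into a forest via $\parent(\cdot)$ whose roots are exactly the $S$ with $\comp(\seed(S))=S$, and these are enumerated explicitly on line~\ref{ln:ps:roots}; it therefore suffices to show that for every non-root solution $S$, the call $\children(P,w)$ with $P=\parent(S)$ and $w=\parind(S)$ yields $S$ exactly once.

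The central link between the two frameworks is the identity $P \scap S = \core(S)$. By definition, $\core(S)=s_1,\ldots,s_j$ is a prefix of the canonical order of $S$, and $P=\comp(\core(S))\supseteq \core(S)$ gives $|P \scap S|\geq j$. For the reverse direction, if $s_{j+1}$ were in $P$, the greedy property of the prefix-closed order combined with the lexicographic behaviour of $\comp(\cdot)$ would force $\comp(s_1,\ldots,s_j)$ to already absorb $s_{j+1}$, and then $P=\comp(s_1,\ldots,s_{j+1})=S$, contradicting the defining condition $\comp(\core(S)) \neq S$ of the core. Hence $w=\parind(S)=s_{j+1}$ is exactly the canonical extender of the pair $(P,S)$ in the sense of Definition~\ref{def:crecon}. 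The canonical-reconstruction hypothesis then produces some $R\in \neighs(P,w)$ whose canonical order starts with $\core(S)\cup\{w\}$; choosing $s=\seed(S)$ on line~\ref{ln:ps:s}, the prefix extracted on line~\ref{ln:ps:pref} contains $\core(S)\cup\{w\}$ and is contained in $S$, so the subsequent $\comp(\cdot)$ recovers $S$ and it is produced at least once.

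To prevent duplication, the check on line~\ref{ln:ps:pcheck} accepts only the triple with $R=\erre(S)$, which is a deterministic function of $S$, so each solution is output exactly once. The per-call cost is polynomial because $|\neighs(P,w)|$ and $|R\cap Z|$ are polynomial and each inner operation (prefix extraction, $\comp(\cdot)$, and the $\erre(\cdot)$ check) runs in polynomial time; since every recursive call corresponds to a unique output, alternating pre-order and post-order output based on recursion parity (as in Theorem~\ref{thm:strong}) converts the amortized bound into polynomial delay. The main obstacle I expect is the identity $P \scap S = \core(S)$: it reconciles two different prefix-based notions --- the proximity (Definition~\ref{def:proximity}) and the reverse-search core --- and rests crucially on the greedy property of the prefix-closed order together with the lexicographic form of $\comp(\cdot)$, as well as a careful treatment of the boundary case $\core(S)=\emptyset$, which coincides with $S$ being a root and is already absorbed into the root enumeration on line~\ref{ln:ps:roots}.
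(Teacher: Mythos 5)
Your overall structure matches the paper's: reduce correctness to showing that each non-root $S$ is produced exactly once by $\children(\parent(S),\parind(S))$, use the proximity between $\parent(S)$ and $S$ to obtain a suitable $R\in\neighs(\parent(S),\parind(S))$, recover $S$ via the core property of~\cite{conte2019framework}, and eliminate duplicates through the deterministic choice $\erre(S)$ on line~\ref{ln:ps:pcheck}. The genuine gap is in your key identity $P\scap S=\core(S)$, specifically in the argument by which you exclude the case $w=\parind(S)\in P$. The greedy property of a prefix-closed order only says that $s_{j+1}$ is minimal in $X_j^+\cap S$, whereas the lexicographic $\comp(\cdot)$ used to form $P=\comp(\core(S))$ repeatedly adds the minimum of the \emph{whole} set $X^+$, which may be an element outside $S$. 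Hence $\comp(s_1,\ldots,s_j)$ can first absorb elements not in $S$ and only afterwards absorb $s_{j+1}$; in that situation $s_{j+1}\in P$ and yet $P\neq\comp(s_1,\ldots,s_{j+1})=S$, so no contradiction arises. Your dichotomy therefore breaks down: when $w\in P$ the proximity $P\scap S$ may strictly contain $\core(S)$, $w$ is not the canonical extender of the pair $(P,S)$, and Definition~\ref{def:crecon} no longer hands you an $R\supseteq\core(S)\cup\{w\}$ by the route you describe.

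The paper does not rule this case out; it handles it. When $w\in P$, the convention of Definition~\ref{def:crecon} (the footnote setting $\neighs(P,w)=\{P\}$ for $w\in P$) makes $R=P$ itself the required solution, since trivially $P\supseteq\core(S)\cup\{w\}$; when $w\notin P$, your identity $P\scap S=\core(S)$ does hold (the proximity contains $\core(S)$ because $P\supseteq\core(S)$, and cannot reach $s_{j+1}\notin P$), and the canonical-reconstruction guarantee applies exactly as you state. With this case split inserted, your argument goes through and essentially coincides with the paper's proof. One further caution: the claim that the prefix extracted on line~\ref{ln:ps:pref} is contained in $S$ and contains $\core(S)\cup\{w\}$, so that $\comp(\cdot)$ recovers $S$, is precisely the core property of~\cite{conte2019framework}; it is not immediate from the definitions and should be invoked as a cited result rather than asserted in passing.
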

\begin{proof}
To prove the correctness, we show that any $S$ is found in $\children(P,w)$ when $P=\parent(S)$ and $w = \parind(S)$.

We will first prove that there exists a solution $R \in \neighs(P, \parind(S))$ (on Line~\ref{ln:ps:r}) such that $\core(S)\cup\{\parind(S)\}\subseteq R$.

Consider the proximity $P\scap S$ by Definition~\ref{def:proximity}: the longer prefix of the solution order of $S$ that is completely in $P$ must include $\core(S)$ since $P=\comp(\core(S))$. If $w\in P$ then $\neighs(S,w)$ returns $P$ by Definition~\ref{def:crecon}, and indeed $P\supseteq \core(S)\cup\{w\}$.

Otherwise, $P$ does not include $\parind(S)$, meaning that $P\scap S = \core(S)$ and that $w$ is the canonical extender for $P,S$.
Using the neighboring function $\neighs(P, \parind(S))$ we obtain at least one solution $R \supseteq \core(S)\cup\parind(S)$.

Using the \textit{core property} defined in~\cite{conte2019framework}, we are able to use $R$ to retrieve $S$: It is proven that Lines~\ref{ln:ps:r}-\ref{ln:ps:pcheck} will find and output any solution $S$ such that $\core(S)\cup\{\parind(S)\}\subseteq R$, a condition which is guaranteed by what stated above.

The \textbf{if} on Line~\ref{ln:ps:pcheck} removes duplication: any $S$ is found only once out of all invocations of $\children(P,w)$: when $P = \parent(S)$, $w = \parind(S)$, $s=\seed(S)$, and $R = \erre(S)$. The function $\erre(S)$ simply aims at defining deterministically one single $R \supseteq \core(S)\cup\{\parind(S)\}$ once the other 3 variables have been fixed. It thus follows that this check is passed exactly once out of the whole execution of the algorithm for any solution (other than the roots, found on Line~\ref{ln:ps:roots}).

Line~\ref{ln:ps:roots} shows that, by definition, all the roots of the forest are explored by Algorithm~\ref{alg:pspace}. We just proved that Line~\ref{ln:ps:rec} discovers all the children of each visited node exactly once, which concludes the proof of the fact that Algorithm~\ref{alg:pspace} visits every maximal solution of $(\univ,\good)$ without duplication.
\end{proof}



It is also straightforward to see that each recursive call uses polynomial space, and no solution dictionary $\sol$ is maintained. However, the depth of the recursion tree is a factor in the space complexity too: to obtain a polynomial space guarantee, we further need to turn the recursive algorithm into a \textit{stateless} iterative one, as has been done in~\cite{conte2019framework}.

We can give a general bound with the following parameters: let $q$ be the maximum size of a solution; $\restrtime$ be the time required to solve $\neighs(P,w)$; $\restrbound$ a bound on the number of solutions returned by it; $\ccost$ be the time required to compute $\comp(X)$ and $\ocost$ the time required to compute the canonical order of $X\cup X^+$. As these bounds are all assumed to be polynomial, we observe their space requirements will be polynomial as well.

Thanks to the alternative output technique, the delay will be bounded by the cost of one iteration of $\rec(X)$, that is, $O(|\univ|)$ times the cost of $\children(P,w)$. In turn, the cost of $\children(P,w)$ is that of $\neighs(P,w)$, plus for each of the $O(\restrbound)$ solutions $R$ returned, the cost of processing Lines~\ref{ln:ps:s}-\ref{ln:ps:pcheck}. \cite{conte2019framework} proved that this can be done in  $O(q(\ocost+\ccost))$ time for the given definition of $\erre(S)$. However, our definition of $\erre(S)$ is different from the one in~\cite{conte2019framework}, and has a cost of $O(\restrtime + \restrbound\ocost)$ instead of $O(\ocost+\ccost)$. Thus, the total cost of processing Lines~\ref{ln:ps:s}-\ref{ln:ps:pcheck} is $O(q(\restrtime + \restrbound\ocost + \ccost))$.


We can thus claim the following:

\begin{theorem}\label{thm:prox-pspace-bound}
Given a commutable set system $(\univ,\good)$, a prefix-closed order family $\preceq^s_S$ for $(\univ,\good)$, and a function $\neighs(S,s)$ for canonical reconstruction (Definition~\ref{def:crecon}) based on $\preceq^s_S$, 
the maximal solutions of $(\univ,\good)$ can be enumerated in 
$O(|\univ|\restrtime + |\univ| \restrbound q(\restrtime + \restrbound\ocost + \ccost))$ time delay and polynomial space.
\end{theorem}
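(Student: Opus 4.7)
The plan is to inherit correctness from the preceding theorem and focus on establishing the stated delay bound and the polynomial space guarantee. Correctness already tells us that Algorithm~\ref{alg:pspace} visits every maximal solution exactly once; what remains is a careful accounting of the work between two outputs and a transformation that removes the implicit cost of the recursion stack.

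For the space bound, I would observe that Algorithm~\ref{alg:pspace} as written uses space proportional to the depth of the parent-child forest, which can be exponential in $|\univ|$. The key step is to convert the depth-first recursion into a \emph{stateless} iterative traversal of the forest, exactly in the style used in \cite{conte2019framework}. Each node of the forest, being a solution $S$, fits in polynomial space; since $\parent(S)$ is computable from $S$ using only $\core(\cdot)$ and $\comp(\cdot)$, and $\children(P,w)$ is computable from $(P,w)$ using only $\neighs$, $\comp$, and the prefix-closed order, we can move from $S$ to its parent and from a parent to its next unvisited child without keeping an explicit stack. The position among siblings is encoded deterministically by the pair $(w,R)=(\parind(S),\erre(S))$: given the current $S$, we recompute $\parent(S)$, iterate $\children(\parent(S),\cdot)$ from the point just after $S$ to find the next sibling, and when exhausted we set $S\leftarrow \parent(S)$ and repeat. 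All intermediate structures used here (the order $\preceq_S$, a single $R$, the prefix, and comp-candidates) are polynomial in $|\univ|$.

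For the delay bound, I would apply the alternative-output technique: output $S$ in pre-order at even recursion depth and in post-order at odd depth. This makes the delay bounded by the work of one invocation of $\rec(X)$, which in turn consists of $O(|\univ|)$ calls to $\children(X,w)$. Inside $\children(P,w)$, the call to $\neighs(P,w)$ costs $\restrtime$ and yields $O(\restrbound)$ candidates $R$. For each $R$ and each $s\in(R\cap Z)\setminus\{w\}$ (at most $q$ seeds), we build the prefix in $\ocost$ time, call $\comp$ in $\ccost$ time, and execute the duplicate-elimination check on Line~\ref{ln:ps:pcheck}. The dominant ingredient of the check is $\erre(S)$, which must re-run $\neighs(P,w)$ (cost $\restrtime$) and, for each of its $O(\restrbound)$ outputs $R'$, build a prefix and apply $\comp$ (cost $\ocost+\ccost$). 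Summing these contributions gives
\[
\text{cost of } \children(P,w) \;=\; O\bigl(\restrtime + \restrbound\, q\,(\restrtime + \restrbound\,\ocost + \ccost)\bigr),
\]
and multiplying by the $O(|\univ|)$ calls per invocation of $\rec(X)$ yields the stated delay $O(|\univ|\restrtime + |\univ|\restrbound\, q\,(\restrtime+\restrbound\,\ocost+\ccost))$.

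The main obstacle I expect is ensuring that the stateless iterative traversal is genuinely implementable in polynomial space under our definition of $\erre(S)$, which differs from the one in \cite{conte2019framework}. One must check two things: first, that the enumeration order of the pairs $(R,s)$ on Lines~\ref{ln:ps:r}--\ref{ln:ps:s} is canonical, so that given any $S$ we can unambiguously reconstruct which sibling comes next; second, that computing $\parent(S)$ and $\erre(S)$ from $S$ alone does not require any exponential auxiliary structure. Both follow once we fix a canonical iteration order over $\neighs$ (e.g., the order in which the removables $X_i$ are generated by the canonical reconstruction) and over the seeds of $R$; then the entire traversal needs only $O(|\univ|)$ additional bookkeeping on top of the current solution, which, combined with the polynomial bounds on $\restrtime,\restrbound,\ocost,\ccost$ assumed in the hypothesis, gives the claimed polynomial space guarantee.
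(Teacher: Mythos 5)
Your proposal follows essentially the same route as the paper: correctness is inherited from the preceding theorem, the delay is obtained via the alternative-output technique as $O(|\univ|)$ times the cost of $\children(P,w)$ with the same parameterized accounting (including the modified cost of the new $\erre(S)$), and polynomial space is obtained by converting the recursion into a stateless iterative traversal as in~\cite{conte2019framework}. The only small wrinkle is internal to your bookkeeping: you charge $\erre(S)$ a cost of $O(\restrtime+\restrbound(\ocost+\ccost))$ yet conclude with the per-seed term $\restrtime+\restrbound\ocost+\ccost$, whereas the paper charges $\erre(S)$ only $O(\restrtime+\restrbound\ocost)$; this does not change the approach or the claimed bound's derivation as given in the paper.
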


\subsection{BFS-based canonical reconstruction}\label{sec:pcord}\label{sec:bfsproximity}

In this section, we provide a technique to implement the result of Section~\ref{sec:pcombination} (Theorem~\ref{thm:prox-pspace-bound}), i.e., a canonical reconstruction order that matches the prefix-closed order requirements, and can be applied to hereditary and connected-hereditary properties. We call this technique \textit{BFS-based canonical reconstruction}.

While it is possibly not the only way to obtain a suitable order, it is worth defining formally as we will apply it to several problems in the following sections.

We will first define the order for connected-hereditary property, then exploit it to cover the hereditary case.\footnote{Notably, this implies that a BFS-based canonical reconstruction algorithm for the non-connected case immediately follows from one for the connected case.}

\begin{definition}[canonical-BFS order for connected-hereditary properties]\label{def:canonBFSconn}
Let $S$ be a solution of a connected-hereditary set system, and $v$ any element in $S$.
The canonical order $\Pi(S,v) = s_1,\ldots ,s_{|S\cup S^+|}$ is the lexicographical order of the tuples $\langle d_{v}(s_i),s_i\rangle$, where $d_{v}(s_i)$ is the distance between $s_i$ and $v$ in $G[S\cup \{s_i\}]$.
\end{definition}

In other words, we order nodes first by $d_{v}(s_i)$, i.e., their distance from $v$ in $G[S]$, and break ties by vertex id.
The same logic applies to nodes $x$ of $S^+$, for which we use the distance from $v$ in $G[S\cup\{x\}]$.
This defines $\preceq^s_S$. 

\begin{example*}
For the Maximal Connected Induced Bipartite Subgraph in Figure~\ref{fig:ex:bip} (b), the order $2,3,5,8,11,7,10$ (as defined in Section~\ref{sec:reconstruction}) is given by the tuples\linebreak $\langle0,2\rangle , \langle 1,3\rangle, \langle 1,5\rangle, \langle 2,8\rangle, \langle 2, 11\rangle, \langle 3,7\rangle, \langle 3,10\rangle$.
\end{example*}

We can observe how this canonical-BFS order $\Pi(S,v)$ satisfies the properties of Definition~\ref{def:prefix-closed-order}:\footnote{For completeness, we could equivalently observe that $d_{v}(s) = LAY^{v}_S(s)$ according to Definition~7 in~\cite{conte2019framework}.}

\begin{itemize}
    \item[\textbf{(first)}] The first element $s^v_1$ of $\Pi(S,v)$ is $v$, as $d_{v}(v)=0$ and $d_{v}(\cdot)\ge1$ for any other vertex.
    \item[\textbf{(prefix)}] Any prefix $S_i=\{s^v_1,\ldots,s^v_i\}$ of $\Pi(S,v)$ is connected (thus a solution), since for any $s_i$, the vertices on a shortest path in $G[S]$ to $s_1$ are at a smaller distance from $s_1$ and thus occur before $s_i$.
    \item[\textbf{(greedy)}] For any $z \in S_i^+ \cap S$, let $k$ be the distance between $v$ and $z$ in $G[S_i \cup \{z\}]$. Since $z\in S_i^+$, there must be some $w \in S_i\cap N(z)$ at distance $k-1$ from $v$ (in $G[S_i \cup \{z\}]$). This means that the distance between $v$ and $z$ in $G[S]$ is still $k$: otherwise, there would be a vertex $y\in S\setminus S_i$, i.e., after $w$ in the canonical order, that is a neighbor of $z$ and has distance $\le k-2$ from $v$ in $G[S]$; this leads to contradiction since $y$ would then need to come \textit{before} $w$ in the BFS order.  
    
\end{itemize}

It follows that the canonical-BFS order is a prefix-closed order.
now straightforward to see how this order satisfies the \textbf{(first)}, \textbf{(greedy)} and \textbf{(prefix)} properties of Definition~\ref{def:prefix-closed-order}, and essentially corresponds to the layer order defined in~\cite{conte2019framework}.

\begin{definition}[canonical-BFS order for hereditary properties]\label{def:canonBFSnon}
Let $S$ be a solution of a hereditary set system, and $v$ any element in $S$.
For each connected component $C_i$ of $G[S]$, we say the \textit{leader} of the component is $v$ if $C_i$ contains $v$, and otherwise the vertex of smallest id in $C_i$.

The canonical order $\Pi(S,v) = s_1,\ldots ,s_{|S\cup S^+|}$ (defined on $S\cup S^+$) is the lexicographical order of the tuples $\langle cid(S,s_i), d_l(S,s_i), s_i \rangle$, where for $s_i$ in the component $C_i$, $cid(S,s_i)$ is the id of the leader of $C_i$, or $0$ if this leader is $v$ (assuming wlog $0$ is smaller than any other id), and $d_l(S,s_i)$ is the distance from the leader of $C_i$ in $G[C_i]$. Observe how $s_1 = v$.
For a vertex $x$ in $S^+$, we use as $cid(S,x)$ and $d_l(S,x)$ the values obtained in $G[S\cup\{x\}]$.
\end{definition}

Less formally, we order each component by a BFS strategy as in the above case (since $G[C_i]$ is connected) using the leader as root (i.e., $s$ if the component contains $s$, or its smallest id vertex otherwise); then, we concatenate the sequences obtained by putting the one containing $s$ first, followed by the others ordered by id of their leader. 

\begin{example*}
For the Maximal Induced Bipartite Subgraph in Figure~\ref{fig:ex:bip} (d), the order is $1,2,7,8,11,10$, given by the tuples \linebreak $\langle 1,0,1\rangle , \langle 1,1,2\rangle, \langle 7,0,7\rangle, \langle 7,1,8\rangle, \langle 7,1,11\rangle, \langle 7,2,10\rangle$.
\end{example*}


Before proving that this defines a prefix-closed order, let us prove this auxiliary lemma:

\begin{lemma}
\label{lem:samecid}
Let $X$ be a solution and $X_i$ any prefix of its canonical order. The following facts hold:
\begin{itemize}
    \item $\forall z \in X_i^+\cap X, cid(X_i,z) \ge cid(X,z)$ 
    \item $cid(X_i ,x_{i+1}) = cid(X,x_{i+1})$
    \item $\forall z\in X_i^+\cap X, cid(X_i,z) = cid(X,z) \Rightarrow d_l(X_i,z) = d_l(X,z)$.
\end{itemize}
\end{lemma}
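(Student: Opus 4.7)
The plan is to reduce all three statements to a single structural observation: for any $z \in X_i^+ \cap X$, the connected component $C'$ of $z$ in $G[X_i \cup \{z\}]$ is a subset of the connected component $C$ of $z$ in $G[X]$, since $X_i \cup \{z\} \subseteq X$. The three facts then become statements about how the leader, and distances to it, behave when we shrink $C$ to $C'$; the BFS-style definition of the canonical order is what makes this shrinking ``gentle''.

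For fact 1 I would proceed by cases on whether $v \in C$. If $v \in C$, then $cid(X,z)=0$: either $v \in C'$ as well and equality holds, or $v \notin C'$ and $cid(X_i,z)$ is a positive vertex id, hence strictly larger. If $v \notin C$, then $v \notin C'$ either, so $cid(X_i,z)$ and $cid(X,z)$ are simply the minimum ids of $C'$ and $C$, and $C' \subseteq C$ yields the inequality.

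For fact 2, the core goal is to show that the leader $l$ of $x_{i+1}$'s component $C$ in $X$ actually lies in $C'$; once this is established, $l$ must also be the leader of $C'$, since either $l=v$ (and $v \in C'$) or $l$ is the minimum id in $C$ and hence also in $C' \subseteq C$. To prove $l \in C'$ I would take any shortest path from $x_{i+1}$ to $l$ inside $G[C]$: every interior vertex $y$ satisfies $cid(X,y) = cid(X,x_{i+1})$ and $d_l(X,y) < d_l(X,x_{i+1})$, so its tuple in the canonical order of $X$ strictly precedes that of $x_{i+1}$ and hence $y \in X_i$. The whole path then lies in $X_i \cup \{x_{i+1}\}$, placing $l$ in the same component as $x_{i+1}$, i.e.\ in $C'$.

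Fact 3 will follow by the same BFS-style path argument applied to $z$: the hypothesis $cid(X_i,z) = cid(X,z)$ already fixes the leader $l$ as common to both components; the inequality $d_l(X_i,z) \ge d_l(X,z)$ is immediate from $C' \subseteq C$, and the reverse inequality comes from observing that every interior vertex of a shortest $z$-to-$l$ path in $G[C]$ has strictly smaller $d_l$ and the same $cid$, hence precedes $z$ in the canonical order of $X$ and belongs to $X_i$. The only real bookkeeping hurdle, and the part I expect to have to state carefully, is the $cid=0$ convention for the case $l=v$: throughout the argument $v$ plays the role of a vertex smaller than every other id, both in the tuple order and in the leader definition, so the case analysis collapses once this convention is made explicit.
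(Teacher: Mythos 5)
Your handling of the first two bullets is correct. For the first bullet you actually need less than the paper uses: the containment of $z$'s component $C'$ in $G[X_i\cup\{z\}]$ inside its component $C$ in $G[X]$, plus the convention that $v$ counts as smaller than every id, already gives $cid(X_i,z)\ge cid(X,z)$ with no BFS property at all. Your second bullet works precisely because $x_{i+1}$ is the \emph{next} element of $X$ in $\Pi(X,v)$, so for elements of $X$ ``precedes $x_{i+1}$ in the order'' is literally equivalent to ``lies in $X_i$'' (you should also note that the endpoint $l$, not only the interior vertices, precedes $x_{i+1}$, but the same tuple comparison covers it).

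The third bullet, however, has a genuine gap. There $z$ is an \emph{arbitrary} element of $X_i^+\cap X$, not the $(i{+}1)$-st one, so the inference ``$y$ precedes $z$ in the canonical order of $X$, hence $y\in X_i$'' is invalid: $y$ may be one of $x_{i+1},x_{i+2},\ldots$ lying between the end of the prefix and $z$. Concretely, let $v=1$ be the leader, let $2,3$ be its neighbors at distance $1$, and let $z=5$ be at distance $2$ and adjacent to both $2$ and $3$; with $X_i=\{1,2\}$ the hypothesis $cid(X_i,5)=cid(X,5)$ holds, yet the shortest $5$--$1$ path through $3$ has an interior vertex that precedes $5$ in the order but is not in $X_i$. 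The lemma's conclusion survives only because the \emph{other} shortest path, through $2$, lies in $X_i$ -- which your argument does not establish. To close the gap you must use the hypothesis itself: $cid(X_i,z)=cid(X,z)$ (in the nontrivial case $cid(X,z)\neq z$) means that $z$'s component in $G[X_i\cup\{z\}]$ contains the leader $l$ of $C$, hence $z$ is adjacent to $X_i\cap C$, which is a BFS prefix of $C$ rooted at $l$ and the unique component of $G[X_i]$ inside $C$. From this one shows that some neighbor $w$ of $z$ in $X_i\cap C$ satisfies $d_C(l,w)=d_l(X,z)-1$: if a neighbor of $z$ at distance $d_l(X,z)-1$ (its BFS parent) is in the prefix, take it; otherwise the prefix contains no vertex of $C$ at distance $\ge d_l(X,z)$, so every prefix neighbor of $z$ is at distance exactly $d_l(X,z)-1$. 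Since the BFS prefix preserves distances from $l$ for its own members (each member at distance $d$ has a neighbor at distance $d-1$ preceding it, hence in the prefix), the path from $l$ to $w$ inside $X_i\cap C$ extended by the edge $\{w,z\}$ gives $d_l(X_i,z)\le d_l(X,z)$; the reverse inequality is, as you say, immediate from $C'\subseteq C$. This is in substance what the paper's proof does when it asserts that a shortest path from the leader to $z$ lies in the prefix of the canonical-BFS order of $C_z$ contained in $X_i$; the ``precedes $z$ hence in $X_i$'' shortcut is only sound for $z=x_{i+1}$.
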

\begin{proof}
First, the leader of each connected component of $X_i$ is the same as the leader of the corresponding connected component of $X$ (since the leader is always the first element of the connected component in a solution order, and prefixes of components are connected as they are in a BFS order).

Moreover, an element $z$ in $X_i^+\cap X$ is either directly connected to a connected component of $X_i$, in which case it has the same leader in $X_i$ and in $X$ by what stated above, or it belongs to its own connected component in $G[X_i \cup \{z\}]$, in which case $z$ is its own leader in $G[X_i \cup \{z\}]$, meaning $cid(X_i, z) = z$. Since by definition $cid(X, z) \le z$, it follows that $cid(X, z) \le cid(X_i, z)$, proving the first statement.

We now prove that $cid(X_i, x_{i+1}) = cid(X, x_{i+1})$: Either $x_{i+1}$ is directly connected to the last connected component of $X_i$ (in which case we already proved the equality) or it isn't, in which case $cid(X_i, x_{i+1}) = x_{i+1}$. However, in this case $x_{i+1}$ must be its own leader by definition of the order, so it follows that $cid(X, x_{i+1}) = x_{i+1}$, proving the second statement.


Finally, consider $z\in X_i^+ \cap X$ such that $cid(X_i,z) = cid(X,z)$. 

If $cid(X,z) = z$ then $d_l(X_i,z) = d_l(X,z) = 0$; otherwise, let $x_l$ be the leader of $z$ in $X_i\cup\{z\}$: $z$ is in the same connected component $C_z$ as $x_l$ in $X$, and $X_i$ contains a prefix of the canonical-BFS order of $C_z$; by the properties of the canonical-BFS order, the shortest path from $x_l$ to $z$ is in this prefix, implying the third statement.
\end{proof}

We can now observe how this order for hereditary properties also satisfies the properties of Definition~\ref{def:prefix-closed-order}

\begin{itemize}
    \item[\textbf{(first)}] By definition $s_1$ is the first element.
    \item[\textbf{(prefix)}] As this order is defined for hereditary properties, it follows that any subset (hence every prefix) is also a solution.
    \item[\textbf{(greedy)}] We proved in Lemma~\ref{lem:samecid} that the tuple associated with each element of $X_i^+ \cap X$ with respect to $X_i$ is either the same or lexicographically greater than the tuple with respect to $X$. As the tuple for $x_{i+1}$ is the same, and since $x_{i+1}$ is the minimum of $X_i^+ \cap X$ with respect to the order in $X$, it follows that it's also the minimum of $X_i^+ \cap X$ with respect to the order in $X_i$.
\end{itemize}

We remark that it is possible to generalize this definition using different functions for $d(\cdot)$ and $d_l(\cdot)$, as long as monotone behaviour can be guaranteed, i.e., $d(X_i,x)$ (resp. $d_l(X_i,x)$) is less than or equal to $d(X,x)$ (resp. $d_l(X,x)$) when $X_i$ is a prefix of $X$.

\section{Polynomial space algorithms}\label{sec:pspace-algs}

In this section we apply the technique defined in Section~\ref{sec:pspace-expl}, and give polynomial-space-polynomial-delay proximity search algorithms, proving the bounds given in Theorem~\ref{thm:pspace}.

For the problems already solved in exponential space in the previous sections, we remark that it is simply necessary to define their canonical order as a canonical-BFS, then apply Theorem~\ref{thm:prox-pspace-bound}.

\subsection{Maximal Bipartite Subgraphs}

Looking at the canonical orders defined for Maximal Connected Induced Bipartite Subgraphs (Definition~\ref{def:bip-con-canon}) and Maximal Induced Bipartite Subgraphs (Definition~\ref{def:bip-non-canon}), we can see that their definitions match exactly those of canonical-BFS for connected-hereditary and hereditary properties (respectively, Definition~\ref{def:canonBFSconn} and Definition~\ref{def:canonBFSnon}).
We can thus immediately apply the polynomial space variant of the algorithm, and we proceed to compute its complexity.

The cost $\ocost$ for computing the canonical order will be $O(m)$ in all cases, as it corresponds to performing a BFS, while $\ccost$ corresponds to adding edges in a BFS order, which will take $O(m)$ on the connected version, but $O(m+n\iack(n))$ on the non-connected one due to the need to dynamically maintain the connected components. 
The neighboring function for both cases produces a constant number of neighboring solutions, meaning $\restrbound = O(1)$ and $\restrtime = O(\ccost)$.
At the same time, all operations require no more than $O(m)$ space. Applying Theorem~\ref{thm:prox-pspace-bound}, we obtain:

\begin{theorem}
Maximal Connected Induced Bipartite Subgraphs and Maximal Induced Bipartite Subgraphs of a graph $G$ can be enumerated via BFS-based canonical reconstruction (Algorithm~\ref{alg:pspace}) in $O(m)$ space and, respectively, $O(qnm) = O(n^2m)$ and $O(qn(m+n\iack(n))) = O(n^2(m+n\iack(n))) $ time delay.
\end{theorem}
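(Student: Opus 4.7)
The plan is to apply Theorem~\ref{thm:prox-pspace-bound} to both variants of the bipartite subgraph problem, once the preconditions of that theorem have been checked, and then to substitute the appropriate values for $q$, $\ocost$, $\ccost$, $\restrbound$, and $\restrtime$ into its delay bound.

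First, I would verify that both problems fit the framework of Section~\ref{sec:pcombination}. Induced bipartite subgraphs form a hereditary set system and connected induced bipartite subgraphs form a connected-hereditary one, both of which are commutable as discussed in Section~\ref{sec:framework}. The universe is $V(G)$, and $Z = V(G)$ since every singleton is trivially a bipartite subgraph; consequently $\seed(S)$ is simply the smallest-id vertex of $S$. I would then check that the canonical orders from Definition~\ref{def:bip-con-canon} and Definition~\ref{def:bip-non-canon} literally coincide with the canonical-BFS orders of Definition~\ref{def:canonBFSconn} and Definition~\ref{def:canonBFSnon} once we set $v = \seed(S)$: in the connected case, both orders perform a BFS from the smallest-id vertex with ties broken by vertex id; in the non-connected case, both order components by the id of their smallest-id vertex and perform canonical-BFS within each component, placing the component containing $\seed(S)$ first. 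Since Section~\ref{sec:bfsproximity} proves that canonical-BFS orders are prefix-closed, together with the canonical reconstructions already validated in Lemma~\ref{lem:bip-con-scap} and Theorem~\ref{thm:nbip}, all preconditions of Theorem~\ref{thm:prox-pspace-bound} are met.

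Second, I would estimate the five parameters. The solution size is at most $n$, so $q = O(n)$, and $|\univ| = O(n)$. Computing the canonical-BFS order of $S \cup S^+$ amounts to a BFS traversal plus identifying components, so $\ocost = O(m)$. By Lemma~\ref{lem:bip:comp}, $\ccost = O(m)$ for the connected variant and $\ccost = O(m + n\iack(n))$ for the non-connected one. The neighboring functions of Definition~\ref{def:neigh-bip-con} and Definition~\ref{def:neigh-bip-non} always produce two solutions, each obtained by a single $\comp(\cdot)$ call preceded by $O(m)$-time set operations; hence $\restrbound = O(1)$ and $\restrtime = O(\ccost)$. Substituting into the bound $O(|\univ|\restrtime + |\univ|\restrbound q(\restrtime + \restrbound\ocost + \ccost))$ collapses to $O(n \cdot m + n \cdot n \cdot m) = O(n^2 m)$ in the connected case, and to $O(n^2(m + n\iack(n)))$ in the non-connected case, matching the claim. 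For space, every routine (BFS, union-find, marking arrays, storage of a constant number of sets over $\univ$) uses $O(m)$ working space, and because Algorithm~\ref{alg:pspace} is converted into a stateless iterative procedure as in~\cite{conte2019framework} the recursion stack contributes no additional asymptotic space, yielding the stated $O(m)$ bound.

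The main obstacle I anticipate is the faithful identification of the two definitions of canonical order: I need to confirm that the BFS root used in Section~\ref{sec:bip} is exactly $\seed(S)$ (which follows because $Z = V(G)$ here) and that the tie-breaking rule on equidistant vertices (by id) agrees with the lexicographic tuple order of Definition~\ref{def:canonBFSconn}. Once this identification is established, the rest is a mechanical substitution into Theorem~\ref{thm:prox-pspace-bound}.
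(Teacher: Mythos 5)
Your proposal is correct and follows essentially the same route as the paper's proof: identify the canonical orders of Definitions~\ref{def:bip-con-canon} and~\ref{def:bip-non-canon} with the canonical-BFS orders of Definitions~\ref{def:canonBFSconn} and~\ref{def:canonBFSnon}, set $\ocost = O(m)$, $\ccost = O(m)$ (resp.\ $O(m+n\iack(n))$), $\restrbound = O(1)$, $\restrtime = O(\ccost)$, and plug these into Theorem~\ref{thm:prox-pspace-bound}. Your additional checks (that the BFS root is $\seed(S)$ since $Z=V(G)$, and that the tie-breaking rules agree) are sound details the paper leaves implicit.
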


\subsection{Maximal Induced Trees and Forests}\label{sec:indtrees}

As defined above, a forest is an acyclic undirected graph, and a connected forest is called a tree.
These are a special cases of $k$-degenerate subgraphs: $1$-degenerate subgraphs are precisely forests, and connected $1$-degenerate subgraphs are trees. However, it is worth consider these problems separately, since we can obtain algorithms with lower delay and polynomial space.

It should be observed that listing Maximal Induced Forests corresponds to listing minimal feedback vertex sets in undirected graph: if $S\subset V$ is a Maximal Induced Forest, $V\setminus S$ is a minimal feedback vertex set. A polynomial-delay solution for the enumeration of feedback vertex sets (and thus Maximal Induced Forests) has been proposed in~\cite{schwikowski2002enumerating}. This result, however, requires exponential space, and does not extend to Maximal Induced Trees.

Furthermore, while the algorithms proposed could be extended to enumerate maximal edge-induced trees and forests, we do not consider it: these correspond to just the spanning trees of a graph, which are already known to be enumerable in polynomial delay and even constant amortized time~\cite{shioura1997optimal}.

\paragraph{Canonical order and neighboring function}

Let $S$ be a maximal induced tree. 

We define its canonical as a canonical-BFS order (Definition~\ref{def:canonBFSconn}), i.e., the sequence $s_1, \ldots, s_{|S|}$ given by a BFS order of $G[S]$ rooted in the vertex $s_1$ of smallest id.


We then define the proximity by canonical reconstruction (Section~\ref{sec:reconstruction}),  and we can immediately observe that this order meets the requirements of  Section~\ref{sec:pspace}. Next, we focus on obtaining a suitable neighboring function.

\begin{definition}[Neighboring function for Maximal Induced Trees]~

We define $\neighs(S) = \bigcup\limits_{v\in V(G)\setminus S} \neighs(S,v)$.

Then, $\neighs(S,v)$ is defined as: 

\noindent$\neighs(S,v) = \{\comp(\ccomp_v(S\setminus N(v)\cup \{w,v\}) ) : w\in N(v) \cap S\}$
\end{definition}

The key property here is that each vertex $s_i\in S$ has a single neighbor preceding it in the canonical order, corresponding to its parent in the BFS.

Given two solutions $S,T$, let $t_1,\ldots,t_{|T|}$ be the canonical order of $T$, and $t_i$ be the canonical extender for $S,T$, i.e., the vertex for which $S\scap T = \{t_1,\ldots,t_{i-1}\}\subseteq S$ and $t_i\not\in S$. Furthermore, let $t_j$ be the parent of $t_i$ in the canonical BFS-order of $T$, observing that $t_j\in \{t_1,\ldots,t_{i-1}\}$.

To find a solution $S' \supseteq \{t_1,\ldots,t_{i}\}$, we can simply add $t_i$ to $S$, then remove all neighbors of $t_i$ \textit{except} $t_j$ so that we have again an acyclic subgraph, and finally discard every vertex not in the same connected component as $t_i$ (which will include $\{t_1,\ldots,t_{i}\}$). As we do not know which vertex is $t_j$, we of course try all $O(|N(t_i)|)$ possibilities, thus a suitable $S'$ is always found.

\paragraph{Complexity}

Firstly, we can use the $\neighs(S)$ function to build a proximity search algorithm whose delay is the cost of $\neighs(\cdot)$, and whose space is $O(\nsol \cdot n)$ (where $\nsol$ is the number of solutions). 

We show the cost of $\neighs(\cdot)$ -and the delay of the algorithm- to be $O(m^2)$ time: Observe that the cost $\ccost$ a $\comp(X)$ call is $O(m)$ time. We first compute the set of vertices adjacent to $X$, $P = \cup_{x\in X}N(x)$; for each vertex $v$, we simply need to check that it has exactly one neighbor in $X$, in $O(|N(v)|)$ time, and discard it otherwise. Whenever we add $v$ vertex to $X$, we add is neighbors to $P$ again in $O(|N(v)|)$ time. The total cost is $O(\sum_{v\in V(G)}|N(v)|) = O(m)$.

Now consider $\neighs(S,v)$: for each $w\in N(v)$, we must compute $\ccomp_v(S\setminus N(v)\cup \{w,v\})$, which takes $O(m)$, then apply $\comp(\cdot)$ which has the same complexity. The cost is thus $O(|N(v)|\cdot m)$. In turn, this means the cost of $\neighs(S)$ is $O(\sum_{v\in V(G)}|N(v)|\cdot m) = O(m^2)$.

Furthermore, as we are satisfying all conditions of Section~\ref{sec:pspace} (the order defined is a canonical BFS-order and the problem is connected-hereditary), we apply Theorem~\ref{thm:prox-pspace-bound} to obtain a BFS-based canonical reconstruction algorithm, with higher delay but polynomial space.

We observe that no component of the algorithm will require more than $O(m)$ space, and their time complexity is as follows: $\univ = O(n)$, $\restrtime = O(m\Delta)$, $\restrbound = O(\Delta)$ (but as observed above, $|\univ|\cdot \restrtime$ can be better bounded by $O(m^2)$, and $|\univ|\cdot \restrbound$ can be bounded by $O(m)$), $q= O(n)$, $\ocost = O(m)$ and $\ccost = O(m)$. 
The bound of Theorem~\ref{thm:prox-pspace-bound} thus resolves to 
$O(m^2 + m q(m\Delta + \Delta m + m)) = O(m q(m\Delta)) = O(m^2n^2)$ time.
We can thus conclude the following:

\begin{theorem}
The Maximal Induced Trees of a graph $G$ can be enumerated in $O(m^2)$-time delay using $O(\nsol n)$ space, or alternatively in $O(m^2n^2)$-time delay and $O(m)$ space.
\end{theorem}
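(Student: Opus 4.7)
The plan is to verify that maximal induced trees fit the canonical reconstruction framework of Definition~\ref{def:crecon}, and additionally satisfy the requirements of Theorem~\ref{thm:prox-pspace-bound} so that the polynomial-space variant goes through. First, I would adopt as canonical order of a maximal induced tree $S$ the canonical-BFS order (Definition~\ref{def:canonBFSconn}) rooted at the smallest-id vertex of $S$. Since any BFS prefix is connected and any subgraph of a tree is a forest, each prefix is a (possibly non-maximal) solution, verifying condition~(1) of Definition~\ref{def:crecon}. For the removables, I would take one set per $w \in N(v)\cap S$, consisting of $N(v)\setminus\{w\}$ together with the vertices that become disconnected from $v$, yielding the neighboring function $\neighs(S,v) = \{\comp(\ccomp_v((S \setminus N(v)) \cup \{v,w\})) : w \in N(v)\cap S\}$.

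The central correctness step is showing that for any target $T$ with canonical extender $t_i = v$, some choice of $w$ in the union yields a solution that extends the proximity. The key observation is that in the BFS order of $T$, the vertex $t_i$ has exactly one neighbor among its predecessors $\{t_1,\ldots,t_{i-1}\} = S \scap T$, namely its BFS parent $t_j$. Choosing $w = t_j$ ensures that (i)~$t_j$ is retained, (ii)~no other vertex of $S\scap T$ lies in $N(v)$ and so none is removed, and (iii)~the connected component of $v$ contains all of $S\scap T$ because $(S\scap T)\cup\{v\}$ is connected in $G$ via $t_j$. Combined with a polynomial-time $\comp(\cdot)$ (trees are connected-hereditary), this fulfills Definition~\ref{def:crecon}, so Theorem~\ref{thm:canonical-reconstructio} already yields a polynomial-delay algorithm with $O(\nsol n)$ space.

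For the exponential-space delay bound, I would implement $\comp(X)$ in $O(m)$ time by maintaining a candidate pool of vertices adjacent to $X$ that have at most one neighbor in $X$, paying $O(|N(u)|)$ when a new $u$ is added. Each call $\neighs(S,v)$ then fires $|N(v)|$ such completions, contributing $O(|N(v)|\,m)$, and summing over $v$ gives the stated $O(m^2)$ delay. For the polynomial-space version, I would observe that trees form a connected-hereditary (hence commutable) set system, and that the canonical-BFS order used above is exactly the prefix-closed order of Section~\ref{sec:bfsproximity}. Thus Theorem~\ref{thm:prox-pspace-bound} applies verbatim; plugging $|\univ|=n$, $\ccost=\ocost=O(m)$, $\restrbound=O(\Delta)$, $\restrtime=O(m\Delta)$, $q=O(n)$ into $O(|\univ|\restrtime + |\univ|\restrbound q(\restrtime + \restrbound\ocost + \ccost))$ gives $O(m^2 n^2)$ delay, and every component runs in $O(m)$ working space.

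I expect the most delicate step to be the correctness argument for the neighboring function, specifically confirming that when $w = t_j$ is the BFS parent, removing $N(v)\setminus\{w\}$ and then restricting to $\ccomp_v(\cdot)$ preserves all of $S\scap T$: this hinges crucially on the fact that a tree admits a \emph{unique} BFS parent per vertex, so no other predecessor of $t_i$ in the canonical order of $T$ is adjacent to $t_i$ and risks being removed. Everything else is a matter of bookkeeping, tying together the mechanisms already established in Sections~\ref{sec:reconstruction} and~\ref{sec:pspace}.
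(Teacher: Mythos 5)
Your proposal follows the paper's own proof essentially step for step: the same canonical-BFS order, the same neighboring function $\neighs(S,v)=\{\comp(\ccomp_v((S\setminus N(v))\cup\{v,w\})) : w\in N(v)\cap S\}$, the same unique-predecessor argument (valid because the tree is induced, so the BFS parent is the only $G$-neighbor of the extender among $S\scap T$), the same $O(m)$ implementation of $\comp(\cdot)$, and the same appeal to Theorem~\ref{thm:prox-pspace-bound}. The only small refinement needed is in the final plug-in: with the per-vertex values $\restrbound=O(\Delta)$ and $\restrtime=O(m\Delta)$ the formula gives only $O(mn^2\Delta^2)$, so to reach $O(m^2n^2)$ you must use the amortized bounds $|\univ|\cdot\restrbound=O(m)$ and $|\univ|\cdot\restrtime=O(m^2)$ (i.e., $\sum_v |N(v)|=O(m)$), exactly the summation you already invoked for the $O(m^2)$ exponential-space delay and which the paper states explicitly.
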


\subsection{Maximal Induced Forests}

As showed in Section~\ref{sec:bfsproximity}, a BFS-based canonical reconstruction algorithm for the non-connected case immediately follows from the connected one.

For completeness, we show how the algorithm for Maximal Induced Forests is obtained:


The canonical order is obtained by Definition~\ref{def:canonBFSnon}, i.e., a canonical-BFS order of each connected component, where different components are then sorted by their vertex of smallest id.

The neighboring function is essentially obtained from the connected case by removing the use of the $\ccomp(\cdot)$ function (as we do not require solutions to be connected).

\begin{definition}[Neighboring function for Maximal Induced Forests]~

We define $\neighs(S) = \bigcup\limits_{v\in V(G)\setminus S} \neighs(S,v)$.

Then, $\neighs(S,v)$ is defined as: 

\noindent$\neighs(S,v) = \{\comp(S\setminus N(v)\cup w) : w\in N(v) \cap S\}$
\end{definition}

The complexity of the components of the algorithm is also inherently the same, with the only difference for the cost $\ccost$ of the $\comp(\cdot)$ function: when we add a vertex, we need to make sure that it does not have two neighbors in the same connected component, and update the connected components as we add vertices. The cost of $\comp(\cdot)$ will thus be $O(m + n\iack(n))$ time, obtained by the same logic as for Maximal Bipartite Subgraphs (see Section~\ref{sec:binon}), while the rest of the operations are exactly as in the connected case, thus bear the same cost.

We can conclude that $\neighs(S,v)$ takes $O(|N(v)|\cdot(m + n\iack(n))$ time, while $\neighs(S)$ takes $O(\sum_{v\in V(G)}|N(v)|\cdot (m + n\iack(n))) = O(m (m + n\iack(n)))$.

Again, we can obtain an exponential-space algorithm using canonical reconstruction proximity search whose delay is the cost of $\neighs(S)$, and a polynomial-space algorithm using BFS-based canonical reconstruction, whose delay is given by Theorem~\ref{thm:prox-pspace-bound}.

For the latter, the costs are obtained adapting the connected version with the new cost of $\comp(\cdot)$: $\univ = O(n)$, $\restrtime = O((m + n\iack(n))\Delta)$, $\restrbound = O(\Delta)$ (but $|\univ|\cdot \restrtime$ can be better bounded by $O(m(m + n\iack(n)))$, and $|\univ|\cdot \restrbound$ can be bounded by $O(m)$), $q= O(n)$, $\ocost = O(m)$ and $\ccost = O(m + n\iack(n))$. The bound of Theorem~\ref{thm:prox-pspace-bound} thus resolves to $O(m(m + n\iack(n)) +  m q ( (m + n\iack(n)) + \Delta m  + (m + n\iack(n)) ) ) = O( m q ( n\iack(n) + \Delta m ) )$ time, which we can again upper bound by $O(m^2n^2)$ time. We can thus conclude the following:

\begin{theorem}
The Maximal Induced Forests of a graph $G$ can be enumerated 
in $O(m^2n^2)$-time delay and $O(m)$ space.
\end{theorem}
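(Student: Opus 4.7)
The plan is to follow the polynomial-space framework of Section~\ref{sec:pspace-expl}, mirroring the Maximal Induced Trees proof of Section~\ref{sec:indtrees} and simply dropping the connectivity requirement throughout. Forests form a hereditary (hence commutable and strongly accessible) set system, so the scaffolding of Theorem~\ref{thm:prox-pspace-bound} applies directly; it remains to furnish a prefix-closed order, a neighboring function, and a complexity analysis.

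I would first assign each maximal induced forest $S$ the canonical-BFS order for hereditary properties of Definition~\ref{def:canonBFSnon}: list the connected components of $G[S]$ by increasing smallest-id vertex (the leader), and within each component use the BFS order rooted at that leader. Section~\ref{sec:bfsproximity} already shows that this family is prefix-closed in the sense of Definition~\ref{def:prefix-closed-order}, with the \textbf{(prefix)} condition being immediate from hereditariness.

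Next I would set $\neighs(S,v) = \{\comp(S \setminus N(v) \cup \{v,w\}) : w \in N(v) \cap S\}$ and verify the conditions of Definition~\ref{def:crecon}. Polynomial computability is routine. For the key proximity condition, given $S$ and a target $T$, let $t_i$ be the canonical extender and write $S \scap T = \{t_1,\ldots,t_{i-1}\}$. Because $T$ is a forest, $t_i$ has at most one neighbor in $\{t_1,\ldots,t_{i-1}\}$: either its unique BFS-parent $t_j$ within its component in $T$, or none when $t_i$ is that component's leader. In the first case, choosing $w = t_j$ preserves $S \scap T$ while removing all other neighbors of $t_i$; in the second case $(S \scap T) \cap N(t_i) = \emptyset$, so any $w \in N(t_i) \cap S$ works (and at least one exists, since maximality of $S$ forces $|N(t_i) \cap S| \geq 2$ — otherwise $t_i$ could be safely added to $S$). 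In both cases the set $S \setminus N(t_i) \cup \{t_i, w\}$ is a forest because $t_i$ becomes a leaf and the rest is a subset of $S$, and it contains $(S \scap T) \cup \{t_i\}$, so $\comp(\cdot)$ produces a neighbor with strictly larger proximity to $T$.

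Finally I would compute the parameters for Theorem~\ref{thm:prox-pspace-bound}: $\ccost = O(m + n\iack(n))$ via union-find maintenance of the components (as in Lemma~\ref{lem:bip:comp}), $\ocost = O(m)$, $|\univ| = n$, $q = O(n)$, $\restrbound = O(\Delta)$, and $\restrtime = O(\Delta \ccost)$; aggregating via $\sum_v |N(v)| = O(m)$ sharpens $|\univ|\restrtime$ to $O(m \ccost)$ and $|\univ|\restrbound$ to $O(m)$, yielding a delay of $O(m q(\ccost + \Delta m)) = O(m^2 n^2)$. Stateless iteration~\cite{conte2019framework} then keeps the working space at $O(m)$. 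The main obstacle I anticipate is purely the case-split on whether $t_i$ is a component-leader in $T$: the connected-tree argument of Section~\ref{sec:indtrees} always has a BFS-parent to supply $w$, and the extension to forests needs the observation that maximality of $S$ guarantees $|N(t_i) \cap S| \ge 2$ so that the iteration over $w$ is always non-empty even in the leader subcase.
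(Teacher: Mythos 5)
Your proposal is correct and follows essentially the same route as the paper: the hereditary canonical-BFS order of Definition~\ref{def:canonBFSnon}, the neighboring function obtained from the trees case by dropping $\ccomp_v(\cdot)$ (your version rightly includes $v$ itself, which the paper's displayed formula omits, evidently a typo), and the same parameter bookkeeping in Theorem~\ref{thm:prox-pspace-bound} with the aggregated $O(m\ccost)$ and $O(m)$ bounds. Your explicit case split on whether the canonical extender is a component leader of $T$, together with the observation that maximality forces $|N(t_i)\cap S|\ge 2$, merely spells out what the paper compresses into ``immediately follows from the connected case.''
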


\section{Conclusions}\label{sec:concl}

We presented proximity search, a technique for the design of efficient enumeration algorithms, based on defining and traversing a solution graph with bounded out-degree.
We showed several application cases, considering problems that did not allow efficient algorithms by known methods, and showing that these allow polynomial delay algorithms by proximity search.

We have provided a guideline, called \textit{canonical reconstruction}, aimed at factorizing the most effective ways to apply our technique, and facilitating the design of efficient algorithms.

We have further shown a technique that, under suitable conditions, allows us to design proximity search algorithms that require only polynomial space. The results are polynomial-delay and polynomial-space algorithms for several problems whose input-restricted problem cannot be solved in polynomial time, including non-hereditary ones.

This paper ``breaks the barrier'' of the input-restricted problem, showing that its complexity does not imply lower bounds in terms of time or space, nor even a trade-off between the two. This closes questions left open since~\cite{Cohen20081147}, furthering our understanding on the complexity of enumeration in set systems. 

At the same time, this reinvigorates the open question of which listing problems allow efficient algorithms and which do not, and to define a more complete theory of enumeration complexity.
On top of being a useful tool to design efficient algorithms for specific problems, we hope that this technique will be able to help us gain more insight into this general question.

\section*{acknowledgements} 
We wish to thank the anonymous reviewers for their thorough analysis of the paper, which helped us improve both its content and presentation. 
This work was partially supported by JST CREST, grant number JPMJCR1401, Japan and the Italian Ministry for Education and Research, under PRIN Project n. 20174LF3T8 AHeAD.

\bibliographystyle{plain}
\bibliography{arxiv}

\begin{thebibliography}{10}

\bibitem{avis1996reverse}
David Avis and Komei Fukuda.
\newblock Reverse search for enumeration.
\newblock {\em Discrete Applied Mathematics}, 65(1-3):21 -- 46, 1996.

\bibitem{DBLP:journals/corr/cs-DS-0310049}
Vladimir Batagelj and Matjaz Zaversnik.
\newblock An o(m) algorithm for cores decomposition of networks.
\newblock {\em CoRR}, cs.DS/0310049, 2003.

\bibitem{blair1993introduction}
Jean R.~S. Blair and Barry Peyton.
\newblock An introduction to chordal graphs and clique trees.
\newblock In {\em Graph theory and sparse matrix computation}, pages 1--29.
  Springer, 1993.

\bibitem{bliznets2016largest}
Ivan Bliznets, Fedor~V Fomin, Micha{\l} Pilipczuk, and Yngve Villanger.
\newblock Largest chordal and interval subgraphs faster than $2^{n}$.
\newblock {\em Algorithmica}, 76(2):569--594, 2016.

\bibitem{booth1975linear}
Kellogg~S Booth and George~S Lueker.
\newblock Linear algorithms to recognize interval graphs and test for the
  consecutive ones property.
\newblock In {\em Proceedings of the seventh annual ACM symposium on Theory of
  computing}, pages 255--265, 1975.

\bibitem{brosse2020efficient}
Caroline Brosse, Aurélie Lagoutte, Vincent Limouzy, Arnaud Mary, and Lucas
  Pastor.
\newblock Efficient enumeration of maximal split subgraphs and sub-cographs and
  related classes, 2020.

\bibitem{Cao:arXiv:2020}
Yixin Cao.
\newblock Enumerating maximal induced subgraphs.
\newblock {\em CoRR}, abs/2004.09885, 2020.

\bibitem{CARMELI2020}
Nofar Carmeli, Batya Kenig, Benny Kimelfeld, and Markus Kröll.
\newblock Efficiently enumerating minimal triangulations.
\newblock {\em Discrete Applied Mathematics}, 2020.

\bibitem{Chan1996}
T.~M. Chan.
\newblock Optimal output-sensitive convex hull algorithms in two and three
  dimensions.
\newblock {\em Discrete {\&} Computational Geometry}, 16(4):361--368, Apr 1996.

\bibitem{Chandran2001}
L.~Sunil Chandran.
\newblock A linear time algorithm for enumerating all the minimum and minimal
  separators of a chordal graph.
\newblock In {\em COCOON 2001}, pages 308--317, Berlin, Heidelberg, 2001.
  Springer Berlin Heidelberg.

\bibitem{Cohen20081147}
Sara Cohen, Benny Kimelfeld, and Yehoshua Sagiv.
\newblock Generating all maximal induced subgraphs for hereditary and
  connected-hereditary graph properties.
\newblock {\em Journal of Computer and System Sciences}, 74(7):1147 -- 1159,
  2008.

\bibitem{DBLP:conf/icalp/ConteGMV16}
Alessio Conte, Roberto Grossi, Andrea Marino, and Luca Versari.
\newblock Sublinear-space bounded-delay enumeration for massive network
  analytics: Maximal cliques.
\newblock In {\em {ICALP} 2016}, pages 148:1--148:15, 2016.

\bibitem{conte2019framework}
Alessio Conte, Roberto Grossi, Andrea Marino, and Luca Versari.
\newblock Listing maximal subgraphs satisfying strongly accessible properties.
\newblock {\em {SIAM} J. Discrete Math.}, 33(2):587--613, 2019.

\bibitem{DBLP:conf/cocoon/ConteKOUW17}
Alessio Conte, Mamadou~Moustapha Kant{\'{e}}, Yota Otachi, Takeaki Uno, and
  Kunihiro Wasa.
\newblock Efficient enumeration of maximal k-degenerate subgraphs in a chordal
  graph.
\newblock In {\em {COCOON} 2017}, pages 150--161, 2017.

\bibitem{conte2019proximityArxiv}
Alessio Conte, Andrea Marino, Roberto Grossi, Takeaki Uno, and Luca Versari.
\newblock Proximity search for maximal subgraph enumeration.
\newblock {\em arXiv preprint arXiv:1912.13446}, 2019.

\bibitem{Conte2019proximity}
Alessio Conte and Takeaki Uno.
\newblock New polynomial delay bounds for maximal subgraph enumeration by
  proximity search.
\newblock In {\em Proceedings of the 51st Annual ACM SIGACT Symposium on Theory
  of Computing}, STOC 2019, pages 1179--1190, New York, NY, USA, 2019. ACM.

\bibitem{Deits2015}
Robin Deits and Russ Tedrake.
\newblock {\em Computing Large Convex Regions of Obstacle-Free Space Through
  Semidefinite Programming}, pages 109--124.
\newblock Springer International Publishing, Cham, 2015.

\bibitem{Diestel2005}
Reinhard Diestel.
\newblock {\em {Graph Theory (Graduate Texts in Mathematics)}}.
\newblock Springer, 2005.

\bibitem{elbassioni2009output}
Khaled Elbassioni, Kazuhisa Makino, and Imran Rauf.
\newblock Output-sensitive algorithms for enumerating minimal transversals for
  some geometric hypergraphs.
\newblock In {\em European Symposium on Algorithms}, pages 143--154. Springer,
  2009.

\bibitem{EppsteinLS13}
David Eppstein, Maarten L{\"{o}}ffler, and Darren Strash.
\newblock Listing all maximal cliques in large sparse real-world graphs.
\newblock {\em {ACM} Journal of Experimental Algorithmics}, 18, 2013.

\bibitem{Fomin:2008:CBV:1435375.1435384}
Fedor~V. Fomin, Fabrizio Grandoni, Artem~V. Pyatkin, and Alexey~A. Stepanov.
\newblock Combinatorial bounds via measure and conquer: Bounding minimal
  dominating sets and applications.
\newblock {\em ACM Transactions on Algorithms}, 5(1):9:1--9:17, December 2008.

\bibitem{fulkerson1965incidence}
Delbert Fulkerson and Oliver Gross.
\newblock Incidence matrices and interval graphs.
\newblock {\em Pacific journal of mathematics}, 15(3):835--855, 1965.

\bibitem{GELY20091447}
Alain Gély, Lhouari Nourine, and Bachir Sadi.
\newblock Enumeration aspects of maximal cliques and bicliques.
\newblock {\em Discrete Applied Mathematics}, 157(7):1447 -- 1459, 2009.

\bibitem{Golovach2018}
Petr~A. Golovach, Pinar Heggernes, Mamadou~Moustapha Kant{\'e}, Dieter Kratsch,
  Sigve~H. S{\ae}ther, and Yngve Villanger.
\newblock Output-polynomial enumeration on graphs of bounded (local) linear
  mim-width.
\newblock {\em Algorithmica}, 80(2):714--741, Feb 2018.

\bibitem{golovach2015incremental}
Petr~A Golovach, Pinar Heggernes, Dieter Kratsch, and Yngve Villanger.
\newblock An incremental polynomial time algorithm to enumerate all minimal
  edge dominating sets.
\newblock {\em Algorithmica}, 72(3):836--859, 2015.

\bibitem{HEGGERNES20091}
Pinar Heggernes and Charis Papadopoulos.
\newblock Single-edge monotonic sequences of graphs and linear-time algorithms
  for minimal completions and deletions.
\newblock {\em Theoretical Computer Science}, 410(1):1 -- 15, 2009.

\bibitem{JOHNSON1988119}
David~S. Johnson, Mihalis Yannakakis, and Christos~H. Papadimitriou.
\newblock On generating all maximal independent sets.
\newblock {\em Information Processing Letters}, 27(3):119 -- 123, 1988.

\bibitem{kante2014enumeration}
Mamadou~Moustapha Kant{\'e}, Vincent Limouzy, Arnaud Mary, and Lhouari Nourine.
\newblock On the enumeration of minimal dominating sets and related notions.
\newblock {\em SIAM Journal on Discrete Mathematics}, 28(4):1916--1929, 2014.

\bibitem{knuth2011artTrie}
Donald~E Knuth.
\newblock {\em The art of computer programming, Volume 3: Sorting and
  Searching}.
\newblock Pearson Education India, 2011.

\bibitem{knuth2011artBDD}
Donald~E Knuth.
\newblock {\em The art of computer programming, volume 4A: combinatorial
  algorithms, part 1}.
\newblock Pearson Education India, 2011.

\bibitem{koch1996algorithm}
Ina Koch, Thomas Lengauer, and Egon Wanke.
\newblock An algorithm for finding maximal common subtopologies in a set of
  protein structures.
\newblock {\em Journal of Computational Biology}, 3(2):289--306, 1996.

\bibitem{kratsch2014compression}
Stefan Kratsch and Magnus Wahlstr{\"o}m.
\newblock Compression via matroids: a randomized polynomial kernel for odd
  cycle transversal.
\newblock {\em ACM Transactions on Algorithms}, 10(4):20, 2014.

\bibitem{kurita2020efficient}
Kazuhiro Kurita and Yasuaki Kobayashi.
\newblock Efficient enumerations for minimal multicuts and multiway cuts, 2020.

\bibitem{lawler1980generating}
Eugene~L. Lawler, Jan~Karel Lenstra, and AHG Rinnooy~Kan.
\newblock Generating all maximal independent sets: {NP}-hardness and
  polynomial-time algorithms.
\newblock {\em SIAM Journal on Computing}, 9(3):558--565, 1980.

\bibitem{lick1970k}
Don~R Lick and Arthur~T White.
\newblock k-degenerate graphs.
\newblock {\em Canadian J. of Mathematics}, 22:1082--1096, 1970.

\bibitem{moon1965cliques}
John~W. Moon and Leo Moser.
\newblock On cliques in graphs.
\newblock {\em Israel journal of Mathematics}, 3(1):23--28, 1965.

\bibitem{Okamoto2005}
Yoshio Okamoto, Takeaki Uno, and Ryuhei Uehara.
\newblock Linear-time counting algorithms for independent sets in chordal
  graphs.
\newblock In Dieter Kratsch, editor, {\em WG 2005}, pages 433--444, Berlin,
  Heidelberg, 2005. Springer Berlin Heidelberg.

\bibitem{paull1959minimizing}
Marvin~C Paull and Stephen~H Unger.
\newblock Minimizing the number of states in incompletely specified sequential
  switching functions.
\newblock {\em IRE Transactions on Electronic Computers}, EC-8(3):356--367,
  1959.

\bibitem{rose1976algorithmic}
Donald~J Rose, Robert~Endre Tarjan, and George~S Lueker.
\newblock Algorithmic aspects of vertex elimination on graphs.
\newblock {\em SIAM Journal on computing}, 5(2):266--283, 1976.

\bibitem{Ruskey03combinatorialgeneration}
Frank Ruskey.
\newblock Combinatorial generation.
\newblock {\em Preliminary working draft. University of Victoria, Victoria, BC,
  Canada}, 11:20, 2003.

\bibitem{savin2017algorithm}
Sergei Savin.
\newblock An algorithm for generating convex obstacle-free regions based on
  stereographic projection.
\newblock In {\em SIBCON 2017}, pages 1--6. IEEE, 2017.

\bibitem{schwikowski2002enumerating}
Benno Schwikowski and Ewald Speckenmeyer.
\newblock On enumerating all minimal solutions of feedback problems.
\newblock {\em Discrete Applied Mathematics}, 117(1-3):253--265, 2002.

\bibitem{shioura1997optimal}
Akiyoshi Shioura, Akihisa Tamura, and Takeaki Uno.
\newblock An optimal algorithm for scanning all spanning trees of undirected
  graphs.
\newblock {\em SIAM Journal on Computing}, 26(3):678--692, 1997.

\bibitem{Tarjan75UF}
Robert~Endre Tarjan.
\newblock Efficiency of a good but not linear set union algorithm.
\newblock {\em J. ACM}, 22(2):215--225, April 1975.

\bibitem{tomita2006worst}
Etsuji Tomita, Akira Tanaka, and Haruhisa Takahashi.
\newblock The worst-case time complexity for generating all maximal cliques and
  computational experiments.
\newblock {\em Theoretical Computer Science}, 363(1):28--42, 2006.

\bibitem{tsukiyama1977new}
Shuji Tsukiyama, Mikio Ide, Hiromu Ariyoshi, and Isao Shirakawa.
\newblock A new algorithm for generating all the maximal independent sets.
\newblock {\em SIAM Journal on Computing}, 6(3):505--517, 1977.

\bibitem{Uno2003}
Takeaki Uno.
\newblock Two general methods to reduce delay and change of enumeration
  algorithms, 2003.
\newblock NII Technical Report NII-2003-004E, Tokyo, Japan.

\bibitem{wasa2018bipartite}
Kunihiro Wasa and Takeaki Uno.
\newblock Efficient enumeration of bipartite subgraphs in graphs.
\newblock In {\em COCOON 2018}, pages 454--466. Springer, 2018.

\end{thebibliography}

\appendix
\section*{APPENDIX}

\newcommand{\mypar}[1]{\smallskip\noindent\textbf{#1}.~}
\section{Maintaining the solution set in proximity search}\label{sec:sol}

For completeness, we briefly describe how to efficiently maintain the $\sol$ set with well-known data structures.
In the following, let $\univ$ be the ground set (e.g., $V(G)$ for vertex-induced graph properties, or $E(G)$ for edge-induced graph properties).
Let $\nsol = |\sol|$ be the number of solutions in $\sol$, and let $s = \max_{S\in\sol}(|S|) \le |\univ|$ be the maximum size of a solution.
Recall $s\le n$ for vertex-induced graph properties, and $s\le m$ for edge-induced graph properties;

What we aim at showing is that the time for maintaining the solution set is negligible in all cases addressed in this paper: recall that any solution output by the neighboring function is maximalized, i.e., we apply a $\comp(\cdot)$ function which adds element to it until it is maximal. If we run $\comp(\emptyset)$, we can expect to add up to $s$ elements, so its worst-case complexity must be $\Omega(s)$ time.

\mypar{Binary Decision Diagram~\cite{knuth2011artBDD}} We can see it as a binary tree where leafs are all at depth $|\univ|$, and each root-to-leaf path defines a subset of $\univ$. We will have a space usage of $O(\nsol\cdot |\univ|)$, while the cost for addition or membership test of a solution will be $O(|\univ|)$ time.

This is sufficient for the purpose of our paper as we upper bound $s$ by $n$ (or $m$, for edge-induced subgraphs) in the complexity results, however it is possible to further improve this using a Trie:

\mypar{Trie~\cite{knuth2011artTrie}}
As above, a solution is represented by a root-to-leaf path. We only have nodes corresponding to including elements, so the depth will be $O(s)$, and so the space usage $O(\nsol \cdot s)$, however a node may have $O(|\univ|)$ children. If we keep these children sorted, we can look them up by binary search and have a cost for addition and membership of $O(s\log |\univ|)$; on the other hand, we can get constant time lookup using a hash table, and a cost for addition and membership of $O(s)$ time.


\end{document}